\documentclass[11pt,a4paper]{article}
\usepackage{graphicx}
\usepackage{amsmath}
\usepackage{natbib}
\usepackage{multirow}
\usepackage{arydshln}

\usepackage{listings}
\usepackage{lstbayes}

\RequirePackage[OT1]{fontenc}
\RequirePackage{amsthm,amsmath,natbib}
\RequirePackage{hyperref}

\usepackage{stmaryrd}
\usepackage{enumitem}
\usepackage{caption}
\usepackage{subcaption}
\usepackage{amssymb}
\usepackage{bbm}
\usepackage{bm}
\usepackage{verbatim}

\usepackage{xcolor}

\usepackage{tikz}
\usepackage{tikz-qtree}
\usetikzlibrary{bayesnet}

\newcommand{\R}{\mathbb{R}}

\newcommand{\N}{\mathbb{N}}

\newcommand{\E}{\mathbb{E}}
\newcommand{\I}{\mathbb{I}}
\newcommand{\1}{\mathbbm{1}}

\newcommand{\avec}{\textbf{a}}
\newcommand{\bvec}{\textbf{b}}
\newcommand{\yvec}{\textbf{y}}
\newcommand{\amean}{a_{\cdot}}
\newcommand{\bmean}{b_{\cdot\cdot}}
\newcommand{\bimean}{{b}_{i\cdot}}
\newcommand{\ymean}{y_{\cdot\cdot\cdot}}
\newcommand{\yimean}{y_{i\cdot\cdot}}
\newcommand{\yijmean}{y_{ij\cdot}}

\newcommand{\gavec}{\boldsymbol{\gamma}}
\newcommand{\etavec}{\boldsymbol{\eta}}
\newcommand{\gamean}{\gamma_{\cdot}}
\newcommand{\etamean}{\eta_{\cdot\cdot}}
\newcommand{\etai}{\eta_{i\cdot}}

\newcommand{\ratio}[2]{r_{#1,#2 }}
\newcommand{\var}[1]{\sigma_{#1}^2}
\newcommand{\tvar}[1]{\tilde{\sigma}_{#1}^2}
\newcommand{\precision}[1]{\tau_{#1}}
\newcommand{\tprec}[1]{\tilde{\tau}_{#1}}

\newcommand{\tE}{\tilde{\E}}

\newcommand{\balpha}{{\boldsymbol{\alpha}}}

\newcommand{\bbeta}{{\boldsymbol{\beta}}}
\newcommand{\bgamma}{{\boldsymbol{\gamma}}}

\newcommand{\mbeta}{{\bar{\beta}}}

\newcommand{\tbeta}{\tilde{\beta}}
\newcommand{\tbbeta}{\tilde{\bbeta}}
\newcommand{\by}{\boldsymbol{y}}
\newcommand{\ba}{{\boldsymbol{a}}}
\newcommand{\ma}{\bar{a}}

\newtheorem{defi}{Definition}
\newtheorem{rmk}{Remark}
\newtheorem{theorem}{Theorem}

\newtheorem{ex}{Example}
\newtheorem{lemma}{Lemma}
\newtheorem{coroll}{Corollary}

\newenvironment{taggedmodel}[1]
 {\taggedmodelx}
 {\endtaggedmodelx}

\newenvironment{taggedsampler}[1]
 {\taggedsamplerx}
 {\endtaggedsamplerx}

\begin{document}
\title{Multilevel linear models, Gibbs samplers and multigrid decompositions}

\author{Giacomo Zanella\footnote{
Department of Decision Sciences, BIDSA and IGIER, Bocconi University, via Roentgen 1, 20136
Milan, Italy.
\href{mailto:giacomo.zanella@unibocconi.it}{giacomo.zanella@unibocconi.it}
}
\ and Gareth Roberts\footnote{
Department of Statistics, University of Warwick, Coventry, CV4 7AL, UK.
\href{mailto:gareth.o.roberts@warwick.ac.uk}{gareth.o.roberts@warwick.ac.uk}
}}

\maketitle

\begin{abstract}
We study the convergence properties of the Gibbs Sampler in the context of posterior distributions arising from Bayesian analysis of conditionally Gaussian hierarchical models. 
We develop a multigrid approach to derive analytic expressions for the convergence rates of the algorithm 
for various widely used model structures, including nested and crossed random effects.
Our results apply to multilevel models with an arbitrary number of layers in the hierarchy, while most previous work was limited to the two-level nested case.
The theoretical results provide explicit and easy-to-implement guidelines to optimize practical implementations of the Gibbs Sampler, such as indications on which parametrization to choose (e.g.\ centred and non-centred), which constraint to impose to guarantee statistical identifiability, and which parameters to monitor in the diagnostic process.
Simulations suggest that the results are informative also in the context of non-Gaussian distributions and more general MCMC schemes, such as gradient-based ones.
\end{abstract}


\section{Introduction}\label{sec:intro}

Markov chain Monte Carlo (MCMC) is established as the computational work\-horse of most Bayesian statistical analyses for complex models.
For hierarchical models with conditionally conjugate priors, the Gibbs sampler \citep{gelsmi90,smirob93} remains one of the most natural algorithm of choice, thanks to its simplicity of implementation and low computational cost per iteration (thanks to conjugacy and conditional independence).
Nonetheless, speed of convergence of the resulting Markov chain can be a major issue and can be highly sensitive to the model structure and the implementation details, such choice of parametrization \citep{hills1992parameterization,GelfandSahuCarlin1995} or identifiability constraints \citep{vines1996fitting,gelfand1999identifiability,xie2006measures}.
%
This work provides a contribution towards gaining a quantitative understanding of the interaction between Bayesian hierarchical structures and the behaviour of MCMC algorithms, which lies at the heart of the practical success of Bayesian statistics.

While there is some previous work in the area
\citep{RobertsSahu1997,mengvand,RobertsPapaspiliopoulosSkold2003,jones2004sufficient, PapaspiliopoulosRobertsSkold2007,mengyu},
current theoretical understanding of the interaction 
 between Bayesian hierarchical models and MCMC convergence
 is still very limited, and almost nothing is known for models of hierarchical depth greater than two.
The present paper offers a contribution 
 towards such an understanding, focusing on theory for Gaussian hierarchical models and seeking quantitative results. 
In particular, we derive analytic expressions for the convergence rates of the Gibbs Sampler for various multilevel linear models and explore the dependence of these rates on the model structure, the choice of parametrization and the introduction of identifiability constraints.
The theoretical results given in this paper extend and improve substantially on existing literature \citep{RobertsSahu1997,mengyu,bass2016comparison,GaoOwen2017}
both in terms of generality of hierarchical structure and the availability of explicit rates.
 We also show by simulations that the understanding gained from the Gaussian case can be extrapolated to more general settings.
%
%

In general, the Gibbs sampler can be elegantly described in terms of orthogonal projections \citep{.amit:1991:stochastic+convergence+distributions,.amit:1996:properties+convergence+perturbations,diaconis2010}.
While in principle this theory provides the tools to extract practical convergence information for Gibbs samplers in the context of multivariate Gaussian distributions, in order to apply it to practically used Bayesian multilevel models one needs detailed knowledge of the spectrum of non-trivial high-dimensional matrices, which has drastically limited its applicability to derive analytic results.
In this paper we combine this general framework with a novel multigrid decomposition approach that allows us to focus on low-dimensional Markov chains and derive explicit analytic results concerning Gibbs sampler rates of convergence for multilevel linear models, such as nested and crossed random effect models with an arbitrary number of layers and/or factors.

Our results have various practical implications.
First they can be readily used in the popular context of conditionally Gaussian models, 
where there exist unknown variances at various levels of the hierarchy \citep{gelman2006data}.
In that case our results describe, for example, the optimal updating strategies for the hierarchical mean structure conditional on the variances, allowing to optimize the mean parametrization on the fly (Section \ref{sec:cond_optimal}), or the computationally optimal way of imposing statistically identifiability (Sections \ref{sec:identif}), and provide
theoretically grounded indication of which 
parameters to monitor in the convergence diagnostic process (Section \ref{sec:motivating_example}).
Also, our results can be used as a building block to derive computational complexity statements about the Gibbs Sampler in the context of multilevel linear model (see e.g.\ \cite{papaspiliopoulos2018scalable} for work in that direction).
Note that in the context of conditionally Gaussian models the entire Gaussian mean component could be updated in a single block, thus avoiding convergence issues related to single-site updates.
However these block updates can in principle be computationally expensive
(up to $O(n^3)$ cost in the dimension ($n$) of the Gaussian to be updated), while single-site updating schemes with provably bounded convergence rate can offer a more scalable alternative.
For some class of models, sparse linear algebra methods can reduce the cost of the block update by exploiting sparsity in the posterior precision matrix, but the resulting computational cost depends on the model structure and can still be super-linear (see e.g.\ Section \ref{sec:crossed} for models leading to dense precision matrices and \cite{papaspiliopoulos2018scalable} for more discussion).

While impressive results are being obtained with black-box software implementation of Hamiltonian Monte Carlo (HMC) such as STAN \citep{carpenter2017stan}, our results suggest that Gibbs Sampling schemes built on our methodological guidance can be substantially cheaper than gradient-based ones in the context of hierarchical models, leading to improved performances (Section \ref{sec:gamma_poisson}).
Moreover, our simulations show that the methodological results we develop in this paper are also helpful when fitting multilevel models with gradient-based schemes (Section \ref{sec:HMC}) and 
allow to obtain drastic improvements in efficiency also when using generic software, such as STAN.

Throughout the paper, we shall couch all our results in terms of $L^2$ rates of convergence. Specifically,
let $(\bbeta(s))_{s=1,2,\dots}$ be a Markov chain with stationary distribution $\pi$ and transition operator defined by $P^s f(\bbeta(0))=\E[f(\bbeta(s))|\bbeta(0)]$.
The \emph{rate of convergence} $\rho(\bbeta(s))$ associated to $(\bbeta(s))_{s=1,2,\dots}$ is defined as the smallest number $\rho$ such that for all $r>\rho$
\begin{equation}\label{eq:rates_of_conv}
\lim_{s\rightarrow\infty}
\frac{\|P^s f-\E_\pi[f]\|_{L^2(\pi)}}{r^s}
=0
\qquad
\forall 
f\in L^2(\pi)\,,
\end{equation}
where $L^2(\pi)$ denotes the space of square $\pi$-integrable functions, $\|\cdot\|_{L^2(\pi)}$ is its associated $L^2$-norm and $\E_\pi[f]=\int f\, d\pi$ is the expectation of $f$ with respect to $\pi$.
The rate of convergence $\rho(\bbeta(s))$ characterizes the speed at which $(\bbeta(s))_{s=1,2,\dots}$ converges to its stationary distribution $\pi$, with a simple argument giving that if
$$
T = \min \{s;\ 
\|P^sf-\E_\pi[f]\|_{L^2(\pi)} \le \epsilon \}
 $$ 
 then $T = \mathcal{O}\left(\frac{1}{-\log(\rho)}\right)$.

\subsection{Paper overview and structure}
Section \ref{sec:motiv} carefully introduces the 3-level hierarchical models we shall consider, and  provides motivating simulations. Then
in Section \ref{sec:multigrid} we shall give a  complete analysis for 3-level symmetric models (i.e.\ homogeneous variances and symmetric data structure). At the heart of the analysis is a multigrid decomposition of the Gibbs sampler into completely independent Markov chains describing different levels of hierarchical granularity, Theorem \ref{thm:factorization_3}.
Such multigrid decomposition simultaneously applies to every Gibbs sampler induced by all 
centred/non-centred parametrizations 
 and is fundamentally a statistical property of the hierarchical models under consideration.
Although multigrid ideas have already been used in methodological contexts to design improved MCMC schemes \citep{GoodmanSokal1989,LiuSabatti2000}, to our knowledge they had never been used in theoretical contexts to study convergence rates.
We demonstrate that the slowest of these independent chains is always that corresponding to the coarsest level, regardless of the value of the variance components and on the number of branches in the hierarchy, and thus derive explicit expressions for the rates of convergence in symmetric contexts.

In Section \ref{sec:crossed} we focus on crossed effect models, using again a multigrid decomposition approach to derive explicit convergence rates.
The results show that in the context of crossed models, centred/non-centred reparametrizations are not sufficient to guarantee fast convergence of the resulting Gibbs Sampler.
On the other hand, we show that the latter can be achieved by imposing stronger statistical identifiability through additional linear constraints and our theory provides indications on which constraints lead to faster convergence. 
Finally, a simulation study reported in Section \ref{sec:gamma_poisson} suggests 
that the analysis of the Gaussian case leads to useful guidance also in the case of non-Gaussian models for both the Gibbs Sampler and Hamiltonian Monte Carlo algorithms \citep{neal2011mcmc}.

Section \ref{sec:bespoke} considers 3-level non-symmetric hierarchical models, providing bounds on convergence rates based on comparisons with related symmetric models and discussing the use use of \emph{bespoke parametrizations}, where the choice of centred or non-centred parametrization in each branch of the hierarchy depends on the branch-specific parameter.

Section \ref{sec:k_levels} considers hierarchical models with arbitrary depth ($\ge 4$).
Using an appropriate auxiliary random walk, whose evolution through the hierarchical tree is governed by the parameters' squared partial correlations, we are able to extend the multigrid analysis to general tree structures and some non-symmetric cases.
We again demonstrate a fundamental multigrid decomposition in 
Theorem \ref{thm:factorization_general} where the coarsest level chain
converges the slowest, and we give explicit formulae for optimal partial non-centering strategies.

\section{Three level hierarchical linear models}\label{sec:motiv}

The theoretical innovation in this paper is centred around an important case in which we can obtain explicit Gibbs sampler rates of convergence, and as a result study explicitly the effects of particular models, parametrization schemes and blocking strategies. 
Therefore we shall begin with a detailed study of the following three-level Gaussian linear model, giving a fairly complete understanding of the interaction between model structure and parametrization and the 
Gibbs Sampler convergence behaviour.
\begin{taggedmodel}{S3}[Symmetric 3-levels hierarchical model]\label{model:S3}
Suppose
\begin{equation}\label{eq:NCP_3}
y_{ijk}=\mu+a_i+b_{ij}+\epsilon_{ijk},
\end{equation}
where $i$, $j$ and $k$ run from 1 to $I$, $J$ and $K$ respectively and $\epsilon_{ijk}$ are iid normal random variables with mean 0 and variance $\var{e}$.
We employ the standard Bayesian model specification assuming 
$a_i\sim N(0,\var{a})$, $b_{ij}\sim N(0,\var{b})$ and a flat prior on $\mu$.
\end{taggedmodel}
For the theoretical analysis, we will consider the variance terms $\var{a}$, $\var{b}$ and $\var{e}$ to be known, while in the simulations we will assume them to be unknown and give them a prior distribution.
Defining $\avec=(a_i)_{i}$, $\bvec=(b_{ij})_{i,j}$ and $\yvec=(y_{ijk})_{i,j,k}$, the Gibbs Sampler explores the posterior distribution $(\mu,\avec,\bvec)|\yvec$ by iteratively sampling from the full conditional distributions of $\mu$, $\avec$ and $\bvec$ as follows (see below for motivation of denoting such sampler as GS($1,1$)).
\begin{taggedsampler}{GS($1,1$)}\label{sampler:GSNN}
Initialize $\mu(0)$, $\avec(0)$ and $\bvec(0)$ and then iterate
\begin{enumerate}[noitemsep,nolistsep]
\item Sample $\mu(s+1)$ from $p(\mu|\avec(s),\bvec(s),\yvec)$;
\item Sample $a_i(s+1)$ from $p(a_i|\mu(s+1),\bvec(s),\yvec)$ for all $i$; 
\item Sample $b_{ij}(s+1)$ from $p(b_{ij}|\mu(s+1),\avec(s+1),\yvec)$ for all $i$ and $j$,
\end{enumerate}
where $p(\mu|\avec,\bvec,\yvec)$, $p(a_i|\mu,\bvec,\yvec)$ and $p(b_{ij}|\mu,\avec,\yvec)$ are the full conditionals of Model \ref{model:S3} (see supplementary material for explicit expressions).
\end{taggedsampler}

Given the conditional independence structure of the model, Sampler GS(1,1) is equivalent to a blocked Gibbs sampler with components $\mu$, $\avec$ and $\bvec$, i.e.\ a scheme performing consecutive updates of $\mu|\avec,\bvec$, $\avec|\mu,\bvec$ and $\bvec|\mu,\avec$ at each iteration.

\begin{figure}[h!]
\centering
\begin{subfigure}[b]{0.49\textwidth}
\centering
  \tikz{
     \node[obs] (y) {$y_{ijk}$};%
     \node[latent,above=of y,yshift=-0.5cm] (eta) {$\eta_{ij}$}; %
     \node[latent,above=of eta,yshift=-0.5cm] (gamma) {$\gamma_i$}; %
     \node[latent,above=of gamma,xshift=0cm,yshift=-0.5cm] (mu) {$\mu$}; %
     \plate [inner sep=0.2cm,xshift=0cm,yshift=0.1cm] {plate1} {(y)} {$k=1,\dots,K$}; %
     \plate [inner sep=0.2cm,xshift=-0cm,yshift=0.1cm] {plate2} {(plate1) (eta)} {$j=1,\dots,J$};
     \plate [inner sep=0.2cm,xshift=-0cm,yshift=0.1cm] {plate3} {(plate2) (gamma)} {$i=1,\dots,I $}
     \edge {eta} {y}
     \edge {gamma} {eta}
     \edge {mu} {gamma}
}   
        \caption{Fully centred parametrization}
        \label{fig:CC}
\end{subfigure}
\hfill
\begin{subfigure}[b]{0.49\textwidth}
\centering
  \tikz{
     \node[obs] (y) {$y_{ijk}$};%
     \node[latent,above=of y,xshift=0cm] (b) {$b_{ij}$}; %
     \node[latent,left=of b,xshift=-0.4cm] (a) {$a_i$}; %
     \node[latent,left=of a,xshift=0cm] (mu) {$\mu$}; %
     \plate [inner sep=0.2cm,xshift=0.1cm,yshift=0.1cm] {plate1} {(y)} {$k=1,...,K$}; %
     \plate [inner sep=0.2cm,xshift=0cm,yshift=0.1cm] {plate2} {(plate1) (b)} {$j=1,...,J$};
     \plate [inner sep=0.2cm,xshift=0cm,yshift=0.1cm] {plate3} {(plate2) (a)} {$i=1,...,I $}
     \edge {b} {y}
     \edge {a} {y}
     \edge {mu} {y}
}   
        \caption{Fully non-centred parametrization}
        \label{fig:NN}
\end{subfigure}
\begin{subfigure}[b]{0.49\textwidth}
\centering
  \tikz{
     \node[obs] (y) {$y_{ijk}$};%
     \node[latent,above=of y,xshift=0cm,yshift=-0.4cm] (eta) {$\eta_{ij}$}; %
     \node[latent,above=of eta,yshift=-0.4cm] (a) {$a_i$}; %
     \node[latent,left=of a,xshift=-0.7cm] (mu) {$\mu$}; %
     \plate [inner sep=0.2cm,xshift=0cm,yshift=0.1cm] {plate1} {(y)} {$k=1,...,K$}; %
     \plate [inner sep=0.2cm,xshift=-0cm,yshift=0.1cm] {plate2} {(plate1) (eta)} {$j=1,...,J$};
     \plate [inner sep=0.2cm,xshift=-0cm,yshift=0.1cm] {plate3} {(plate2) (a)} {$i=1,...,I $}
     \edge {eta} {y}
     \edge {a} {eta}
     \edge {mu} {eta}
}   
        \caption{Mixed parametrization: $(\mu,\avec,\etavec)$}
        \label{fig:NC}
\end{subfigure}
\hfill
\begin{subfigure}[b]{0.49\textwidth}
\centering
  \tikz{
     \node[obs] (y) {$y_{ijk}$};%
     \node[latent,above=of y,xshift=0cm,yshift=-0.4cm] (b) {$b_{ij}$}; %
     \node[latent,left=of b,xshift=-0.4cm] (gamma) {$\gamma_i$}; %
     \node[latent,above=of gamma,xshift=0cm,yshift=-0.2cm] (mu) {$\mu$}; %
     \plate [inner sep=0.2cm,xshift=0.1cm,yshift=0.1cm] {plate1} {(y)} {$k=1,...,K$}; %
     \plate [inner sep=0.2cm,xshift=-0cm,yshift=0.1cm] {plate2} {(plate1) (eta)} {$j=1,...,J$};
     \plate [inner sep=0.2cm,xshift=-0cm,yshift=0.1cm] {plate3} {(plate2) (gamma)} {$i=1,...,I $}
     \edge {b} {y}
     \edge {gamma} {y}
     \edge {mu} {gamma}
}   
        \caption{Mixed parametrization: $(\mu,\gavec,\bvec)$}
        \label{fig:CN}
\end{subfigure}
    \caption{Graphical representations of 3-levels hierarchical linear models under different parametrizations.}
    \label{fig:4parametrizations}
\end{figure}

The parametrization $(\mu,\avec,\bvec)$ induced by \eqref{eq:NCP_3} is often referred to as \emph{non-centred parametrization} (NCP) and it is contrasted with the \emph{centred para\-me\-tri\-za\-tion} (CP) obtained by replacing $a_i$ and $b_{ij}$ with $\gamma_i=\mu+a_i$ and $\eta_{ij}=\gamma_i+b_{ij}$ respectively.
Under the centred parametrization $(\mu,\gavec,\etavec)$ the model formulation becomes
\begin{equation}\label{eq:CP_3}
y_{ijk}\sim N(\eta_{ij},\var{e}),\qquad
\eta_{ij}\sim N(\gamma_i,\var{b}),\qquad
\gamma_i\sim N(\mu,\var{a}),\qquad
p(\mu)\propto1\,.
\end{equation}
Figures \ref{fig:NN} and \ref{fig:CC} provides a graphical representation of the two parametrizations.
In the $(\mu,\avec,\bvec)$ case $(1,1)$ refers to the fact that both levels 1 and 2 use a non-centred parametrization, while in the $(\mu,\gavec,\etavec)$ case $(0,0)$ indicates that both levels use a centred parametrization.
The resulting Gibbs sampler for the centred parametrization is as follows.
\begin{taggedsampler}{GS($0,0$)}\label{sampler:GSCC}
Initialize $\mu(0)$, $\gavec(0)$ and $\etavec(0)$ and then iterate
\begin{enumerate}[noitemsep,nolistsep]
\item Sample $\mu(s+1)$ from $p(\mu|\gavec(s),\etavec(s),\yvec)$;
\item Sample $\gamma_i(s+1)$ from $p(\gamma_i|\mu(s+1),\etavec(s),\yvec)$ for all $i$; 
\item Sample $\eta_{ij}(s+1)$ from $p(\eta_{ij}|\mu(s+1),\gavec(s+1),\yvec)$ for all $i$ and $j$,
\end{enumerate}
where $p(\mu|\gavec,\etavec,\yvec)$, $p(\gamma_i|\mu,\etavec,\yvec)$ and $p(\eta_{ij}|\mu,\gavec,\yvec)$ are the full conditionals induced by \eqref{eq:CP_3} (see supplementary material for explicit expressions).
\end{taggedsampler}
Together with the fully non-centred parametrization $(\mu,\avec,\bvec)$ and the fully centred parametrization $(\mu,\gavec,\etavec)$, one can also consider the mixed parametrizations given by $(\mu,\gavec,\bvec)$ and $(\mu,\avec,\etavec)$ and the corresponding Gibbs Sampler schemes $GS{(0,1)}$ and $GS(1,0)$.
See Figures \ref{fig:NC} and \ref{fig:CN} for graphical representations.

\subsection{Illustrative example}\label{sec:motivating_example}
As an illustrative example, we simulated data from Model \ref{model:S3} with $I=J=100$, $K=5$, $\mu=0$, $\sigma_{a}=\sigma_{e}=10$ and $\sigma_b=10^{-0.5}$.
This correspond to a scenario of high level of noise in the measurements.
We fit model \ref{model:S3} assuming the standard deviations $(\sigma_{a},\sigma_{b},\sigma_e)$ to be unknown and placing weakly informative priors, namely $\frac{1}{\var{a}}$, $\frac{1}{\var{b}}$ and $\frac{1}{\var{e}}$ a priori distributed according to an Inverse Gamma distribution with shape and rate parameters equal to $0.01$.
We compare the efficiency of the Gibbs sampling schemes corresponding to the four different parametrizations, denoting them by $GS(1,1)$, $GS(0,0)$, $GS(0,1)$ and $GS(1,0)$.
For this simple example we initialized the chains at true values of the parameters $(\mu,\avec,\bvec)$ and $(\sigma_a,\sigma_b,\sigma_e)$, which we know because we are in a simulated dataset example.
The more realistic case of starting the chains from randomly chosen states led to the same conclusions of this illustrative examples. 
Note that all the four schemes have the same starting states (modulo re-parametrization) to have a fair comparison.

Figure \ref{fig:level0_mixing} shows the mixing behaviour of the global mean $\mu$ and displays the potentially dramatic difference among mixing properties of the Gibbs Sampler under different parametrizations.
\begin{figure}[h!]
\centering
\includegraphics[width=\linewidth]{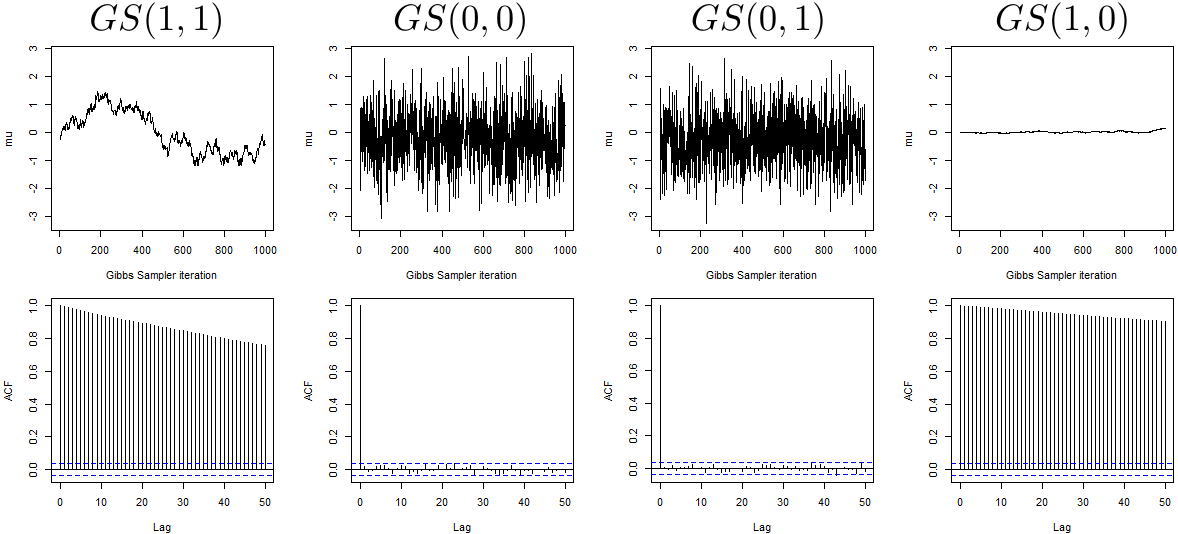}
\caption{
Mixing behaviour at level $0$ (i.e. $\mu$) under four different parametrizations.
The ranges of the y-axes are constant across parametrizations.
}
\label{fig:level0_mixing}
\end{figure}
Based on Figure \ref{fig:level0_mixing} one would certainly exclude using $GS(1,1)$
and
$GS(1,0)$ to fit the model under consideration and may be tempted to deduce that both
$GS(0,0)$ and 
$GS(0,1)$
lead to good mixing properties of the resulting chain.
However, as an additional check, a cautious practitioner may also explore the mixing of the parameters at the first level, namely $\avec$ and $\gavec$. 
Figure \ref{fig:level1_mixing} displays the behaviour of the global averages of such parameters, namely $\amean=\frac{\sum_{i} a_i}{I}$ and $\gamean=\frac{\sum_i \gamma_i}{I}$, in the first 1000 iterations.
\begin{figure}[h!]
\centering
\includegraphics[width=\linewidth]{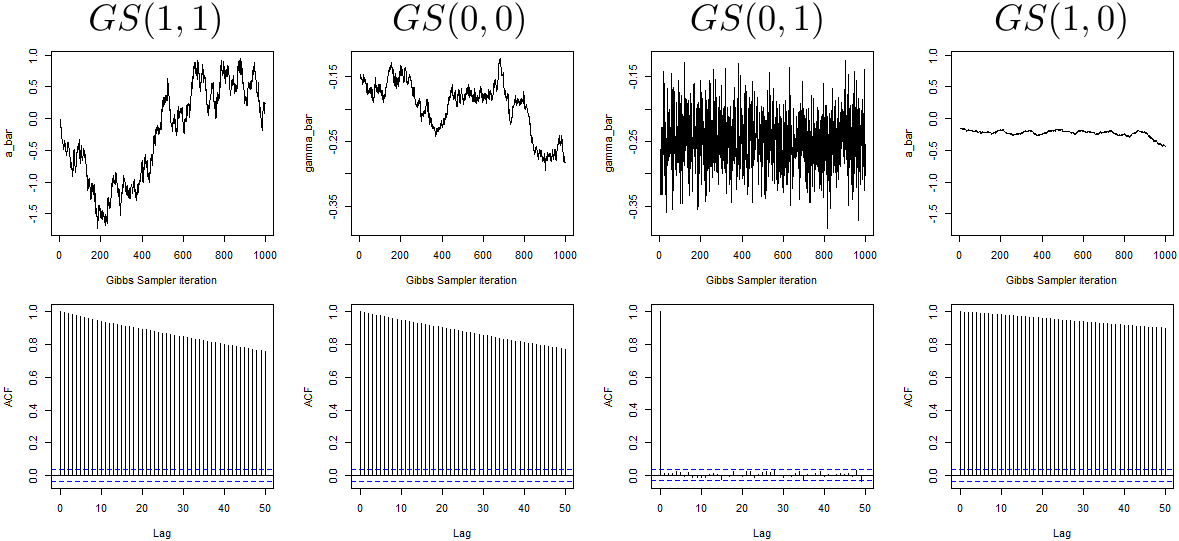}
\caption{Mixing behaviour at level 1 (i.e.\ $\amean$ and $\gamean$) under four different parametrizations. 
The ranges of the y-axes are constant across parametrizations sharing the same parameters at level 1.}
\label{fig:level1_mixing}
\end{figure}
Again, we see a dramatic difference induced by different parametrizations and, somehow surprisingly, we see that, despite having good mixing behaviour at level 0 (i.e.\ $\mu$), $GS(0,0)$ displays very poor mixing behaviour at level 1 (i.e.\ $\gavec$).
It is then natural to explores also the mixing behaviour at level 2 and Figure \ref{fig:level2_mixing} does so again by plotting the global averages $\bmean=\frac{\sum_{ij}\beta_{ij}}{IJ}$ and $\etamean=\frac{\sum_{ij}\eta_{ij}}{IJ}$.
\begin{figure}[h!]
\centering
\includegraphics[width=\linewidth]{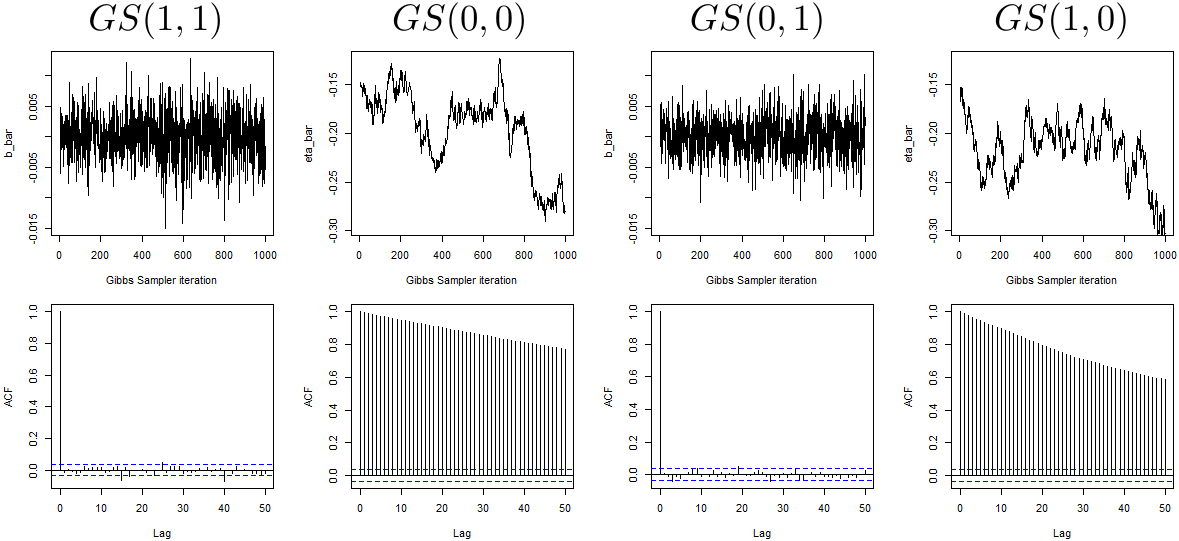}
\caption{
Mixing behaviour at level 2 (i.e.\ $\bmean$ and $\etamean$) under four different parametrizations. 
The ranges of the y-axes are constant across parametrizations sharing the same parameters at level 2.
}
\label{fig:level2_mixing}
\end{figure}
In this case $GS(1,1)$ and $GS(0,1)$ are the only one achieving good mixing.
Based on Figures \ref{fig:level0_mixing}, \ref{fig:level1_mixing} and \ref{fig:level2_mixing} it is natural to choose to fit the model using the sampler $GS(0,1)$ corresponding to the mixed parametrization $(\mu,\gavec,\bvec)$, as it is the only one providing a good mixing across all three levels.

This simple example shows many typical issues arising when fitting Bayesian multi-level models and raises many questions. 
For example, one would like to know what are good parameters to use to diagnose convergence, in order to avoid misleading conclusions like the one suggested by Figure \ref{fig:level0_mixing}.
In fact, while in two level model good mixing of the global hyperparameters such as $\mu$ typically indicates good global mixing, this is not true in other multi-level models. 
Indeed, it is legitimate to wonder whether diagnoses based only on the global means, like in Figures 
\ref{fig:level0_mixing}-\ref{fig:level2_mixing}, are enough to deduce good mixing of the whole Markov chain, which in our example has more than $10^4$ dimensions ($1+I+IJ$ mean components and $3$ precision components).
Below we will show that for Model \ref{model:S3}, mixing of the global means ensures mixing of the whole $(1+I+IJ)$-dimensional mean components of the chain given the variances (see e.g.\ Corollary \ref{thm:rate_eq_skeleton}).
Therefore it is enough to monitor the three global means and the three variances to ensure a reliable check of the chain mixing properties.

Even more crucially, it is desirable to have simple and theoretically grounded guidance in choosing a computationally efficient parametrization, given the huge impact it can have on computational performances.
The theoretical analysis developed in the next section will provide useful guidance in this respect.

\section{Multigrid decomposition for the three level hierarchical model}\label{sec:multigrid}
The basic ingredient of our analysis is the following multigrid decomposition.
Consider the four possible parametrization of Model \ref{model:S3}: 
$(\mu,\avec,\bvec)$,
$(\mu,\gavec,\etavec)$
and the mixed parametrizations $(\mu,\gavec,\bvec)$ and $(\mu,\avec,\etavec)$.
In order to provide a unified treatment, regardless of the chosen parametrization, we denote the parameters used by $(\bbeta^{(0)},\bbeta^{(1)},\bbeta^{(2)})$ and the resulting Gibbs Sampler by $GS(\bbeta)$.
For example, in the NCP case $\bbeta^{(0)}=\mu$, $\bbeta^{(1)}=\avec$, $\bbeta^{(2)}=\bvec$ and $GS(\bbeta)$ coincides with GS(1,1).
First consider the map $\delta$ sending $\bbeta=(\bbeta^{(0)},\bbeta^{(1)},\bbeta^{(2)})$ to 
\begin{equation}\label{eq:multigrid_3}
\delta(\bbeta)
=
\left( \begin{array}{c}
\delta^{(0)}\bbeta \\
\delta^{(1)}\bbeta \\
\delta^{(2)}\bbeta
\end{array}\right)
=
\left( \begin{array}{c}
\delta^{(0)}\bbeta^{(0)}\,,\,\delta^{(0)}\bbeta^{(1)}\,,\,\delta^{(0)}\bbeta^{(2)} \\
\delta^{(1)}\bbeta^{(1)}\,,\,\delta^{(1)}\bbeta^{(2)}\\
\delta^{(2)}\bbeta^{(2)}
\end{array}\right)\,,
\end{equation}
where, loosely speaking, $\delta^{(i)}\bbeta$ represent the increments of $\bbeta$ at the $i$-th coarseness level.
More precisely 
\begin{gather*}
\delta^{(0)}\beta^{(0)}
=
\beta^{(0)}\,,\qquad
\delta^{(0)}\bbeta^{(1)}
=
\beta^{(1)}_{\cdot}\,,\qquad
\delta^{(0)}\bbeta^{(2)}
=
\beta^{(2)}_{\cdot\cdot}\,,
\\
\delta^{(1)}\bbeta^{(1)}
=
\left(\beta^{(1)}_1-\beta^{(1)}_{\cdot},\dots,\beta^{(1)}_I-\beta^{(1)}_{\cdot}\right),\,
\delta^{(1)}\bbeta^{(2)}
=
\left(
\beta^{(2)}_{1\cdot}-\beta^{(2)}_{\cdot\cdot},\dots,
\beta^{(2)}_{I\cdot}-\beta^{(2)}_{\cdot\cdot}\right)\,,
\\
\delta^{(2)}\bbeta^{(2)}
=
\left(
\beta^{(2)}_{11}-\beta^{(2)}_{1\cdot},
\beta^{(2)}_{12}-\beta^{(2)}_{1\cdot},\dots,
\beta^{(2)}_{I(J-1)}-\beta^{(2)}_{I\cdot},
\beta^{(2)}_{IJ}-\beta^{(2)}_{I\cdot}\right)\,,
\end{gather*}
where
\begin{align*}
\beta^{(1)}_{\cdot}=\frac{\sum_{i}\beta^{(1)}_i}{I}
\,,\qquad
\beta^{(2)}_{\cdot\cdot}
=
\frac{\sum_{i,j}\beta^{(2)}_{ij}}{IJ}
\,,\qquad
\beta^{(2)}_{i\cdot}=\frac{\sum_{j}\beta^{(2)}_{ij}}{J}\,.
\end{align*}
It is easy to see that the map $\delta$ is a bijection between $\R^d$ and 
$\R^3\times (\R^{I})^*\times (\R^{I})^*\times_{i=1}^I(\R^{J})^*$, where $(\R^p)^*=\{(v_1,\dots,v_p)\in\R^p\,:\,\sum_{i=1}^pv_i=0\}$.
The dimensionality of $\delta\bbeta$ equals the one of $\bbeta$, which is $1+I+IJ$, because $\delta\bbeta$ has $3+2I+IJ$ parameters and $2+I$ constraints.
The following theorem shows that the Markov chain induced by $GS(\bbeta)$ factorizes under the transformation $\delta$.
\begin{theorem}[Multigrid Decomposition]\label{thm:factorization_3} 
Let $(\bbeta(s))_{s=1}^\infty$ be a Markov chain on $\R^d$ evolving according to $GS(\bbeta)$. 
Then the timewise transformations $(\delta^{(0)}\bbeta(s))_{s=1}^\infty$, $(\delta^{(1)}\bbeta(s))_{s=1}^\infty$ and $(\delta^{(2)}\bbeta(s))_{s=1}^\infty$ are each a Markov chain and evolve independently.
\end{theorem}
\tikzstyle{state}=[shape=ellipse,draw=blue!50,fill=blue!20]
\tikzstyle{observation}=[shape=rectangle,draw=orange!50,fill=orange!20]
\tikzstyle{lightedge}=[<-,dotted]
\tikzstyle{mainstate}=[state,thick]
\tikzstyle{mainedge}=[<-,solid]
\definecolor{Apricot}{rgb}{0.98, 0.81, 0.69}
\newlength{\cola}
\setlength{\cola}{0cm}
\newlength{\colb}
\setlength{\colb}{2cm}
\newlength{\cole}
\setlength{\cole}{5cm}
\newlength{\colf}
\setlength{\colf}{7cm}%
\tikzstyle{state}=[shape=ellipse,draw=Apricot!50,fill=Apricot!20]
\tikzstyle{beta}=[state]
\tikzstyle{delta0}=[state,draw=-red!50!green!25,fill=-red!85!green!70]
\tikzstyle{delta1}=[state,draw=red!50,fill=red!25]
\tikzstyle{delta2}=[state,draw=green!40,fill=green!15]
\begin{figure}[t!]
\begin{center}
\begin{tikzpicture}[]
\node               at (\cola,6+2.7) {$s$};
\node[beta,inner sep=0.2pt] (s1_1) at (\cola,5.4+2.7) {$\bbeta^{(0)}$};
\node[beta,inner sep=0.2pt] (s2_1) at (\cola,4.6+2.7) {$\bbeta^{(1)}$};
\node[beta,inner sep=0.2pt] (s3_1) at (\cola,3.8+2.7) {$\bbeta^{(2)}$};
\node               at (\colb,6+2.7) {$s+1$};
\node[beta,inner sep=0.2pt] (s1_2) at (\colb,5.4+2.7) {$\bbeta^{(0)}$}
    edge[mainedge,>=latex] (s2_1)
    edge[mainedge,>=latex] (s3_1)
;
\node[beta,inner sep=0.2pt] (s2_3) at (\colb,4.6+2.7) {$\bbeta^{(1)}$}
    edge[mainedge,>=latex] (s1_2)
    edge[mainedge,>=latex] (s3_1)
;
\node[beta,inner sep=0.2pt] (s3_4) at (\colb,3.8+2.7) {$\bbeta^{(2)}$}
    edge[mainedge,>=latex,bend right=55] (s1_2)
    edge[mainedge,>=latex] (s2_3)
;
\node               at (\cole,8.5+0.2) {$s$};
\node[delta0,inner sep=0.2pt] (s1_3) at (\cole,7.9+0.2) {$\delta^{(0)}\bbeta$};
\node[delta1,inner sep=0.2pt] (s2_3) at (\cole,7.1+0.2) {$\delta^{(1)}\bbeta$};
\node[delta2,inner sep=0.2pt] (s3_3) at (\cole,6.3+0.2) {$\delta^{(2)}\bbeta$};
\node               at (\colf,8.5+0.2) {$s+1$};
\node[delta0,inner sep=0.2pt] (s1_4) at (\colf,7.9+0.2) {$\delta^{(0)}\bbeta$}
    edge[mainedge,>=latex] (s1_3)
;
\node[delta1,inner sep=0.2pt] (s2_4) at (\colf,7.1+0.2) {$\delta^{(1)}\bbeta$}
    edge[mainedge,>=latex] (s2_3)
;
\node[delta2,inner sep=0.2pt] (s3_4) at (\colf,6.3+0.2) {$\delta^{(2)}\bbeta$}
    edge[mainedge,>=latex] (s3_3)
;
\end{tikzpicture}
\end{center}
\caption{
Illustration of Theorem \ref{thm:factorization_3}.
Left: the transition from $\bbeta(s)$ to $\bbeta(s+1)$ in Sampler $GS(\bbeta)$ follows the structure of a Gibbs Sampler with $3$ components.
Right: the transition from $\delta\bbeta(s)$ to $\delta\bbeta(s+1)$ in Sampler $GS(\bbeta)$ follows the structure of three independent Markov chains.
}
\end{figure}

While the posterior independence of $\delta^{(0)}\bbeta$, $\delta^{(1)}\bbeta$ and $\delta^{(2)}\bbeta$ is well-known, the remarkable fact following from Theorem \ref{thm:factorization_3} is that also the Markov chains induced by the Gibbs Sampler are independent (note that the independence of the random vector under the target measure is a necessary but not sufficient condition for the independence of a corresponding MCMC scheme).
\begin{rmk}
It is worth noting that the three subspaces of $\R^d$ spanned by the vectors
$\delta^{(0)}\bbeta$,  
$\delta^{(1)}\bbeta$ and  
$\delta^{(2)}\bbeta$,
respectively, do not depend on the choice of parametrization $\bbeta$.
Thus the multigrid decomposition is intrinsic to the model, and not dependent on the particular parametrization being considered.
\end{rmk}

Theorem \ref{thm:factorization_3} provides a useful tool to analyze the Markov chain of interest, $\bbeta(s)$. 
In fact the factorization into independent Markov chains implies that the rate of convergence of $\bbeta(s)$ is simply given by the worst rate of convergence among $\delta^{(0)}\bbeta(s)$, $\delta^{(1)}\bbeta(s)$ and $\delta^{(2)}\bbeta(s)$.
Interestingly, the slowest chain is always the chain at the highest level $\delta^{(0)}\bbeta(s)$, regardless of the choice of parametrization and the values of $(I,J,K,\sigma_{a},\sigma_b,\sigma_{e})$.
\begin{theorem}[Hierarchical ordering of convergence rates]\label{thm:rates_ordering_3} 
Let $\delta^{(0)}\bbeta(s)$, $\delta^{(1)}\bbeta(s)$ and $\delta^{(2)}\bbeta(s)$ be the Markov chains defined in Theorem \ref{thm:factorization_3}.
Then the associated convergence rates satisfy
$$
\rho(\delta^{(0)}\bbeta(s))\geq
\rho(\delta^{(1)}\bbeta(s))\geq
\rho(\delta^{(2)}\bbeta(s))=0\,.
$$
\end{theorem}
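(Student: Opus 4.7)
The strategy is to leverage Theorem \ref{thm:factorization_3}, which identifies each $\delta^{(i)}\bbeta$ chain as an independent blocked Gaussian Gibbs sampler on an explicitly described reduced posterior, and then to compute and compare the corresponding spectral quantities.

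The equality $\rho(\delta^{(2)}\bbeta(s))=0$ is immediate. By Theorem \ref{thm:factorization_3} the chain $(\delta^{(2)}\bbeta(s))_s$ is a Gibbs sampler with a single block, so one iteration is an exact draw from the stationary distribution, and the rate is zero in the sense of \eqref{eq:rates_of_conv}.

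For the inequality $\rho(\delta^{(0)}\bbeta)\ge\rho(\delta^{(1)}\bbeta)$, the plan is to work out the reduced Gaussian target of each sub-sampler. The coarse chain $\delta^{(0)}\bbeta$ targets the posterior of the three global means $(\beta^{(0)},\beta^{(1)}_{\cdot},\beta^{(2)}_{\cdot\cdot})$ given the grand mean statistic $y_{\cdot\cdot\cdot}$, with a flat prior on $\beta^{(0)}$ and Gaussian priors of variances $\var{a}/I$ and $\var{b}/(IJ)$ on the other two coordinates. The next-finer chain $\delta^{(1)}\bbeta$, by the symmetry of Model \ref{model:S3}, decouples across $i$ (modulo the sum-to-zero constraint) into parallel two-block Gibbs samplers on pairs $(\beta^{(1)}_i-\beta^{(1)}_\cdot,\beta^{(2)}_{i\cdot}-\beta^{(2)}_{\cdot\cdot})$, each driven by $y_{i\cdot\cdot}-y_{\cdot\cdot\cdot}$ and carrying proper Gaussian priors on \emph{both} coordinates. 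Thus the two posteriors have the same qualitative shape, but the coarse one has an extra coordinate that is not regularised by any prior.

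With both chains being blocked Gaussian Gibbs samplers, their $L^2$ rates equal the spectral radii of the corresponding deterministic iteration matrices, computable from the conditional variances in closed form. I would compute these rates in terms of the prior variances, $I$, $J$, $K$ and the data-precision factors, and then compare them directly. The improper flat prior on $\beta^{(0)}$ makes one direction in the coarse target weakly identified, which inflates the relevant conditional correlations and hence the spectral radius relative to the level-$1$ chain where every coordinate is shrunk towards zero by a proper prior. A cleaner alternative to bare calculation is a continuity/limit argument: replace the flat prior on $\beta^{(0)}$ by a proper $N(0,\sigma^2)$ prior, observe that the resulting three-block rate is monotone increasing in $\sigma^2$, and match the $\sigma^2\to\infty$ limit (the actual level-$0$ rate) against the level-$1$ rate (which is recovered as $\sigma^2\to 0$, i.e.\ as $\beta^{(0)}$ becomes effectively fixed and hence removable from the chain).

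The main obstacle is packaging this comparison in a way that does not devolve into a long algebraic calculation. The most delicate step is showing monotonicity of the three-block Gaussian Gibbs sampler rate in the top-level prior variance, which is exactly the ingredient that turns the intuitive statement "flatter prior means slower mixing" into a proof of the stated ordering. Once this monotonicity is established, combining it with the vanishing rate of the single-block chain yields the full ordering.
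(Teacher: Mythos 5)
Your handling of $\delta^{(2)}\bbeta$ is correct, and your primary plan (compute the sub-chain rates in closed form for each of the four parametrizations and compare) is exactly the paper's route: using Corollary \ref{coroll:rate_k_levels} and Theorem \ref{thm:kronecker} the paper tabulates all three rates for all four parametrizations (Figure \ref{table:rates_3levels}) and the ordering follows by inspection, e.g.\ from $\max(x,y)\geq xy$ for the non-centred cases and $1-xy\geq x(1-y)$ for the centred ones, with $x,y\in[0,1]$.

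Your ``cleaner alternative'', however, has a genuine gap: the monotonicity of the three-block Gaussian Gibbs rate in the top-level prior variance is \emph{false} for general Gaussian targets, so the step you yourself flag as delicate cannot be established without invoking the specific structure of the four precision matrices — at which point you are back to the explicit computation. A concrete counter-example: take the $3\times 3$ precision matrix with unit diagonal, $Q_{12}=Q_{13}=0.3$ and $Q_{23}=0.5$. The three-block deterministic-scan sampler has iteration matrix with complex non-zero eigenvalues of modulus $\sqrt{0.045}\approx 0.21$, whereas fixing the first coordinate (your $\sigma^2\to 0$ endpoint) leaves a two-block sampler with rate $0.5^2=0.25$; the rate as a function of $Q_{11}$ is non-monotone and \emph{increases} towards $0.25$ as $Q_{11}\to\infty$. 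This is precisely the failure mode the paper warns about after Theorem \ref{thm:ordering_CP}, citing Example 2 of \cite{RobertsSahu1997}: conditioning a coordinate does not always improve the Gibbs rate, which is why the general $k$-level ordering is proved only for the centred parametrization (tridiagonal precision plus Cauchy interlacing) while the three-level statement, which covers all four parametrizations, is verified by direct computation. A secondary gap: identifying the level-$1$ chain's rate with that of the conditioned two-block sampler is not a consequence of the two targets having ``the same qualitative shape''; one must show that the blocked dynamics of the constrained $2(I-1)$-dimensional level-$1$ chain collapse to those of a single bivariate Gaussian whose iteration matrix is the relevant submatrix of the level-$0$ one. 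That identification is the real content of Lemma \ref{lemma:delta_expansion} and Theorem \ref{thm:kronecker}, and it requires the symmetry of Model \ref{model:S3} rather than following from the form of the priors alone.
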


Theorems \ref{thm:factorization_3} and \ref{thm:rates_ordering_3} imply that the rate of convergence of the global chain $\bbeta(s)$ coincides with the one of the sub-chain $\delta^{(0)}\bbeta(s)$ sampling the global means $(\beta^{(0)},\beta^{(1)}_{\cdot},\beta^{(2)}_{\cdot\cdot})$. 
\begin{coroll}\label{thm:rate_eq_skeleton}(Rate of convergence of $GS(\bbeta) $)
Given the notation of Theorem \ref{thm:factorization_3},
$$
\rho(\bbeta(s))=
\rho(\delta^{(0)}\bbeta(s))\,.
$$
\end{coroll}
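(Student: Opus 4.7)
The plan is to combine the change-of-variable from $\bbeta$ to $\delta\bbeta$ (which is a linear bijection) with the independence statement of Theorem \ref{thm:factorization_3} and the ordering of Theorem \ref{thm:rates_ordering_3}, via the standard fact that the $L^2$ rate of convergence of a product of independent Markov chains equals the maximum of the component rates.

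First, I would note that since $\delta$ is a linear bijection between $\R^d$ and its image, the pushforward map $f \mapsto f\circ\delta^{-1}$ is a unitary isomorphism between $L^2(\pi)$ (the posterior of $\bbeta$) and $L^2(\pi^{\delta})$ (the posterior of $\delta\bbeta$). This isomorphism intertwines the transition operator of $(\bbeta(s))_s$ with that of $(\delta\bbeta(s))_s$, so the definition \eqref{eq:rates_of_conv} gives
\begin{equation*}
\rho(\bbeta(s))=\rho(\delta\bbeta(s))\,.
\end{equation*}

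Next, by Theorem \ref{thm:factorization_3}, the chain $(\delta\bbeta(s))_s$ is a product of three independent chains, so its stationary distribution is a product $\pi^\delta=\pi_0\otimes\pi_1\otimes\pi_2$ and its transition operator is the tensor product $P=P_0\otimes P_1\otimes P_2$. Writing $L^2(\pi_i)=\mathrm{span}(1)\oplus L^2_0(\pi_i)$ and distributing the tensor product yields the orthogonal decomposition $L^2(\pi^\delta)=\bigoplus_{S\subseteq\{0,1,2\}}V_S$, where on $V_S$ the operator $P$ acts as the identity on the coordinates outside $S$ and as $P_i$ restricted to $L^2_0(\pi_i)$ on coordinates $i\in S$. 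Hence the spectral radius of $P$ on $V_S$ equals $\prod_{i\in S}\rho(\delta^{(i)}\bbeta(s))$, and since the $L^2_0$-spectral radii are bounded above by $1$, the maximum over $S\neq\emptyset$ is attained at some singleton. Taking $r$ slightly above $\max_i\rho(\delta^{(i)}\bbeta(s))$ and applying this spectral bound coordinate-wise to the expansion of an arbitrary $f\in L^2_0(\pi^\delta)$ shows that $\|P^s f\|_{L^2(\pi^\delta)}/r^s\to 0$; conversely, restricting to functions depending only on $\delta^{(i)}\bbeta$ shows the rate cannot be smaller than any individual $\rho(\delta^{(i)}\bbeta(s))$. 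Therefore
\begin{equation*}
\rho(\delta\bbeta(s))=\max_{i\in\{0,1,2\}}\rho(\delta^{(i)}\bbeta(s))\,.
\end{equation*}

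Finally, Theorem \ref{thm:rates_ordering_3} identifies this maximum with $\rho(\delta^{(0)}\bbeta(s))$, giving the desired $\rho(\bbeta(s))=\rho(\delta^{(0)}\bbeta(s))$. The only non-routine step is the tensor-product spectral argument in the middle paragraph, but this is standard for $L^2$ analysis of product Markov chains and requires only that each component chain is $\pi_i$-stationary (so that constants are a preserved subspace) and that $\rho(\delta^{(i)}\bbeta(s))\le 1$; both are immediate from Theorem \ref{thm:factorization_3}.
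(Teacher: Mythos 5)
Your proof is correct and follows essentially the same route as the paper: the paper likewise combines the bijectivity of $\delta$, the independence from Theorem \ref{thm:factorization_3} (so the overall rate is the maximum of the sub-chain rates), and the ordering in Theorem \ref{thm:rates_ordering_3}. The only difference is that you spell out, via the tensor-product spectral decomposition, the standard fact that the $L^2$ rate of a product of independent stationary chains equals the maximum of the component rates, which the paper simply invokes.
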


\subsection{Explicit rates of convergence under different parametrizations}\label{sec:explicit_rates}
The multigrid decomposition developed in Section \ref{sec:multigrid} allows to perform a direct analysis on the convergence properties of the Markov chain of interest $\bbeta(s)$.
The latter is a Gibbs Sampler targeting a multivariate Gaussian distributions and thus, in principle, could be analyzed using, for example, the tools developed in \cite{.amit:1996:properties+convergence+perturbations,RobertsSahu1997,khare2009rates}.
However, these results require to have a full characterization of the spectrum of a $d\times d$ matrix, where $d$ is the number of dimensions in the Markov chain under consideration. 
Given the high-dimensionality of $\bbeta(s)$, which has $1+I+IJ$ parameters, it is hard to apply directly such results and in fact the convergence properties of $\bbeta(s)$ have been studied heuristically or numerically in the literature (see e.g.\ \cite[Sec.4]{GelfandSahuCarlin1995} and \cite[Sec.4.2]{RobertsSahu1997}). 
Corollary \ref{thm:rate_eq_skeleton}, however implies that it suffices to study the skeleton chain $\delta^{(0)}\bbeta(s)$, which is a low-dimensional chain (namely 3-dimensional) amenable to direct analysis.
Therefore, using Corollary \ref{thm:rate_eq_skeleton}, we can derive analytic expressions for the rates of convergence for the Gibbs Sampler under different parametrizations.
\begin{theorem}\label{thm:rates_3}
Given an instance of Model \ref{model:S3}, the rate of convergence of the four Gibbs Sampler schemes $GS(0,0)$, $GS(1,1)$, $GS(0,1)$ and $GS(1,0)$ are given by
\begin{align*}
&\rho_{00}=
1-\frac{\tvar{a}}{\tvar{a}+\tvar{b}}\,\frac{\tvar{b}}{\tvar{b}+\tvar{e}}\,,
&\rho_{10}=
\max\left\{\frac{\tvar{a}}{\tvar{a}+\tvar{b}},\frac{\tvar{e}}{\tvar{b}+\tvar{e}}\right\}\,,
\\
&\rho_{01}=
1-\frac{\tvar{a}}{\tvar{a}+\tvar{e}}\,\frac{\tvar{e}}{\tvar{b}+\tvar{e}}\,,
&\rho_{11}=
\max\left\{\frac{\tvar{a}}{\tvar{a}+\tvar{e}},\frac{\tvar{b}}{\tvar{b}+\tvar{e}}\right\}\,,
\end{align*}
where $\tvar{a}=\frac{\var{a}}{I}$, $\tvar{b}=\frac{\var{b}}{IJ}$ and $\tvar{e}=\frac{\var{e}}{IJK}$.
\end{theorem}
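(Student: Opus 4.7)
The plan is to reduce the rate computation to a three-dimensional problem via Corollary \ref{thm:rate_eq_skeleton}, which asserts that $\rho(\bbeta(s))=\rho(\delta^{(0)}\bbeta(s))$. Thus in each of the four cases I only need to analyse the 3-block Gibbs sampler that evolves the triple of global means $(\beta^{(0)},\beta^{(1)}_{\cdot},\beta^{(2)}_{\cdot\cdot})$, rather than the full chain on $\R^{1+I+IJ}$.

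First I would establish a \emph{collapsed-model lemma}: the posterior of the skeleton $(\beta^{(0)},\beta^{(1)}_{\cdot},\beta^{(2)}_{\cdot\cdot})$ under Model \ref{model:S3} is identical to the posterior in a reduced 3-level Gaussian model with a single observation $\ymean$, noise variance $\tvar{e}=\var{e}/(IJK)$, and latent-level prior variances $\tvar{a}=\var{a}/I$ and $\tvar{b}=\var{b}/(IJ)$, under the same centred/non-centred choice. This is a direct consequence of Gaussianity plus the orthogonal decomposition of the data and parameter space induced by $\delta$: the components $\delta^{(1)}\bbeta$ and $\delta^{(2)}\bbeta$ are a posteriori independent of $\delta^{(0)}\bbeta$ (the same fact that underlies Theorem \ref{thm:factorization_3}), so integrating them out leaves exactly the collapsed posterior described above. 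Moreover, because of the same independence, the conditional updates of $GS(\bbeta)$ restricted to the skeleton match those of a 3-block Gibbs sampler targeting the collapsed posterior.

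Next, for each of the four parameterizations I would write down the three full conditionals in the collapsed model (all scalar Gaussians), and extract the $3\times 3$ iteration matrix $M$ governing the evolution of the means under the 3-block Gibbs sampler. Since the target is Gaussian and the kernel is linear in the state, the $L^2$ rate of convergence equals the spectral radius of $M$ (a standard result used, e.g., in \cite{RobertsSahu1997}). For $GS(0,0)$ the chain-type conditional dependence graph $\mu\!-\!\gamma_\cdot\!-\!\eta_{\cdot\cdot}\!-\!\ymean$ makes $M$ a triangular matrix whose nonzero eigenvalue factorises as a product of two adjacent-level partial correlations, yielding $1-\tvar{a}\tvar{b}/[(\tvar{a}+\tvar{b})(\tvar{b}+\tvar{e})]$. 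For $GS(1,1)$ the non-centring makes $a_\cdot$ and $b_{\cdot\cdot}$ a priori independent of $\mu$, so the conditional graph becomes a \emph{star} centred at $\ymean$; the iteration matrix decouples and its spectral radius is the maximum of the two 2-level NCP rates $\tvar{a}/(\tvar{a}+\tvar{e})$ and $\tvar{b}/(\tvar{b}+\tvar{e})$. The mixed schemes $GS(0,1)$ and $GS(1,0)$ combine these two structures on the two adjacent pairs of levels and produce the stated product/max expressions by the same spectral analysis.

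The main obstacle is the bookkeeping in the middle step: for each parameterization one must correctly identify which of $\tvar{a},\tvar{b},\tvar{e}$ enter each full conditional (especially in the asymmetric mixed cases), and then verify that the off-diagonal couplings in $M$ produce exactly the claimed ratios after the spectral radius is extracted. Once the conditional-independence graph of the collapsed model is drawn, the calculation of eigenvalues is routine: in the centred-at-the-top and mixed cases $M$ has a zero eigenvalue together with a $2\times 2$ block whose trace equals the stated $1-(\cdot)(\cdot)$ rate, while in the non-centred-at-the-top cases $M$ is essentially diagonal on the two partial-correlation entries, giving the max form. Matching the variance ratios to the stated formulas then completes the proof.
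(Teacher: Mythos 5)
Your proposal is correct and follows essentially the same route as the paper: both reduce, via the multigrid factorization and Corollary \ref{thm:rate_eq_skeleton}, to the three-dimensional skeleton chain of global means (your ``collapsed model'' is exactly the chain $(\delta^{(0)}\bbeta(s))_s$ of Theorem \ref{thm:factorization_3}), and then extract the rate as the largest modulus eigenvalue of the $3\times 3$ Gibbs iteration matrix $(\I-L)^{-1}U$ in the style of \cite{RobertsSahu1997}, which is precisely what the paper does through Corollary \ref{coroll:rate_k_levels} and Theorem \ref{thm:kronecker}. The only real difference is presentational: the paper tabulates the rates of all three subchains (needed anyway for Theorem \ref{thm:rates_ordering_3}), while you invoke the already-stated Corollary \ref{thm:rate_eq_skeleton} and compute only the coarsest one; your intermediate descriptions of the iteration matrices (``triangular'', ``decouples'') are slightly loose, but the eigenvalue structure you ultimately describe (a rank-one $2\times 2$ block with trace $1-(\cdot)(\cdot)$ when level 1 is centred, eigenvalues equal to the two two-level NCP rates when it is non-centred) is exactly what the computation yields.
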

Theorem \ref{thm:rates_3} provides explicit and informative formulas regarding the interaction between choice of parametrization and resulting efficiency of the Gibbs Sampler for Model \ref{model:S3}.
\begin{figure}[h!]
\centering
\includegraphics[width=\linewidth]{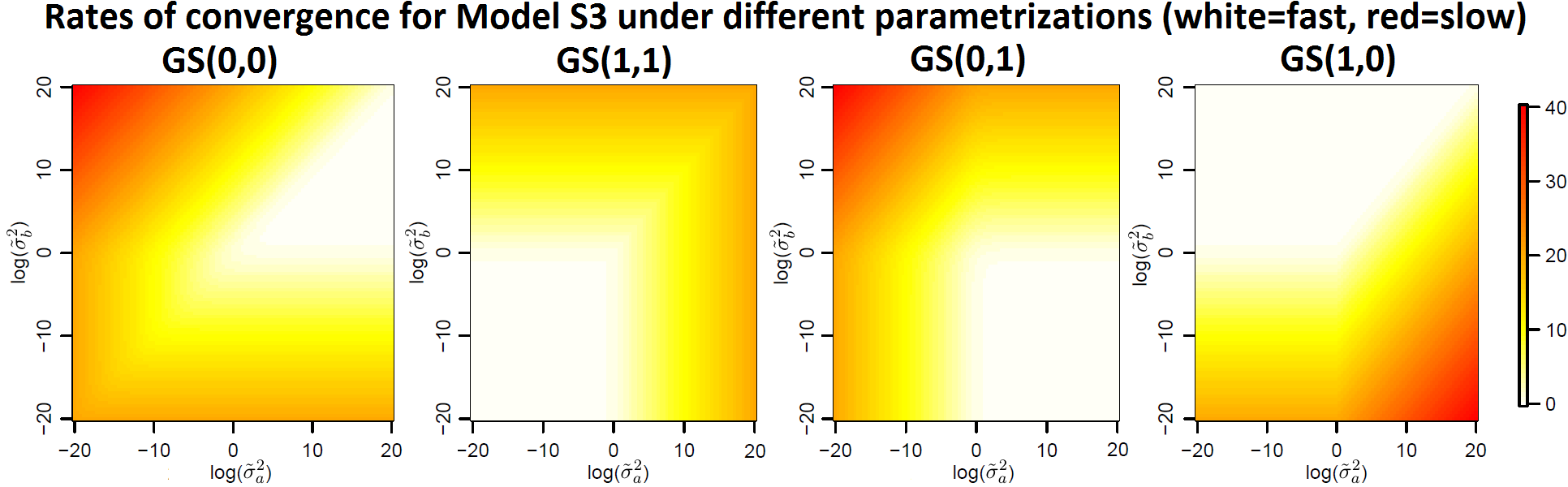}
\caption{Plot of rates of convergence for three-levels Gaussian hierarchical models under different parametrizations.
 Color levels correspond to values of $\log(1-\rho)$, where $\rho$ is the rate of convergence, as a function of $\log(\tvar{a})$ and $\log(\tvar{b})$ for fixed $\log(\tvar{e})=0$.
}
\label{fig:rates_3}
\end{figure}
Figure \ref{fig:rates_3} summarizes graphically the dependence of the converge rates of different parametrizations from the values of the variances of various levels.
Roughly speaking, the figure suggests that there is a partition of the hyperparameter space (corresponding to the white regions in each plot) such that in each region one and only one of the four parametrizations performs well.

Consider for example the illustrative example of Section \ref{sec:motivating_example}. 
Applying Theorem \ref{thm:rates_3} to such context we obtain that the $L^2$ rates of convergence (up to the third decimal digit) of the various Gibbs Samplers under consideration given $(I,J,K,\sigma_{a},\sigma_{b},\sigma_e)=(100,100,5,10,10^{-0.5},10)$ are
$$
(\rho_{00},\rho_{11},\rho_{01},\rho_{10})
=
(0.995,0.998,0.007,0.999)\,.
$$
Recall that values of $\rho$ close to 1 mean slow convergence, see \eqref{eq:rates_of_conv} and discussion thereof.
These numbers provide a quantitative and theoretically grounded description of the behaviour heuristically observed in Section \ref{sec:motivating_example} and can be easily used to optimize performances (see e.g. Section \ref{sec:cond_optimal} below).

\subsection{Conditionally optimal parametrization}\label{sec:cond_optimal}
A natural and practically relevant question is what is the optimal parametrization (among the four possible choices $(\mu,\avec,\bvec)$, $(\mu,\gavec,\bvec)$,$(\mu,\avec,\etavec)$ and $(\mu,\gavec,\etavec)$) as a function of the normalized variance components $(\tvar{a},\tvar{b},\tvar{e})$.
Using the formulas of Theorem \ref{thm:rates_3} we can obtain the following explicit answers.
\begin{coroll}[Optimal parametrization for Model \ref{model:S3}]\label{coroll:optimal_param_3levels}
The rate of convergence of the Gibbs Sampler targeting Model \ref{model:S3} is minimized by the following choice of parametrization:
\begin{itemize}
\item use a centred parametrization $\etavec$ at the lowest level if and only if $\tvar{b}\geq\tvar{e}$,
\item use a centred parametrization $\gavec$ at the middle level if and only if $\tvar{a}\geq\tvar{b}+\tvar{e}$.
\end{itemize}
The resulting Gibbs Sampler has a rate of convergence $\rho$ upper bounded by $\frac{2}{3}$, with the equality $\rho=\frac{2}{3}$ holding if and only if $\tvar{a}=\tvar{b}+\tvar{e}$ and $\tvar{b}=\tvar{e}$ (in which case all parametrizations are equivalent).
\end{coroll}
Table \ref{table:optimal_param_3levels} provides a graphical representation of the decision process.
This simple rule guarantees that the resulting Gibbs Sampler has a rate of converges smaller than $\frac{2}{3}$, thus guaranteeing a high sampling efficiency for fixed variances.
\begin{table}[h!]
\begin{center}
\begin{tabular}{ l | c | c }
  & $\tvar{a}\geq \tvar{b}+\tvar{e}$ & $\tvar{a}< \tvar{b}+\tvar{e}$ \\
  \hline			
 $\tvar{b}\geq \tvar{e}$ & $(\mu,\gavec,\etavec)$ & $(\mu,\avec,\etavec)$ \\
  \hline			
 $\tvar{b}< \tvar{e}$  & $(\mu,\gavec,\bvec)$ & $(\mu,\avec,\bvec)$ \\
  \hline  
\end{tabular}\caption{Optimal parametrization for 3-levels hierarchical models as a function of the normalized variance components.}\label{table:optimal_param_3levels}
\end{center}
\end{table}
Table \ref{table:optimal_param_3levels} implies that the choice of parametrization of a given level (i.e.\ whether it is computationally convenient to use a centred or non-centred para\-me\-tri\-zation) depends on the ratio between the normalized variance at the level under consideration and the sum of the normalized variances of the levels below.
This results extend previous intuition for the two-level case (e.g.\ \cite{RobertsPapaspiliopoulosSkold2003}) to deeper hierarchical levels (in this case three levels).

Corollary \ref{coroll:optimal_param_3levels} allows for simple and effective strategies to ensure high sampling efficiency in practical implementations of Gibbs Sampling for Model \ref{model:S3} in the case of unknown variances.
Common implementations choose a parametrization $\bbeta=(\bbeta^{(0)},\bbeta^{(1)},\bbeta^{(2)})$ of the Gaussian component (for example the fully centred parametrization $\bbeta=(\mu,\gavec,\etavec)$) and alternate updating $\bbeta|(\sigma_a,\sigma_b,\sigma_e)$ with $GS(\bbeta)$ and $(\sigma_a,\sigma_b,\sigma_e)|\bbeta$ with direct sampling (which is straightforward using the conditional independence of $\sigma_a$, $\sigma_b$ and $\sigma_e$ given $\bbeta$). 
Given Corollary \ref{coroll:optimal_param_3levels}, instead, one can choose the optimal parametrization $\bbeta$ given $(\sigma_a,\sigma_b,\sigma_e)$ on-the-fly according to Table \ref{table:optimal_param_3levels}. 
This ensures that the sampling step $\bbeta|(\sigma_a,\sigma_b,\sigma_e)$ will have a high efficiency, regardless of the values of  $(I,J,K,\sigma_{a},\sigma_b,\sigma_{e})$.
Note that the additional computational cost  required by choosing the optimal parametrization according to Table \ref{table:optimal_param_3levels} at each step is negligible compared to the cost of a Gibbs Sampling iteration.

Figure \ref{fig:ACFs} compares the resulting autocorrelation functions in the context of the illustrative example of Section \ref{sec:motivating_example}, with unknown variances $(\sigma_a,\sigma_b,\sigma_e)$.
\begin{figure}[t!]
\centering
\includegraphics[width=\linewidth]{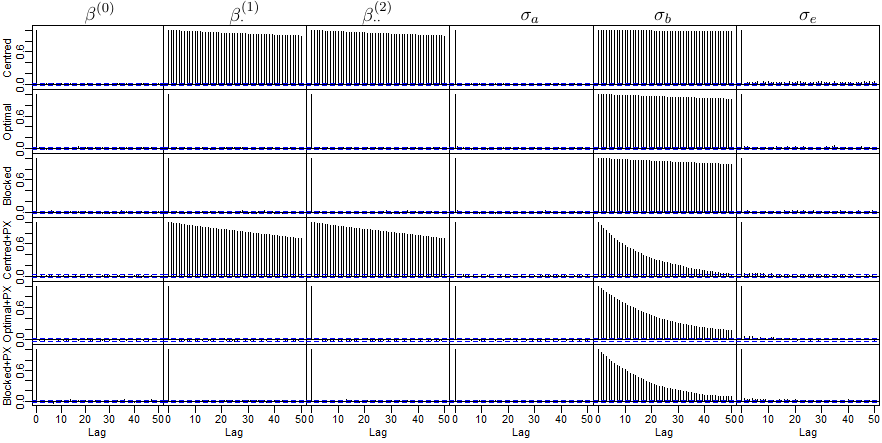}
\caption{Autocorrelation functions for the global means and the standard deviations, i.e.\
 $(\beta^{(0)},
\beta^{(1)}_{\cdot},
\beta^{(2)}_{\cdot\cdot},
\sigma_a,\sigma_b,\sigma_e)$,
under the following updating schemes, one per row:
(Centred) Gibbs Sampler with fully centred parametrization $\bbeta=(\mu,\gavec,\etavec)$;
(Optimal) Gibbs Sampler with optimal parametrization $\bbeta$ chosen on-the-fly according to Table \ref{table:optimal_param_3levels};
(Blocked) Gibbs sampler updating $\bbeta|(\sigma_a,\sigma_b,\sigma_e)$ and $(\sigma_a,\sigma_b,\sigma_e)|\bbeta$ exactly;
(+PX) combination with parameter expansion methodology.
More details in Section \ref{sec:cond_optimal}.
}
\label{fig:ACFs}
\end{figure}
We compare the Gibbs Sampler with optimal parametrization (updating $\bbeta|(\sigma_a,\sigma_b,\sigma_e)$ with $GS(\bbeta)$, where $\bbeta$ is the optimal parametrization chosen according to Table \ref{table:optimal_param_3levels}, and $(\sigma_a,\sigma_b,\sigma_e)|\bbeta$ exactly) with the centred Gibbs Sampler (updating $(\mu,\gavec,\etavec)|(\sigma_a,\sigma_b,\sigma_e)$ with $GS(0,0)$ and $(\sigma_a,\sigma_b,\sigma_e)|(\mu,\gavec,\etavec)$ exactly) and a blocked Gibbs Sampler (updating $\bbeta|(\sigma_a,\sigma_b,\sigma_e)$ and $(\sigma_a,\sigma_b,\sigma_e)|\bbeta$ exactly), which can be implemented because the distribution of $\bbeta|(\sigma_a,\sigma_b,\sigma_e)$ is multivariate Gaussian.
All schemes are then combined with the parameter expansion methodology of \citet{meng1999seeking,liu1999parameter}.
The results in Figure \ref{fig:ACFs} show that using the optimal parametrization reduces significantly the autocorrelation compared to, e.g., the fully centred one, and achieves a mixing that is basically equivalent to the one obtained by the exact blocked Gibbs Sampler.
In all cases parameter expansion helps to reduce the autocorrelation in the samples of the standard deviations $(\sigma_a,\sigma_b,\sigma_e)$.

The similarity of performances between the Gibbs Sampler with optimal para\-me\-tri\-za\-tion and the blocked one is interesting because the Gibbs update of $\bbeta|(\sigma_a,\sigma_b,\sigma_e)$ only requires univariate updates and has a potentially lower computational cost compared to a full multivariate block update of $\bbeta|(\sigma_a,\sigma_b,\sigma_e)$, which requires large matrix operations. 
While these matrix operations can be performed efficiently in the context of nested linear models (see e.g. \citealp{papaspiliopoulos2017note}), their cost becomes significantly larger for example in the context of crossed random effect models (see Section \ref{sec:crossed} below and \citealp{papaspiliopoulos2018scalable}).
Note that such a similarity of performances
is not surprising given our theoretical results above. 
In fact Corollary \ref{coroll:optimal_param_3levels} guarantees that the sampler $GS(\bbeta)$ used in the Gibbs update have a rate of convergence upper bounded by $2/3$, which is well separated from 1. 
When such updates are nested within a larger sampler (e.g the one updating $\bbeta|(\sigma_a,\sigma_b,\sigma_e)$ and $(\sigma_a,\sigma_b,\sigma_e)|\bbeta$) the difference between and exact update of $\bbeta$ and a Gibbs one with good rate of convergence can easily become negligible.

\section{Multigrid decomposition for crossed effect models}\label{sec:crossed}
Interestingly, the multigrid decomposition can be used to analyze non-nested models.
In this section we focus on the following crossed effect model.
\begin{taggedmodel}{Ck}[k-factors crossed-effects model]\label{model:k_factors}
\begin{align}
y_{i_1\dots i_k}=&
\mu+a^{(1)}_{i_1}+\dots+a^{(k)}_{i_k}+\epsilon_{i_1\dots i_k}
&i_s=1,\dots,n_s, \quad s=\,\dots,k\,,
\end{align}
with $a^{(s)}_{i_s}\stackrel{iid}\sim N(0,1/\tau_s)$ for $s\in \{1,\dots,k\}$, $\epsilon_{i_1\dots i_k}\stackrel{iid}\sim N(0,1/\tau_e)$ and $p(\mu)\propto 1$.
We denote the number of observed datapoints by $N=\prod_{s=1}^kn_s$.
\end{taggedmodel}
Similarly to Sections \ref{sec:motiv} and \ref{sec:multigrid}, we use bold letters to denote the following vectors: $\by=(y_{i_1\dots i_k})_{i_1,\dots ,i_k}$, $\ba^{(s)}=(a^{(s)}_{i_s})_{i_s}$, $\ba=(\ba^{(1)},\dots,\ba^{(k)})$ and $\ba^{(-s)}=(\ba^{(1)},\dots,\ba^{(s-1)},\ba^{(s+1)},\dots,\ba^{(k)})$.
The standard Gibbs Sampler to sample from the posterior distribution $\mathcal{L}(\mu,\ba|\by)$ of Model \ref{model:k_factors} is defined as follows.
\begin{taggedsampler}{GS-crossed}\label{sampler:k_factors_DSGS}
At each iteration
\begin{enumerate}[noitemsep,nolistsep]
\item sample $\mu$ from $\mathcal{L}(\mu|\ba,\by)$,
\item sample $\ba^{(s)}$ from $\mathcal{L}\left(\ba^{(s)}|\mu,\ba^{(-s)},\by\right)$ with $s$ going from $1$ to $k$.
\end{enumerate}
\end{taggedsampler}
Model \ref{model:k_factors} and Sampler \ref{sampler:k_factors_DSGS} have recently been analysed in \cite{papaspiliopoulos2018scalable} using the multigrid decomposition approach developed in Section \ref{sec:multigrid} of this paper to derive expressions for the convergence rate of Sampler \ref{sampler:k_factors_DSGS}.
In particular, \cite{papaspiliopoulos2018scalable} considered
the following linear functions of $\ba$ 
\begin{align}
\ma^{(s)}&=\frac{1}{n_s}\sum_{i=1}^{n_s}a^{(s)}_{i}
\quad\hbox{and}\quad
\delta\ba^{(s)}=(\ba^{(s)}-\ma^{(s)})\,.\label{eq:residuals_crossed}
\end{align}
for each $s\in\{1,\dots,k\}$ and proved the following result.
\begin{theorem}[\cite{papaspiliopoulos2018scalable}]\label{thm:multigrid_decomposition}
Let 
$$
\left((\mu,\ba)(t)\right)_{t=1}^\infty
=
\left(\mu(t),\ba^{(1)}(t),\dots,\ba^{(k)}(t)\right)_{t=1}^\infty
$$
be the Markov chain generated by Sampler \ref{sampler:k_factors_DSGS}.
Then the time-wise transformations
$
\left((\mu,\ma^{(1)},\dots,\ma^{(k)})(t)\right)_{t=1}^\infty$
and
$\left(\delta\ba^{(1)}(t)\right)_{t=1}^\infty$, \dots, $\left(\delta\ba^{(k)}(t)\right)_{t=1}^\infty$
are $(k+1)$ independent Markov chains.
Moreover, the rate of convergence of $
\left((\mu,\ba)(t)\right)_{t=1}^\infty$
is
\begin{equation}\label{eq:crossed_rate}
\rho=
\max_{s\in\{1,\dots,k\}}\frac{N\precision{e}}{N\precision{e}+n_s\precision{s}}
\,.
\end{equation}
\end{theorem}
Theorem \ref{thm:multigrid_decomposition} implies that the convergence properties of Sampler \ref{sampler:k_factors_DSGS} deteriorate as $N$ increases because $\max_{s\in\{1,\dots,k\}}(N\precision{e})^{-1}(N\precision{e}+n_s\precision{s})$ goes to 1 as $N\to\infty$. 
Motivated by this consideration, \cite{papaspiliopoulos2018scalable} propose a collapsed Gibbs Sampler that avoids such slowdown for increasing data size while preserving the same computational cost per iteration of Sampler \ref{sampler:k_factors_DSGS}. 
In the following two sections we extend the analysis of Model \ref{model:k_factors} performed in \cite{papaspiliopoulos2018scalable}, 
focusing on the role of, respectively, reparametrizations and statistical identifiability.

\subsection{Reparametrizations and crossed effects models}
In the context of nested models, reparametrization techniques based on hierarchical centering offers a way to make the Gibbs Sampler robust to large datasets (see e.g.\ Corollary \ref{coroll:optimal_param_3levels}). 
We now show that this is not the case in the crossed effects context of Model \ref{model:k_factors}.
In this section we focus on the case $k=2$, which is a case often studied theoretically in the literature (see e.g \citet{GaoOwen2017,brown2018empirical} for recent examples).
In this case, hierarchical centering leads to four possible parametrizations defined as
\begin{equation}\label{eq:centering}
(\mu,\bbeta^{(1)},\bbeta^{(2)})=
(\mu,\ba^{(1)}+(1-\lambda_1) \mu,\ba^{(2)}+(1-\lambda_2) \mu)\,,
\; \hbox{ for }(\lambda_1,\lambda_2)\in\{0,1\}^{2}\,.
\end{equation}
Each parametrization corresponds to a different Gibbs Sampler, which at each iteration updates $\mu$ from $\mathcal{L}(\mu|\bbeta^{(1)},\bbeta^{(2)},\by)$, $\bbeta^{(1)}$ from $\mathcal{L}\big(\bbeta^{(1)}|\mu,\bbeta^{(2)},\by\big)$, and $\bbeta^{(2)}$ from $\mathcal{L}\big(\bbeta^{(2)}|\mu,\bbeta^{(1)},\by\big)$.
The following result characterizes the rate of convergence $\rho_{\lambda_1\lambda_2}$ of such Gibbs Samplers for all combinations $(\lambda_1,\lambda_2)\in\{0,1\}^{2}$.
\begin{theorem}\label{thm:centering_two_factors}
Let $r_1=
\frac{N\precision{e}}{N\precision{e}+n_1\precision{1}}$ and $r_2=
\frac{N\precision{e}}{N\precision{e}+n_2\precision{2}}$. Then we have
\begin{equation}\label{eq:rates_crossed}
\begin{aligned}
\rho_{11}
&=
\max\{r_1,r_2\}
\,,\quad
&&\rho_{10}
=
1-r_2(1-r_1)\,,
\\
\rho_{01}
&=
1-r_1(1-r_2)
\,,\quad
&&\rho_{00}
\geq
1+r_1r_2-\min\{r_1,r_2\}
\}
\,.
\end{aligned}
\end{equation}
\end{theorem}

Figure \ref{fig:rates_crossed} summarizes graphically the results of Theorem \ref{thm:centering_two_factors}, showing
the dependence of the converge rates on the choice of parametrization.
The rate displayed in Figure \ref{fig:rates_crossed} for the fully centred parametrization is 
the lower bound given in \eqref{eq:rates_crossed}.
\begin{figure}[h!]
\centering
\includegraphics[width=\linewidth]{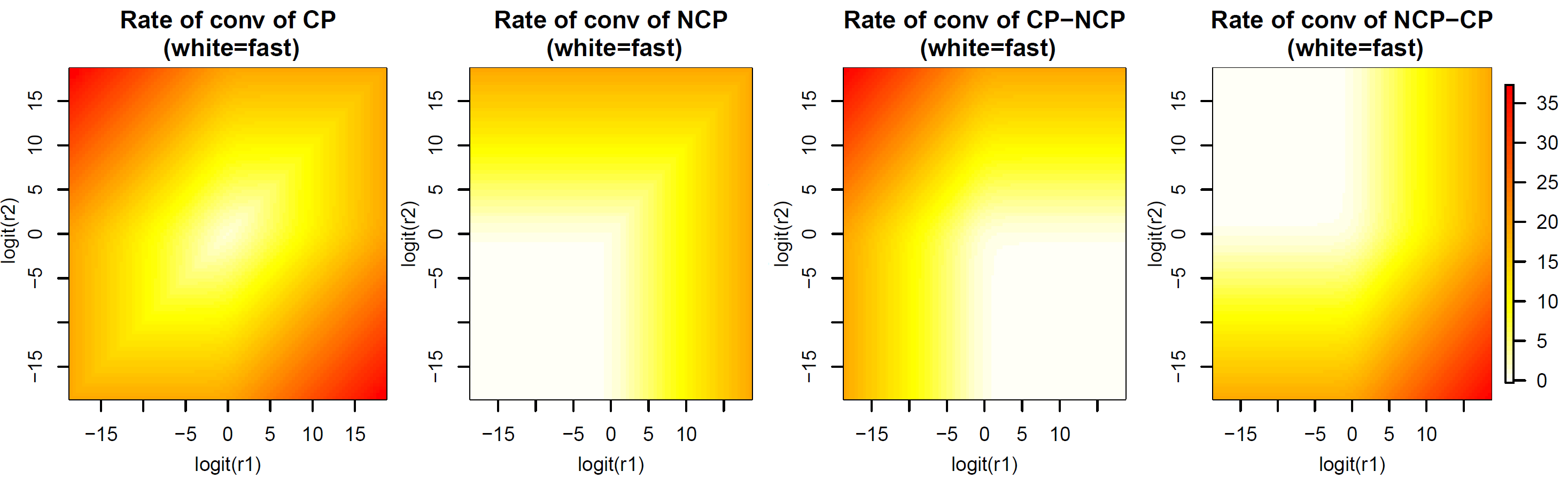}
\caption{Plot of the rates of convergence in \eqref{eq:rates_crossed}.
 Color levels correspond to values of $\log(1-\rho)$, where $\rho$ is the rate of convergence (or its lower bound for $\rho_{11}$), as a function of $\log(r_1/(1-r_1))$ and $\log(r_2/(1-r_2))$.
}
\label{fig:rates_crossed}
\end{figure}

Theorem \ref{thm:centering_two_factors} implies that centering both factors (i.e.\ setting $\lambda_1=\lambda_2=0$) is always computationally worse than any of the other parametrizations because $\rho_{00}\geq \max\{\rho_{11},\rho_{01},\rho_{10}\}$.
On the other hand, the optimal choice of $(\lambda_1,\lambda_2)$ among $(1,1)$, $(0,1)$ and $(1,0)$ depends on the specific values of $r_1$ and $r_2$. 
More precisely, the expressions in \eqref{eq:rates_crossed} imply that the convergence rate is minimized by centering the first factor (i.e. setting $\lambda_1=0$) if and only if $r_1\geq (2-r_2)^{-1}$ and centering the second factor (i.e. setting $\lambda_2=0$) if and only if $r_2\geq (2-r_1)^{-1}$.
These results are in agreement with, for example, the empirical results obtained in \citet[Sec.6]{GelfandSahuCarlin1996} and \citet{browne2004illustration}.

More crucially, Theorem \ref{thm:centering_two_factors} implies that $\min\{\rho_{00},\rho_{01},\rho_{10},\rho_{11}\}\rightarrow 1$ as
$n_1,n_2\rightarrow \infty$.
Therefore, regardless of the parametrizations chosen, the convergence of 
Gibbs Samplers targeting Model \ref{model:k_factors} deteriorate as the number of factors $n_1$ and $n_2$ increase.
This is in contrast with the nested case analysed in Section \ref{sec:multigrid}, where reparametrization techniques are successful in providing samplers with good convergence properties for all choices of hyperparameter values.
In the next section we show that a more effective way to achieve good convergence properties is to impose stronger identifiability constraints.

\subsection{Connections to statistical identifiability}\label{sec:identif}
The parameters $(\mu,\ba^{(1)},\dots,\ba^{(k)})$ in Model \ref{model:k_factors} are not identifiable, in the sense that the mapping $(\mu,\ba^{(1)},\dots,\ba^{(k)})\rightarrow \mathcal{L}(\by|\mu,\ba^{(1)},\dots,\ba^{(k)})$ in not injective.
While this is not strictly speaking an issue for Bayesian inferences, one may wonder whether imposing identifiability on model parameters results in avoiding the degradation of mixing described in previous sections (see e.g.\ \citet{vines1996fitting,gelfand1999identifiability,xie2006measures,kaufman2010bayesian,vallejos2015basics} for related discussion and some examples in applications).
We consider imposing identifiability by conditioning on some linear constraints, such as the commonly used choices of $a^{(s)}_1=0$ or $\ma^{(s)}=0$.
More generally, one can obtain identifiability for Model \ref{model:k_factors} by imposing a linear constraint $c_s=0$ for each $s$ from $1$ to $k$, where
  $c_s=\sum_{j=1}^{n_s}w_{j}^{(s)}a^{(s)}_j$ is a linear combination of $(a_{1}^{(s)},\dots, a_{n_s}^{(s)})$ weighted by some non-negative terms $(w_{1}^{(s)},\dots, w_{n_s}^{(s)})$ satisfying $\sum_{j=1}^{n_s}w_{j}^{(s)}>0$.
Interestingly, one can exploit the multigrid decomposition to derive the convergence rates of the resulting Gibbs Samplers for all choices of weights $(w_{1}^{(s)},\dots, w_{n_s}^{(s)})$.
\begin{theorem}\label{thm:rates_identifiable_general}
The rate of convergence of Sampler \ref{sampler:k_factors_DSGS} conditioned on $c_s=0$ for $s=1,\dots,k$ is given by
\begin{equation}\label{eq:modified_ratios}
\rho=
\max_{s\in\{1,\dots,k\}}\left(\frac{N\precision{e}(1-q_s)}{N\precision{e}+n_s\precision{s}}
\right)
\,,
\end{equation}
where 
$q_s=(\sum_{j=1}^{n_s}w_{j}^{(s)})^2/(n_s\sum_{j=1}^{n_s}(w_{j}^{(s)})^2)$.
\end{theorem}
Comparing \eqref{eq:modified_ratios} with \eqref{eq:crossed_rate} we can see that, since $(1-q_s)\in[0,1)$, the rate of convergence always decreases after imposing the identifiability constraints $c_s=0$ for $s=1,\dots,k$.
Thus, Theorem \ref{thm:rates_identifiable_general} implies that, in this context, imposing identifiability always improves the convergence properties of the Gibbs Sampler. To our knowledge, this is the first rigorous result of this kind in the Bayesian computation literature.
On the other hand, the result also shows that imposing identifiability per se does not guarantee fast convergence. 
For example, Theorem \ref{thm:rates_identifiable_general} implies that
the rate of convergence of Sampler \ref{sampler:k_factors_DSGS} conditioned on $a^{(s)}_1=0$ for each $s\in\{1,\dots,k\}$ is given by
\begin{equation*}
\rho=
\max_{s\in\{1,\dots,k\}}\left(\frac{N\precision{e}}{N\precision{e}+n_s\precision{s}}
\frac{n_s-1}{n_s}\right)
\,,
\end{equation*}
while the rate of convergence of Sampler \ref{sampler:k_factors_DSGS} conditioned on $\ma^{(s)}=0$ for each $s\in\{1,\dots,k\}$ equals 0, i.e. the sampler produces i.i.d.\ draws from the posterior distribution $\mathcal{L}(\mu,\ba|\by, \ma^{(1)}=\dots=\ma^{(k)}=0)$.
While in both cases we observe an improvement over the original Gibbs Sampler in terms of convergence rates, the result shows that conditioning on $a^{(s)}_1=0$ for each $s\in\{1,\dots,k\}$ leads to a convergence rate that can still go to 1 as the datasize increase.
Interestingly \eqref{eq:modified_ratios} implies that the rate of convergence is minimized when $q_s$ is maximized, which happens when the weights in the linear constraints are constant, for example $w_{j}^{(s)}=n_s^{-1}$ for all $s=1,\dots,k$ and $j=1,\dots,n_s$.

\section{Beyond the Gaussian case: a Poisson example}\label{sec:gamma_poisson}
The results of Section \ref{sec:identif} provide guidance on the choice of which linear constraint to use to impose identifiability for models also beyond the Gaussian case. As an example, we consider the following crossed random effect model with Poisson likelihood, which is the simplest analogue of Model \ref{model:k_factors} in the context of count data.
\begin{taggedmodel}{CkP}[Poisson crossed-effects model]\label{model:poisson_crossed}
\begin{align}
y_{i_1\dots i_k}\sim&
\hbox{Poisson}(\mu\, a^{(1)}_{i_1}\cdots a^{(k)}_{i_k})
&i_s=1,\dots,n_s\hbox{ for }s\in \{1,\dots,k\}
\end{align}
with $a^{(s)}_{i_s}\stackrel{iid}\sim Gamma(\alpha_s,\beta_s)$ for $s=1,\dots,k$ and 
$\mu\sim Gamma(\alpha_\mu,\beta_\mu)$.
\end{taggedmodel}
We consider sampling from the posterior distribution $\mathcal{L}(\mu,\ba|\by)$ of Model \ref{model:poisson_crossed} using
the standard Gibbs Sampler that, similarly to  
Sampler \ref{sampler:k_factors_DSGS}, 
at each iteration updates $\mu$ from $\mathcal{L}(\mu|\ba,\by)$ and then $\ba^{(s)}$ from $\mathcal{L}\left(\ba^{(s)}|\mu,\ba^{(-s)},\by\right)$ for $s=1,\dots,k$.
Here $\by$, $\ba$ and $\ba^{(-s)}$ are defined as in the beginning of Section \ref{sec:crossed}. 

We explore the extent to which the conclusions drawn from Theorem \ref{thm:rates_identifiable_general} apply also to Model \ref{model:poisson_crossed} by means of simulations. 
We consider the case $k=2$ with three different combinations of values of $n_1$ and $n_2$, with data $(y_{i_1\dots i_k})_{i_1\dots i_k}$ simulated from the model. For the prior hyperparameters we use $\alpha_1=\alpha_2=\alpha_\mu=2$ and $\beta_1=\beta_2=\beta_\mu=0.1$.
We compare the standard Gibbs Sampler with no constraints, with the versions obtained by imposing the linear constraint $a^{(1)}_1=a^{(2)}_1=1$ and 
$\ma^{(1)}=\ma^{(2)}=1$, respectively, where $\ma^{(1)}$ and $\ma^{(2)}$ are defined as in \eqref{eq:residuals_crossed}.
Table \ref{table:GP} reports the resulting  effective same sizes for various parameters (namely $\mu$, the average effective sample size of $a^{(1)}_{i}$ for $i=2,\dots,n_1$ and similarly for $a^{(2)}_{j}$ over $j=2,\dots,n_2$).
\begin{table}[ht]
\centering
\begin{tabular}{c|ccc|ccc|ccc}
\hline
  & \multicolumn{3}{c|}{$n_1=5$, $n_2=5$} & \multicolumn{3}{c|}{$n_1=5$, $n_2=100$} & \multicolumn{3}{c}{$n_1=100$, $n_2=100$} \\ 
 & $\mu$ & $a^{(1)}_{i_1}$ & $a^{(2)}_{i_2}$ & $\mu$ & $a^{(1)}_{i_1}$ & $a^{(2)}_{i_2}$ & $\mu$ & $a^{(1)}_{i_1}$ & $a^{(2)}_{i_2}$ \\ 
  \hline
Unconstrained & 273 & 287 & 268 & 273 & 314 & 280 & 287 & 351 & 255 \\ 
  $a^{(1)}_1=a^{(2)}_1=1$ & 1483 & 2741 & 1656 & 506 & 33614 & 507 & 372 & 553 & 300 \\ 
$\ma^{(1)}=\ma^{(2)}=1$
 & 52405 & 52537 & 49847 & 46404 & 49634 & 50134 & 43869 & 50682 & 51259 \\ 
   \hline
\end{tabular}
\caption{Effective sample sizes for the standard unconstrained Gibbs Sampler for Model \ref{model:poisson_crossed}, and for the two version where identifiability is obtained by imposing the constraints $a^{(1)}_1=a^{(2)}_1=1$ and $\ma^{(1)}=\ma^{(2)}=1$, respectively.
Effective sample sizes 
 correspond to $10^5$ iterations of each algorithm, with the first half of the samples discarded as burn-in.
}
\label{table:GP}
\end{table}
The results in Table \ref{table:GP} are very much coherent with the theoretical results obtained in the Gaussian case in Theorem \ref{thm:rates_identifiable_general}. In particular, imposing identifiability always improves mixing of the Gibbs Sampler, and imposing constraints on $\ma^{(1)}$ and $\ma^{(2)}$ leads to a sampler with faster convergence compared to imposing constraints on $a^{(1)}_1$ and $a^{(2)}_1$.
Moreover, the difference in resulting efficiency between the two set of linear constraints increases with $n_1$ and $n_2 $, as suggested by Theorem \ref{thm:rates_identifiable_general}. 

\subsection{Comparison with Hamiltonian Monte Carlo}\label{sec:HMC}
Finally, we also explore whether the results in Theorem \ref{thm:rates_identifiable_general} are useful also to guide the implementation of other MCMC schemes targeting Model \ref{model:poisson_crossed}, such as Hamiltonian Monte Carlo (HMC) \citep{neal2011mcmc} and the No-U-Turn Sampler (NUTS) \citep{hoffman2014no} implemented in the widely used software STAN \citep{carpenter2017stan}.
We consider the same setting of the rightmost column of Table \ref{table:GP} where $n_1=n_2=100$.
\begin{table}[ht]
\centering
\begin{tabular}{c|ccc|c|ccc}
\hline
  & \multicolumn{3}{c|}{Effective Sample Size (ESS)} & \multicolumn{1}{c|}{Runtime} & \multicolumn{3}{c}{ESS/time [1/s]}\\ 
 & $\mu$ & $a^{(1)}_{i_1}$ & $a^{(2)}_{i_2}$ &  & $\mu$ & $a^{(1)}_{i_1}$ & $a^{(2)}_{i_2}$ \\ 
  \hline
HMC (unconstrained)& 		530 & 545 & 518 & 					5697s & 0.09 & 0.10 & 0.09 \\ 
HMC ($a^{(1)}_1=a^{(2)}_1=1$) & 	1661 & 1611 & 3907 & 		1737s & 0.96 & 0.93 & 2.2 \\ 
HMC ($\ma^{(1)}=\ma^{(2)}=1$) & 1369 & 1263 & 1897 & 1459s & 0.94 & 0.87 & 1.3 \\ 
\hline
  NUTS (unconstrained)& 	30 & 18 & 28 & 					906s & 0.03 & 0.02 & 0.03 \\ 
  NUTS ($a^{(1)}_1=a^{(2)}_1=1$) & 	108 & 176 & 696 & 		312s & 0.35 & 0.57 & 2.2 \\ 
  NUTS ($\ma^{(1)}=\ma^{(2)}=1$) & 	8000 & 19914 & 20062 & 	134s & 60 & 149 & 150 \\ 
   \hline
Gibbs (unconstrained) & 	
21 & 27 & 4.3 & 	0.91s & 
24 & 30 & 4.7\\ 
Gibbs ($a^{(1)}_1=a^{(2)}_1=1$) & 	
   27 & 27 &  5.6 &
    	0.92s & 
29 & 30 & 6.1\\ 
   Gibbs ($\ma^{(1)}=\ma^{(2)}=1$) & 	
8000 & 7992 & 8037 & 	1.00s & 
8000 & 7992 & 8037\\ 
   \hline
\end{tabular}
\caption{Comparison of HMC, NUTS and the Gibbs Sampler for Model \ref{model:poisson_crossed} with and without linear constraints. 
The values of the effective sample sizes and the runtimes refer to a single run of each algorithm  
for $10^4$ iterations, with the first 2000 iterations discarded as burn-in. 
}
\label{table:GP_stan}
\end{table}
Table \ref{table:GP_stan} reports effective sample sizes (ESS), runtime and ESS per unit of computation time for HMC and NUTS, and compare those to the Gibbs Sampler one (see also the supplementary material for traceplots and autocorrelation functions).
Table \ref{table:GP_stan} suggests that imposing identifiability through linear constraints as suggested by Theorem \ref{thm:rates_identifiable_general} helps significantly also gradient-based sampler such as HMC and NUTS, both by
speeding up convergence (higher ESS) and by reducing the cost per iteration (lower runtime).
We expect the reduction in runtime for HMC and NUTS to arise from the adaptation of the tuning parameters performed by STAN (e.g.\ step-size and number of leapfrog steps within each iteration) so that for better identified and less correlated targets (as the one with the linear constraints) the number of required leapfrog steps per iteration is lower, leading to a reduction in runtime.
Overall, imposing identifiability results in a significantly higher sampling efficiency: for example, when the constraints $\ma^{(1)}=\ma^{(2)}=1$ is imposed, NUTS is over three orders of magnitude more efficient than in the unconstrained version.
Also, especially for NUTS, the constraints $\ma^{(1)}=\ma^{(2)}=1$ lead to a more efficient sampler than the ones with $a^{(1)}_1=a^{(2)}_1=1$, which is analogous to the results of 
 Theorem \ref{thm:rates_identifiable_general}. 
In general, Table \ref{table:GP_stan} supports the fact that the results of Theorem \ref{thm:rates_identifiable_general} can also provide useful guidance to derive significantly more efficient implementations of gradient-based MCMC algorithms.
Finally, note that the runtime of the Gibbs Sampler is orders of magnitude lower than the one HMC and NUTS, suggesting that for random effect models such as Model \ref{model:poisson_crossed} gradient-based schemes can be much more costly than Gibbs-type schemes, which exploit more directly the conditional independence among random effects.
We leave a more detailed investigation of these aspects in the context of more general and complex models to future work.

All simulations reported in Tables  \ref{table:GP} and \ref{table:GP_stan} were performed on the same desktop computer with 16GB of RAM and an 
Intel core i7-7700 @ 3.60 GHz processor, using the R programming language \citep{R}.
Effective sample sizes are estimated using the \emph{mcmcse} R package \citep{mcmcse}.
The supplementary material provides the R code used to implement the Gibbs Samplers and the Stan code used to specify the models.
\begin{rmk}\label{rmk:gamma_poisson}
Interestingly, the multigrid decomposition can be applied also to Model \ref{model:poisson_crossed}, with the appropriate modifications. 
In this case the Markov chain $\left((\mu,\ba)(t)\right)_{t=1}^\infty$
induced by the Gibbs Sampler can be decomposed into $(k+1)$ independent Markov chains 
$
\left((\mu,\tilde{a}^{(1)},\dots,\tilde{a}^{(k)})(t)\right)_{t=1}^\infty$
and
$\big(\tilde{\delta}\ba^{(1)}(t)\big)_{t=1}^\infty$, \dots, $\big(\tilde{\delta}\ba^{(k)}(t)\big)_{t=1}^\infty$,
where
$\tilde{a}^{(s)}=\sum_{i_s}a^{(s)}_{i_s}$ and
$\tilde{\delta} a^{(s)}_{i_s}=a^{(s)}_{i_s}/\tilde{a}^{(s)}$.
In this case the rate of convergence of the original chain coincides with the one of $\left((\mu,\tilde{a}^{(1)},\dots,\tilde{a}^{(k)})(t)\right)_{t=1}^\infty$, which evolves according to a $(k+1)$-dimensional Gibbs Sampler with full conditionals given by:
\begin{equation}\label{eq:GS_poisson_crossed}
\begin{aligned}
\mu|\by,\tilde{a}
&\sim
Gamma(\alpha_\mu+y_{\cdot},\beta_\mu+\prod_{s=1}^k \tilde{a}^{(s)} )\,,
\\
\tilde{a}^{(s)}|\by,\mu,\tilde{a}^{(-s)}
&\sim
Gamma(I\alpha_s+y_{\cdot},\beta_s+\mu \prod_{\ell\neq s}^k \tilde{a}^{(\ell)})
\qquad\hbox{for } s\in\{1,\dots,k\}\,,
\end{aligned}
\end{equation}
where $y_{\cdot}=\sum_{i_1,\dots,i_k}y_{i_1\dots i_k}$.
We expect such a $(k+1)$-dimensional Gibbs Sampler to be potentially amenable to analysis using the framework of iterated random functions \citep{diaconis1999iterated},
in order to obtain an upper bound on convergence rates (see e.g.\ \citealp[Theorem 2.1.(b)]{alsmeyer2001limit}).
We leave these extensions to future works and mention it in Section \ref{sec:conclusion} as a possible avenue for future research directions.
\end{rmk}

\section{Non-symmetric hierarchical models}\label{sec:bespoke}
Section \ref{sec:multigrid} describes how to choose the optimal parametrization as a function of $(I,J,K,\sigma_{a},\sigma_b,\sigma_{e})$ for Model \ref{model:S3}.
In general, both the variance terms $\var{b}$ and $\var{e}$, and the number of branches $J$ and $K$ in the hierarchy could depend on $i$ and $j$.
In this section we consider such non-symmetric cases for two and three level hierarchical models.
In these non-sym\-me\-tric cases the computationally optimal strategy will involve centering some branches of the hierarchy and non-centering others: we will refer to these strategies as \emph{bespoke parametrizations}.

Consider the following non-symmetric 2-levels model (which we describe in terms of precisions rather than variances for notational convenience).
\begin{taggedmodel}{NS2}[Non-symmetric 2-levels hierarchical model]\label{model:2_nonsymmetric}
Consider the following 2-levels model with centred parametrization
\begin{align*}
\qquad p(\mu)&\propto 1&\\
\qquad \gamma_i&\sim N(\mu,1/\precision{a}) &&i=1,\dots,I,\\
\qquad y_{ij}&\sim N(\gamma_{i},1/\precision{e,i}) &&j=1,\dots,J_{i},
\end{align*}
where the precision components $(\precision{a},(\precision{e,i})_i)$ are assumed to be known.
\end{taggedmodel}
\cite{RobertsPapaspiliopoulosSkold2003}
studied the symmetric version of Model \ref{model:2_nonsymmetric},
where $J_i=J$ and 
$\precision{e,i}=\precision{e}$ for all 
$i$ and some fixed $J$ and $\precision{e}$.
They showed that the rates of convergence induced by the centred and non-centred parametrizations are given respectively by 
\begin{equation}\label{eq:rate_2_symmetric}
\rho_{CP}=\frac{\precision{a}}{\precision{a}+\tprec{e}}
\quad\hbox{ and }\quad
\rho_{NCP}=\frac{\tprec{e}}{\precision{a}+\tprec{e}}
\,,
\end{equation}
where $\tprec{e}=J\precision{e}$.
The following Theorem provides an extension to the general non-symmetric case.
We consider Model \ref{model:2_nonsymmetric} with a bespoke parametrization 
$(\mu,\beta_1,\dots,\beta_I)$ defined by $I$ indicators $(\lambda_1,\dots,\lambda_I)\in\{0,1\}^I$ as $\beta_i=\gamma_i-\lambda_i\mu$, meaning that $\lambda_i$ equals 0 if component $i$ is centred and 1 if it is non-centred.
\begin{theorem}\label{thm:bespoke_parametrizations_2}
The rate of convergence of the Gibbs Sampler targeting Model \ref{model:2_nonsymmetric} with bespoke parametrization given by $(\lambda_1,\dots,\lambda_I)\in\{0,1\}^I$ is
\begin{equation}\label{eq:rate_2_heterogenous}
\rho_{\lambda_1\dots \lambda_I}
=
\frac{
\sum_{i\,:\,\lambda_i=1}\tprec{i}\frac{\tprec{i}}{\tprec{i}+\precision{a}}
+
\sum_{i\,:\,\lambda_i=0}\precision{a}\frac{\precision{a}}{\tprec{i}+\precision{a}}
}{
\sum_{i\,:\,\lambda_i=1}\tprec{i}+\sum_{i\,:\,\lambda_i=0}\precision{a}
}\,,
\end{equation}
where $\tprec{i}=J_i\precision{e,i}$.
\end{theorem}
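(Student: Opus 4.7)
The plan is to reduce the bespoke Gibbs sampler to a two-block Gaussian sampler on $(\mu,\bbeta)$ with $\bbeta=(\beta_1,\dots,\beta_I)$, and then invoke the standard Gaussian two-block rate formula. Since conditional on $\mu$ the $\beta_i$'s are mutually independent (each depending on the data only through the sufficient statistic $\bar y_{i\cdot}=J_i^{-1}\sum_j y_{ij}$, which satisfies $\bar y_{i\cdot}\mid\gamma_i\sim N(\gamma_i,1/\tprec{i})$ with $\tprec{i}=J_i\precision{e,i}$), updating the $\beta_i$'s jointly given $\mu$ is identical to the componentwise scheme defining the sampler. The Gibbs scheme is therefore exactly a two-block sampler on the Gaussian posterior with blocks $\mu$ and $\bbeta$.

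Next I would invoke the well known fact (as used, for instance, in the analysis of Roberts and Sahu) that for a deterministic-scan two-block Gibbs sampler on a Gaussian target, the $L^2$ rate of convergence equals the maximum squared canonical correlation between the two blocks under the stationary distribution. Since the first block $\mu$ is one dimensional, this reduces to the posterior coefficient of determination
\[
\rho_{\lambda_1\cdots\lambda_I}
\;=\;
1-\frac{\mathrm{Var}(\mu\mid\bbeta,\y)}{\mathrm{Var}(\mu\mid\y)}\,,
\]
so the task becomes the computation of these two posterior variances.

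For the marginal posterior of $\mu$, integrating out $\gamma_i$ gives $\bar y_{i\cdot}\mid\mu\sim N(\mu,1/\precision{a}+1/\tprec{i})$, so under the flat prior $\mathrm{Prec}(\mu\mid\y)=\sum_i\precision{a}\tprec{i}/(\precision{a}+\tprec{i})$. For the conditional precision I would treat each branch separately: if $\lambda_i=0$, then $\beta_i=\gamma_i\mid\mu\sim N(\mu,1/\precision{a})$ contributes information with precision $\precision{a}$, while $\bar y_{i\cdot}$ becomes uninformative about $\mu$ once $\beta_i$ is fixed; if $\lambda_i=1$, then $\beta_i=\gamma_i-\mu$ is independent of $\mu$ a priori, but $\bar y_{i\cdot}-\beta_i\mid\mu,\beta_i\sim N(\mu,1/\tprec{i})$ contributes precision $\tprec{i}$. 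Adding the contributions yields $\mathrm{Prec}(\mu\mid\bbeta,\y)=\sum_{i:\lambda_i=0}\precision{a}+\sum_{i:\lambda_i=1}\tprec{i}$.

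A short algebraic step using the identity
\[
\tprec{i}-\frac{\tprec{i}^2}{\precision{a}+\tprec{i}}
\;=\;
\frac{\precision{a}\tprec{i}}{\precision{a}+\tprec{i}}
\;=\;
\precision{a}-\frac{\precision{a}^2}{\precision{a}+\tprec{i}}
\]
then recasts $1-\mathrm{Prec}(\mu\mid\y)/\mathrm{Prec}(\mu\mid\bbeta,\y)$ into the branch-dependent form in the theorem statement. The only genuine obstacle is justifying that the two-block Gaussian rate formula still applies when the $\tprec{i}$ are heterogeneous (as opposed to the symmetric setting treated by Roberts, Papaspiliopoulos and Sköld); but joint Gaussianity of the posterior is what is needed and is preserved here, after which the remainder of the argument is routine bookkeeping. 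As a sanity check, setting $\tprec{i}\equiv\tprec{e}$ and either $\lambda\equiv0$ or $\lambda\equiv1$ recovers the formulas in \eqref{eq:rate_2_symmetric}.
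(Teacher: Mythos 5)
Your proposal is correct, and it reaches \eqref{eq:rate_2_heterogenous} by a genuinely different route from the paper. The paper works directly with the $(I+1)$-dimensional precision matrix of $(\mu,\beta_1,\dots,\beta_I)\mid\y$: it forms the $A$-matrix $A_{\alpha_1\alpha_2}=-Q_{\alpha_1\alpha_2}/Q_{\alpha_1\alpha_1}$, which has an arrow structure (only the $\mu$ row and column are nonzero), applies Theorem 1 of \cite{RobertsSahu1997} to get the rate as the spectral radius of $B=(\I_{I+1}-L)^{-1}U$, and observes that $B$ is an outer product with $I$ zero eigenvalues and a single nonzero eigenvalue $\sum_i A_{\mu\beta_i}A_{\beta_i\mu}$, which is the stated formula. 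You instead exploit the conditional independence of the $\beta_i$ given $\mu$ and $\y$ to identify the componentwise scheme with a two-block sampler on $(\mu,\bbeta)$, invoke the maximal squared canonical correlation characterization of the two-block Gaussian Gibbs rate (which, since $\mu$ is scalar, collapses to $1-\mathrm{Var}(\mu\mid\bbeta,\y)/\mathrm{Var}(\mu\mid\y)$), and then compute the two precisions by per-branch information accounting; your algebraic identity correctly converts this into the branch-dependent form of the theorem, and your sanity check against \eqref{eq:rate_2_symmetric} is valid. Your route is more statistical and arguably more illuminating --- it exposes the rate as one minus a ratio of marginal to conditional precision of $\mu$, which dovetails with the per-branch decomposition used later in \eqref{eq:centering_i} --- at the cost of having to justify the two-block rate formula in the non-symmetric Gaussian setting; as you note, only joint Gaussianity is needed there, and in any case that formula is itself a consequence of the same Roberts--Sahu result the paper uses (the spectral radius of $(\I-L)^{-1}U$ for a two-block Gaussian sampler is exactly the largest squared canonical correlation), so the two arguments coincide at that point. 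The paper's eigenvalue computation is more mechanical but generalizes without modification to settings where the blocks are not conditionally independent given $\mu$.
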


Equation \eqref{eq:rate_2_heterogenous} shows that in the non-symmetric case, the GS rate of convergence is given by a weighted average of the precision ratios $\frac{\precision{a}}{\tprec{i}+\precision{a}}$ and $\frac{\tprec{i}}{\tprec{i}+\precision{a}}$ depending on whether each component is centred or not.
This has clear analogies with the symmetric case in \eqref{eq:rate_2_symmetric}.
The weights in the average of \eqref{eq:rate_2_heterogenous} are themselves a function of $(\lambda_1,\dots,\lambda_I)$, thus introducing dependence across components in terms of centering and the overall convergence rate.
Nonetheless, the following corollary shows that even in the context of Model \ref{model:2_nonsymmetric},
the choice of optimal parametrization in each branch of the three can be carried out independently following the same intuition of the symmetric case: 
for each $i$ in $\{1,\dots,I\}$ use centred parametrization $\gamma_i$ if and only if $\precision{a}\leq J_i\precision{e,i}$, otherwise use a non-centred parametrization $a_i=\gamma_i-\mu$.
Note that the optimal choice on each branch of the hierarchy can be taken independently of other branches, which make the implementation easy (compared to a scenario where the optimal decision on each branch was influenced by other branches).
\begin{coroll}\label{coroll:optimal_param_2_levels}
Let $\bar{\lambda}_i=\1(\precision{a}>\tprec{i})$ for all $i$ from 1 to $I$. Then 
$$
\rho_{\bar{\lambda}_1\dots \bar{\lambda}_{I}}
\leq
\rho_{\lambda_1\dots \lambda_I}
\qquad\hbox{for any }(\lambda_1\dots \lambda_I)\in\{0,1\}^I\,.
$$
\end{coroll}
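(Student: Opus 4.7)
The plan is to reduce the $2^I$-way optimization over $\lambda\in\{0,1\}^I$ to $I$ independent one-dimensional decisions, each resolved by the sign of $\precision{a}-\tprec{i}$. The key ingredient, equation \eqref{eq:centering_i}, has already been derived in the text; I take it as given and use it coordinate-wise.

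First I would verify that the right-hand side of \eqref{eq:centering_i} has a sign depending only on $\precision{a}-\tprec{i}$, and not on the other $\lambda_j$'s. Since $\precision{a}$ and $\tprec{i}=J_i\precision{e,i}$ are strictly positive and $\rho_{-i}$ is a sum of products of nonnegative quantities (precisions multiplied by ratios in $[0,1]$), both factors $(\rho_{-i}+\precision{a})$ and $(\rho_{-i}+\tprec{i})$ in the denominator are strictly positive, as is the prefactor $\frac{\precision{a}\tprec{i}}{\precision{a}+\tprec{i}}$. Hence the sign of the single-coordinate difference $\rho_{\lambda_1\dots 0\dots\lambda_I}-\rho_{\lambda_1\dots 1\dots\lambda_I}$ coincides with the sign of $\precision{a}-\tprec{i}$, uniformly over all choices of the remaining coordinates.

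Second, I would extract a coordinate-wise optimality statement: for any configuration $\lambda$ and any index $i$, replacing $\lambda_i$ by $\bar{\lambda}_i=\1(\precision{a}>\tprec{i})$ weakly decreases $\rho$. Indeed, if $\precision{a}>\tprec{i}$ then the flip from $0$ to $1$ strictly decreases $\rho$ by \eqref{eq:centering_i}; if $\precision{a}<\tprec{i}$ then the flip from $1$ to $0$ does; and if $\precision{a}=\tprec{i}$ the two choices of $\lambda_i$ yield exactly the same $\rho$. In all cases, swapping $\lambda_i\to\bar{\lambda}_i$ cannot increase $\rho$.

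Third, I would conclude by iterating. Starting from an arbitrary $\lambda\in\{0,1\}^I$, flip coordinate $1$ to $\bar{\lambda}_1$, then coordinate $2$ to $\bar{\lambda}_2$, and so on up to $I$. Each step weakly decreases $\rho$ (independently of the current state of the other coordinates, by the previous step), and after $I$ flips we reach $\bar{\lambda}$. Therefore $\rho_{\bar{\lambda}_1\dots\bar{\lambda}_I}\leq \rho_{\lambda_1\dots\lambda_I}$, which is the claim. The only conceptual subtlety worth spelling out—and it is not really an obstacle once \eqref{eq:centering_i} is in hand—is that although $\rho_{\lambda_1\dots\lambda_I}$ is not separable in the $\lambda_i$'s (the weights in the average \eqref{eq:rate_2_heterogenous} depend globally on $\lambda$), coordinate-wise descent nonetheless finds the global minimum because the sign of each single-coordinate improvement is decoupled from the other coordinates.
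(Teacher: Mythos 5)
Your proposal is correct and follows essentially the same route as the paper: the paper derives \eqref{eq:centering_i}, observes that the sign of the single-coordinate difference is that of $\precision{a}-\tprec{i}$ (since $\rho_{-i}\geq 0$ makes the denominator positive regardless of the other coordinates), and deduces the corollary exactly by the coordinate-wise argument you spell out. Your write-up merely makes explicit the iteration over coordinates that the paper leaves implicit.
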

By \eqref{eq:rate_2_heterogenous}, the strategy described in Corollary \ref{coroll:optimal_param_2_levels} ensures that $
\rho_{\bar{\lambda}_1\dots \bar{\lambda}_{I}}
\leq 1/2$.
This is the same upper bound one can obtain in the symmetric case (see \eqref{eq:rate_2_symmetric} and \cite{RobertsPapaspiliopoulosSkold2003}), meaning that in this case bespoke parametrizations are successful in dealing with the lack of symmetry.

Consider now the three-level non-symmetric case.
\begin{taggedmodel}{NS3}[Non-symmetric 3-levels hierarchical model]\label{model:3_nonsymmetric}
Consider a more general 3-levels model with centred parametrization
\begin{align*}
\qquad p(\mu)&\propto 1&\\
\qquad \gamma_i&\sim N(\mu,\var{a}) &&i=1,\dots,I,\\
\qquad \eta_{ij}&\sim N(\gamma_i,\var{b,i}) &&j=1,\dots,J_i,\\
\qquad y_{ijk}&\sim N(\eta_{ij},\var{e,ij}) &&k=1,\dots,K_{i,j},
\end{align*}
where variance components are assumed to be known.
\end{taggedmodel}
In this case the multigrid factorization of Theorem \ref{thm:factorization_3} does not apply directly to Model \ref{model:3_nonsymmetric}, but nonetheless it can still be used to obtain upper bounds on the rates of convergence.
\begin{theorem}\label{thm:rates_3_uneven_CP}
Given an instance of Model \ref{model:3_nonsymmetric} we define
$$
\ratio{a}{b}^{(i)}
=
\frac{\var{a}}{\var{a}+J_i^{-1}\var{b}},
\quad\hbox{and}\quad
\ratio{e}{b}^{(i)}=\frac{1}{J_i}\sum_{j=1}^{J_i}
\frac{K_{ij}^{-1}\var{e,ij}}{\var{b,i}+K_{ij}^{-1}\var{e,ij}}\,.
$$
If $\ratio{a}{b}^{(i)}\geq\ratio{a}{b}^{(i')}\ratio{e}{b}^{(i')}$ for every $i,i'\in\{1,\dots,I\}$,
 then the rate of convergence of the Gibbs Sampler with centred parametrization $(\mu,\gavec,\etavec)$ satisfies
$$
\rho
\;\leq\;
1-
\frac{1}{I}
\sum_{i=1}^{I}
\ratio{a}{b}^{(i)}
+
\max_{i=1,\dots,I}
\ratio{a}{b}^{(i)}
\ratio{e}{b}^{(i)}\,.
$$
\end{theorem}
The results of Theorem \ref{thm:rates_3_uneven_CP} suggest that as the number of datapoints increase the efficiency of the Gibbs sampler with centred parametrization increases.
In fact, as $K_{ij}$ increases the assumptions of Theorem \ref{thm:rates_3_uneven_CP} are eventually satisfied and the bound on the convergence rate goes to $0$ as $J_i$ and $K_{ij}$ increase.
Theorem \ref{thm:rates_3_uneven_CP} provides rigorous theoretical support and characterization of the well known fact that the centred parametrization is to be preferred in contexts of large and informative datasets \citep{GelfandSahuCarlin1995,RobertsPapaspiliopoulosSkold2003}.
We note that the convergence rate for the Gibbs Sampler targeting Model \ref{model:3_nonsymmetric} is not easily tractable, and that deriving analytic expressions for the optimal bespoke parametrization in this context is still an open problem.
\section{Hierarchical linear models with arbitrary number of levels}\label{sec:k_levels}
In this section we consider Gaussian hierarchical models with an arbitrary number of levels, namely $k$ levels.
We refer to the highest level of the hierarchy (i.e. the one furthest away from the data) as level 0, down to level $k-1$ being the lowest level (i.e. closest to the data).
The 3 level model of Section \ref{sec:multigrid} is a special case of the theory developed here where $k=3$.

\subsection{Model formulation}
In order to allow for more generality and keep the notation concise, in this section we will use a graphical models notation.
In particular $T$ will denote a finite tree with $k$ levels and root $t_0\in T$. 
For each node $t\in T$ we will denote by $pa(t)$ the unique parent of $t$ and by $ch(t)$ the collection of children of $t$. 
Moreover we write $s\preceq t$ and $s\succeq t$ if $s$ is respectively an ancestor or a descendant of $t$ (with $s$ and $t$ possibly being equal) while $s\prec t$ and $s\succ t$ denote the same notions with the additional condition of $s\neq t$.
For each node $t\in T$ we denote by $\ell(t)$ the level of node $t$ (i.e. its distance from $t_0$).
For each $d\in\{0,\dots,k-1\}$ we denote by $T_d=\{t\in T\,:\,\ell(t)=d\}$ the collection of nodes at level $d$. For example we have $T_0=\{t_0\}$ and $T=\cup_{d=0}^{k-1} T_d$.
Noisy observations will occur only at leaf nodes.
The collection of leaf nodes is denoted as $T_L=\{t\in T\,:\,ch(t)=\emptyset\}$.
For simplicity we assume that all leaf nodes are at level $k-1$, i.e. $T_L=T_{k-1}$, although this assumption could be easily relaxed allowing some branches to be longer than others.

\begin{taggedmodel}{NSk}[$k$-levels hierarchical model]\label{model:NSk} 
Suppose that we have a hierarchy described by a tree $T$ with observations occurring at leaf nodes 
$T_L$.
We assume the following hierarchical model
\begin{align}
y^{(i)}_{t}&\sim N(\gamma_t,
1/\precision{t}^{(e)}
)
&t\in T_L\,,\\
\gamma_{t}&\sim N(\gamma_{pa(t)},
1/\precision{t}
)
&t\in T\backslash{t_0},
\end{align}
where $i\in\{1,\dots,n_t\}$ with $n_t$ being the number of observed data at leaf node $t$, $(\precision{t})_{t\in T\backslash t_0}$ and $(\precision{t}^{(e)})_{t\in T_L}$ are known precision components and all normal random variables are sampled independently.
Following the standard Bayesian model specification we assume a flat prior on $\gamma_{t_0}$.
\end{taggedmodel}
We are interested in sampling from the posterior distribution of $\bgamma_T=(\gamma_t)_{t\in T}$ given some observations $\textbf{y}=(y_t)_{t\in T_L}$.
The centred parametrization $\bgamma_T$ of Model \ref{model:NSk} leads to the following Gibbs Sampler.
\begin{taggedsampler}{GS($\bgamma_T$)}\label{sampler:GSkCP}
Initialize $\bgamma_T(0)$ and then iterate the following kernel:\\
For $d=0,\dots,k-1$, 
sample $\gamma_t(s+1)$ from 
$p(\gamma_t|\bgamma_{T_{d-1}}(s+1),\bgamma_{T_{d+1}}(s),\yvec)$
for all $t\in T_d$,
where $p(\gamma_t|\bgamma_{T_{d-1}},\bgamma_{T_{d+1}},\yvec)=p(\gamma_t|\bgamma_{T\backslash t},\yvec)$ is the full conditional distribution of $\gamma_t$ given by Model \ref{model:NSk}.
When $d$ equals 0 or $k-1$ the levels $\bgamma_{T_{d-1}}$ and $\bgamma_{T_{d+1}}$ have to be replaced by empty sets in the conditioning. 
\end{taggedsampler}

\subsection{Non centering and hierararchical reparametrizations}\label{sec:PNCP_k_levels}
Model \ref{model:NSk}  expresses Gaussian hierarchical models using a centred pa\-ra\-me\-tri\-za\-tion.
The corresponding non-centred version is given by the following example.
\begin{ex}[Fully non-centred parametrization]\label{ex:NCPk}
Under the same setting as Model \ref{model:NSk}, define
\begin{align*}
y^{(i)}_{t}&\sim N\Big(\sum_{r\preceq t}\alpha_r,
1/\precision{t}^{(e)}
\Big)
&t\in T_L,\\
\alpha_{t}&\sim N(0,
1/\precision{t}
)
&t\in T\backslash{t_0},
\end{align*}
and assume a flat prior on $\alpha_{t_0}$.
\end{ex}
The non-centred parametrization $\balpha_T$ can be obtained as a linear transformation of the centred version $\bgamma_T$ of Model \ref{model:NSk}.
More generally, we will consider the class of parametrizations that can be obtained by reparametrizing  $\bgamma_T$ as follows.
\begin{defi}[Hierarchical reparametrizations]\label{defi:hierarchical_reparametrization}
Let $\bgamma_T=(\gamma_t)_{t\in T}$ be a random vector with elements indexed by a tree $T$.
We say that $\bbeta_T=(\beta_t)_{t\in T}$ is a hierarchical (linear) reparametrization of $\bgamma_T$ if
\begin{equation}\label{eq:hierarchical_reparametrization}
\beta_t=\sum_{r\preceq t}\lambda_{tr } \gamma_r\,\qquad t\in T,
\end{equation}
for some real-valued coefficients $\Lambda=(\lambda_{tr})_{r\preceq t,t\in T}$ satisfying $\lambda_{tt}\neq 0$ for all $t\in T$.
We denote \eqref{eq:hierarchical_reparametrization} by $\bbeta_T=\Lambda\bgamma_T$.
\end{defi}
Using terminology from \citet{RobertsPapaspiliopoulosSkold2003}, 
we refer to the family of hierarchical reparametrizations of $\gavec_T=(\gamma_t)_{t\in T}$ as \emph{partially non-centred parametrizations} (PNCP) of Model \ref{model:NSk}.
Note that \eqref{eq:hierarchical_reparametrization} does not span the space of all linear transformations of $\bgamma_T$.
In fact $\Lambda=(\lambda_{tr})_{r\preceq t,t\in T}$ can be thought as a $|T|\times|T|$ matrix $\Lambda=(\lambda_{tr})_{r,t\in T}$ inducing a linear transformation of $\bgamma_T$ with the additional sparsity constraint of being zero on all elements $\lambda_{tr}$ such that $r\npreceq t$.
The following Lemma shows that the definition of the class of PNCP does not depend on the starting parametrization used to formulate Model \ref{model:NSk}.
For example, one could equivalently define the class of PNCP of Model \ref{model:NSk} as the collection of hierarchical reparametrization of the non-centred parametrization $\balpha_T$ of Example \ref{ex:NCPk}.
\begin{lemma}\label{lemma:hierarchical_reparametrization_invertible}
If $\bbeta_T$ is a hierarchical reparame\-tri\-za\-tion of $\bgamma_T$, then also $\bgamma_T$ is a hierarchical reparametrization of $\bbeta_T$.
\end{lemma}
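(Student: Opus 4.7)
The plan is to invert the linear map $\Lambda$ by induction on the level $\ell(t)$ of the node, showing not just that $\Lambda$ is invertible but that the inverse preserves the tree-based sparsity pattern. Order the nodes consistently with $\preceq$ (for instance by level). The sparsity $\lambda_{tr}=0$ whenever $r\not\preceq t$ and the condition $\lambda_{tt}\neq 0$ mean that $\Lambda$ is lower triangular with non-zero diagonal under this ordering, so invertibility is immediate; what actually needs proof is that the inverse coefficients $\mu_{ts}$ in $\gamma_t=\sum_{s}\mu_{ts}\beta_s$ vanish unless $s\preceq t$, and that $\mu_{tt}\neq 0$.

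For the base case $\ell(t)=0$, we have $t=t_0$ and the sum in \eqref{eq:hierarchical_reparametrization} has only one term, so $\beta_{t_0}=\lambda_{t_0 t_0}\gamma_{t_0}$ and therefore $\gamma_{t_0}=\lambda_{t_0 t_0}^{-1}\beta_{t_0}$, which is a hierarchical expression with diagonal coefficient $1/\lambda_{t_0 t_0}\neq 0$. For the inductive step, fix $t$ with $\ell(t)=d\geq 1$ and assume the claim for all nodes of strictly smaller level. Isolating the diagonal term in \eqref{eq:hierarchical_reparametrization} gives
\[
\gamma_t=\frac{1}{\lambda_{tt}}\beta_t-\frac{1}{\lambda_{tt}}\sum_{r\prec t}\lambda_{tr}\gamma_r.
\]
Every $r\prec t$ satisfies $\ell(r)<d$, so by the inductive hypothesis $\gamma_r=\sum_{s\preceq r}\mu_{rs}\beta_s$ with $\mu_{rr}\neq 0$. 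Substituting and swapping the order of summation yields $\gamma_t=\sum_{s\preceq t}\mu_{ts}\beta_s$ with $\mu_{tt}=1/\lambda_{tt}\neq 0$, where the crucial point is that $s\preceq r\prec t$ implies $s\preceq t$ by transitivity of the tree partial order, so no indices outside the ancestor set of $t$ appear.

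The only potential obstacle is bookkeeping with the index sets, specifically ensuring that the substitution does not introduce $\beta_s$ for $s$ not an ancestor of $t$; this is handled cleanly by transitivity as noted above. There is no nontrivial algebraic content beyond that, so the proof is short and essentially combinatorial once the induction is set up correctly.
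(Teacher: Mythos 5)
Your proof is correct. The only point that needed care --- that after substituting the expansions of $\gamma_r$ for $r\prec t$ no index outside the ancestor set of $t$ appears, and that the diagonal coefficient survives as $1/\lambda_{tt}\neq 0$ --- is handled properly by transitivity of $\preceq$ and by noting that $\beta_t$ cannot occur in the expansion of $\gamma_r$ for any strict ancestor $r\prec t$. Your route differs from the paper's in mechanics though not in substance: the paper first proves a standalone matrix lemma (its ``triangular matrices on trees'' lemma) stating that any matrix $L$ with $L_{tr}=0$ unless $t\succeq r$ and $L_{tt}\neq 0$ is invertible with inverse of the same form, by writing $L=\mathbb{I}_T+N$ with $N$ strictly lower triangular in the tree order, observing that $N^p$ vanishes for $p\geq k$, and summing the finite Neumann series $L^{-1}=\mathbb{I}_T+\sum_{p=1}^{k-1}(-1)^pN^p$; Lemma \ref{lemma:hierarchical_reparametrization_invertible} is then an immediate corollary. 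Your level-by-level induction is the forward-substitution counterpart of that argument and is, if anything, more elementary and self-contained. What the paper's formulation buys is reusability: the same matrix lemma (and its upper-triangular twin) is invoked again in the proofs of Lemmas \ref{lemma:H_closed} and \ref{lemma:S_closed}, where one needs to track how the sparsity and symmetry classes behave under products and inverses of such matrices, so packaging the statement at the matrix level pays off later even though for the present lemma alone your argument suffices.
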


As for the 3-levels case we are interested in assessing the computational efficiency of the different Gibbs Sampling schemes arising from different PNCP's.
For each PNCP $\bbeta_T$ the corresponding Gibbs Sampler scheme $GS(\bbeta_T)$ is defined analogously to $GS(\bgamma_T)$.
\begin{taggedsampler}{GS($\bbeta_T$)}\label{sampler:GSk}
Initialize $\bbeta_T(0)$ and then iterate the following kernel:\\
For $d=0,\dots,k-1$, sample $\beta_t(s+1)$ from 
$
p(\beta_t|(\bbeta_{T_p}(s+1))_{0\leq p<d},(\bbeta_{T_p}(s))_{d<p\leq k-1},\yvec)
$
for all $t\in T_d$,
where $p(\beta_t|(\bbeta_{T_p})_{0\leq p<d},(\bbeta_{T_p})_{d<p\leq k-1},\yvec)=p(\beta_t|\bbeta_{T\backslash t},\yvec)$ is the full conditional distribution of $\beta_t$ given by Model \ref{model:NSk}.
\end{taggedsampler}

Sampler \ref{sampler:GSk} is easy to implement because the family of PNCP preserves the hierarchical structure of Model \ref{model:NSk}.
In fact, any PNCP of Model \ref{model:NSk} exhibits the following conditional independence structure:
\begin{equation}\tag{H}\label{eq:hierarchical_structure}
\beta_r\bot\beta_t|\bbeta_{T\backslash\{r,t\}}
\hbox{ unless }
r\preceq t
\hbox{ or }
t\preceq r\,.
\end{equation}
Note that this is a weaker condition than the one holding for the centred parametrization $\bgamma_T$. In the latter case, the conditional independence graph corresponds exactly to the tree $T$, meaning that if $r\neq t$
\begin{equation}\tag{T}\label{eq:tree_cond_indep}
\gamma_r\bot\gamma_t|\bgamma_{T\backslash\{r,t\}} 
\hbox{ unless }
r=pa(t)
\hbox{ or }
t=pa(r)\,.
\end{equation}
Despite being weaker than \eqref{eq:tree_cond_indep}, condition \eqref{eq:hierarchical_structure} still guarantees that all parameters at the same level are conditionally independent (thus simplifying the update of $\bbeta_{T_d}|\bbeta_{T\backslash T_d}$) and that the full conditional distribution of each $\beta_t$ depends only on the descendants or ancestors of $t$.
The following Lemma and Corollary provide a simple way to check that any PNCP of Model \ref{model:NSk} satisfies \eqref{eq:hierarchical_structure}.
\begin{lemma}\label{lemma:H_closed}
Property \eqref{eq:hierarchical_structure} is closed under hierarchical re-parametrizations, meaning that if $\bbeta_T$ satisfies \eqref{eq:hierarchical_structure} then any hierarchical re-parametrization of $\bbeta_T$ satisfies \eqref{eq:hierarchical_structure} too.
\end{lemma}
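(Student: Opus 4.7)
My plan is to translate property \eqref{eq:hierarchical_structure} into a zero-pattern condition on the Gaussian precision matrix of $\bbeta_T$, and then verify that this zero-pattern is preserved under conjugation by hierarchical matrices. Since Model \ref{model:NSk} is Gaussian and any hierarchical reparametrization $\bbeta'_T = \Lambda \bbeta_T$ is a non-degenerate linear transformation (invertibility follows from $\lambda_{tt}\neq 0$ for all $t$), the vector $\bbeta'_T$ is again jointly Gaussian. For jointly Gaussian vectors, $\beta_r \bot \beta_t \mid \bbeta_{T\setminus\{r,t\}}$ is equivalent to the $(r,t)$-entry of the precision matrix vanishing. Hence \eqref{eq:hierarchical_structure} is equivalent to: the precision matrix $P$ of $\bbeta_T$ satisfies $P_{rt}=0$ whenever $r,t\in T$ are incomparable.

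Next, the reparametrization transforms precision matrices according to $P' = M^{T} P M$ with $M:=\Lambda^{-1}$, and Lemma \ref{lemma:hierarchical_reparametrization_invertible} supplies exactly what is needed: $M$ is again hierarchical, i.e.\ $M_{sr}\neq 0$ only when $r\preceq s$. The remaining task is therefore the purely algebraic one of showing that conjugation by a hierarchical $M$ preserves the zero-pattern on incomparable index pairs.

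To this end, fix incomparable $r,t\in T$ and expand
\[
P'_{rt} \;=\; \sum_{s,u\in T} M_{sr}\, P_{su}\, M_{ut}.
\]
A nonzero summand forces $r\preceq s$ and $t\preceq u$ (from the hierarchical shape of $M$) and forces $s$ and $u$ to be comparable or equal (from the hypothesis on $P$). In either comparability case, both $r$ and $t$ end up as ancestors of a single common node: if $s\preceq u$ then $r,t \preceq u$, while if $u\preceq s$ then $r,t\preceq s$. The key combinatorial fact is that in a rooted tree the ancestors of any fixed node form a chain under $\preceq$, so $r$ and $t$ must themselves be comparable, contradicting our choice. Every term vanishes, so $P'_{rt}=0$, as required.

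I do not foresee a serious obstacle. The main leverage comes from Lemma \ref{lemma:hierarchical_reparametrization_invertible}, which delivers the hierarchical inverse $M$, combined with the chain-of-ancestors property of trees. The only small point of care is that the diagonal case $s=u$ must be allowed in the sum, but the argument handles it uniformly, since $s=u$ together with $r\preceq s$ and $t\preceq u$ still forces $r,t$ to be common ancestors of $s$ and hence comparable.
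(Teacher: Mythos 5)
Your proof is correct and follows essentially the same route as the paper: identify \eqref{eq:hierarchical_structure} with a zero-pattern on the precision matrix, invoke the fact that $\Lambda^{-1}$ is again hierarchical (the paper's Lemma \ref{lemma:triangular}/\ref{lemma:hierarchical_reparametrization_invertible}), and use the chain-of-ancestors property of rooted trees to kill every term of the conjugated precision matrix at incomparable index pairs. The only difference is presentational — you carry out the conjugation $M^{T}PM$ in a single double sum, whereas the paper factors it into two one-sided multiplications via Lemma \ref{lemma:closure_Hm}.
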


\begin{coroll}\label{coroll:PNCP}
Any PNCP $\bbeta_T$ of Model \ref{model:NSk} satisfies \eqref{eq:hierarchical_structure}.
\end{coroll}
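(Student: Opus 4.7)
The plan is to combine the centred parametrization's conditional independence structure \eqref{eq:tree_cond_indep} with the closure property established in Lemma \ref{lemma:H_closed}. By definition, every PNCP of Model \ref{model:NSk} is a hierarchical reparametrization of the centred parametrization $\bgamma_T$, so it suffices to check that $\bgamma_T$ itself satisfies \eqref{eq:hierarchical_structure} and then invoke Lemma \ref{lemma:H_closed}.

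First I would verify that \eqref{eq:tree_cond_indep} implies \eqref{eq:hierarchical_structure} for the particular vector $\bgamma_T$. Indeed, suppose $r,t\in T$ are such that neither $r\preceq t$ nor $t\preceq r$; then in particular neither $r=pa(t)$ nor $t=pa(r)$, and so by \eqref{eq:tree_cond_indep} we have $\gamma_r\bot\gamma_t\mid\bgamma_{T\backslash\{r,t\}}$. This gives the required conditional independence in \eqref{eq:hierarchical_structure} applied to $\bgamma_T$. The key observation here is simply that \eqref{eq:tree_cond_indep} is strictly stronger than \eqref{eq:hierarchical_structure} (the tree conditional independence graph has fewer edges than the comparability graph of $T$), so the implication is automatic.

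Second, since $\bbeta_T$ is a PNCP, by definition there exists a family of coefficients $\Lambda=(\lambda_{tr})_{r\preceq t,\,t\in T}$ with $\lambda_{tt}\neq 0$ such that $\bbeta_T=\Lambda\bgamma_T$ in the sense of \eqref{eq:hierarchical_reparametrization}. Thus $\bbeta_T$ is a hierarchical reparametrization of a vector (namely $\bgamma_T$) that satisfies \eqref{eq:hierarchical_structure}. Applying Lemma \ref{lemma:H_closed} immediately yields that $\bbeta_T$ satisfies \eqref{eq:hierarchical_structure} as well, completing the proof.

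There is no real obstacle here: the corollary is essentially a bookkeeping consequence of the two previously established facts, and the only thing to be careful about is that the posterior conditional independence structure of $\bgamma_T$ given $\yvec$ is still governed by the tree (since conditioning on leaf observations does not introduce edges between non-comparable internal nodes, as the moralised DAG of Model \ref{model:NSk} together with the observed leaves still has non-comparable nodes $d$-separated after conditioning on everything else). This can be noted in one sentence before invoking \eqref{eq:tree_cond_indep}.
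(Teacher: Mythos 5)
Your proof is correct and follows exactly the paper's route: the paper likewise dispatches this corollary in one line by noting that \eqref{eq:tree_cond_indep} for the centred parametrization $\bgamma_T$ implies \eqref{eq:hierarchical_structure}, and then invoking the closure property of Lemma \ref{lemma:H_closed} since every PNCP is by definition a hierarchical reparametrization of $\bgamma_T$. Your additional remark about the posterior (given $\yvec$) still satisfying the tree conditional independence structure is a sensible precaution that the paper leaves implicit.
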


\subsection{Symmetry assumption}\label{sec:symmetry}
To provide a full analysis of the convergence properties of Sampler \ref{sampler:GSk} we need a symmetry assumption that we now define.
Let $\rho_{tr}$ denote the partial correlation $Corr\left(\beta_t,\beta_r\Big | \bbeta_{T\backslash\{t,r\}}\right)$, namely
\begin{align*}
\rho_{tr}&=-\frac{Q_{tr}}{\sqrt{Q_{tt}Q_{rr}}}&t\neq r\,,
\end{align*}
and $\rho_{tt}=1$ for all $t$. Here $Q$ is the precision matrix of $\bbeta_T$.
Let $\textbf{X}=(X_\ell)_{\ell=0}^{k-1}$ be a random walk going through $T$ from root to leaves as follows: $X_0=t_0$ almost surely and then, for $\ell\in\{0,\dots,k-2\}$
\begin{equation}\label{eq:non_symm_markov}
P(X_{\ell+1}=t\,|\,X_{\ell}=r)=\frac{\rho^2_{tr}}{\sum_{t'\in ch(r)}\rho^2_{t'r}}\1(t\in ch(r))\,.
\end{equation}
Equation \eqref{eq:non_symm_markov} implies that at each step the Markov chain $\textbf{X}$ jumps from the current state $r$ to one of its children $t\in ch(r)$ choosing $t$ proportionally to the squared partial correlation between $\beta_r$ and $\beta_t$.
Since $\ell(X_d)=d$ almost surely for all $d\in\{0,\dots,k-1\}$ we can use the following simplified notation: for any $t$ and $r$ in $T$ we use $P(t)$, $P(t|r)$ and $P(t\cap r)$ to denote respectively  $P(X_{\ell(t)}=t)$, $P(X_{\ell(t)}=t\,|\,X_{\ell(r)}=r)$ and $P(X_{\ell(t)}=t\,\cap\,X_{\ell(r)}=r)$.

Given the above definitions, we define the following symmetry condition: there exist a $k\times k$ symmetric matrix $C=(c_{dp})_{d,p=0}^{k-1}$ such that
\begin{align}\tag{S}\label{eq:k_symmetry}
\rho_{tr}
=
&
c_{\ell(r)\ell(t)}\sqrt{P(t|r)}
& r\preceq t\,,
\end{align}
and $\rho_{tr}=0$ if $r\npreceq t$ and $t\npreceq r$.
Note that $\rho_{tr}$ is invariant to coordinate-wise rescaling of $\bbeta_T$ and therefore both property \eqref{eq:k_symmetry} and the transition kernel of $\textbf{X}$ are invariant to rescalings.
Therefore we can consider, without loss of generality, the following rescaled version of $\bbeta_T$ defined by 
\begin{align}\label{eq:rescaling}
\tbeta_t&=\beta_t\sqrt{\frac{Q_{tt}}{P(t)}}&t\in T\,.
\end{align}
Given \eqref{eq:rescaling}, condition \eqref{eq:k_symmetry} can be written, in terms of the precision matrix of $\tbbeta_T=(\tbeta_t)_{t\in T}$ as
\begin{align}\tag{$\widetilde{\hbox{S}}$}\label{eq:k_symmetry_rescaled}
\tilde{Q}_{tt}=P(t)
\quad\hbox{and}\quad
-\tilde{Q}_{tr}
=&
c_{\ell(t)\ell(r)}P(t \cap r)
\quad \hbox{for }t\neq r
\,.
\end{align}
The rescaled version $\tbbeta_T$ will be helpful later to design an appropriate multigrid decomposition of $\bbeta_T$.
Also, it can be seen that property \eqref{eq:k_symmetry_rescaled} is closed under symmetric hierarchical parametrizations.
\begin{defi}[Symmetric hierarchical reparametrizations]\label{defi:symmetric_reparametrization}
We say that $\bbeta_T=\Lambda\balpha_T$ is a symmetric hierarchical reparametrization of $\balpha_T$ if the coefficients of $\Lambda=(\lambda_{tr})_{r\preceq t,t\in T}$  depend only on the levels of $r$ and $t$ in the hierarchy $T$.
\end{defi}
\begin{lemma}\label{lemma:S_closed} 
Property \eqref{eq:k_symmetry_rescaled} is closed under symmetric hierarchical repa\-ra\-me\-tri\-zations, meaning that if $\tbbeta_T$ satisfies \eqref{eq:k_symmetry_rescaled} then any symmetric hierarchical repara\-me\-trization of $\tbbeta_T$ satisfies \eqref{eq:k_symmetry_rescaled} too.
\end{lemma}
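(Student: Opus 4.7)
I plan to work directly at the level of precision matrices. Fix $\tbbeta_T$ satisfying \eqref{eq:k_symmetry_rescaled} and set $\bbeta'_T=\Lambda\tbbeta_T$ for a symmetric hierarchical matrix $\Lambda$. The goal is to show that the rescaling $\tbbeta'_T$ of $\bbeta'_T$ defined by \eqref{eq:rescaling} satisfies \eqref{eq:k_symmetry_rescaled} for some symmetric coefficient matrix $C'$.

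A first structural observation is that the collection of matrices whose $(t,r)$-entry vanishes unless $r\preceq t$ and otherwise depends only on $(\ell(t),\ell(r))$ is closed under products and inverses (the latter by level-by-level back-substitution), so $\Lambda^{-1}$ has the same block structure as $\Lambda$, with entries $\mu_{\ell(t)\ell(r)}$. By Lemma \ref{lemma:H_closed}, $\bbeta'_T$ satisfies \eqref{eq:hierarchical_structure}, hence $Q'_{tr}=0$ for incomparable $t,r$. For comparable pairs, expanding $Q'=\Lambda^{-T}\tilde Q\Lambda^{-1}$ entrywise and using the sparsity of $\tilde Q$ implied by \eqref{eq:hierarchical_structure} gives
$$
Q'_{tr}=\sum_{s\succeq t,\ u\succeq r,\ s\sim u}\mu_{\ell(s)\ell(t)}\,\tilde Q_{su}\,\mu_{\ell(u)\ell(r)}.
$$
Substituting \eqref{eq:k_symmetry_rescaled} and invoking the Markov property of $\textbf{X}$ to factor $P(s\cap u)$ along the tree (e.g.\ $P(s\cap u)=P(t)P(s\mid t)P(u\mid s)$ when $t\preceq s\preceq u$), the sums telescope through the identities $\sum_{v\in ch(w)}P(v\mid w)=1$, yielding $Q'_{tr}=d_{\ell(t)\ell(r)}P(t\cap r)$ and $Q'_{tt}=d_{\ell(t)\ell(t)}P(t)$ for level-indexed constants $d_{pq}$.

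To close the proof I identify the walk $\textbf{X}'$ associated with $\bbeta'_T$. Since $P(t\cap r)/\sqrt{P(t)P(r)}=\sqrt{P(t\mid r)}$ whenever $r\preceq t$, the partial correlations of $\bbeta'_T$ take the form $\rho'_{tr}=-d_{\ell(t)\ell(r)}\sqrt{P(t\mid r)}/\sqrt{d_{\ell(t)\ell(t)}d_{\ell(r)\ell(r)}}$, so the level-dependent prefactor cancels in the ratio $\rho'^{2}_{tr}/\sum_{t'\in ch(r)}\rho'^{2}_{t'r}$ defining the transition kernel of $\textbf{X}'$. Hence $\textbf{X}'\stackrel{d}{=}\textbf{X}$ and $P'=P$. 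The diagonal rescaling \eqref{eq:rescaling}, which rescales $\beta'_t$ by $\sqrt{d_{\ell(t)\ell(t)}}$, produces $\tilde Q'_{tt}=P(t)$ and $-\tilde Q'_{tr}=c'_{\ell(t)\ell(r)}P(t\cap r)$, so $\tbbeta'_T$ satisfies \eqref{eq:k_symmetry_rescaled} with $C'$ automatically symmetric (since $Q'$ is).

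\textbf{Main obstacle.} The telescoping in the central computation. The double sum over comparable descendants $(s,u)$ partitions into the three cases $s\prec u$, $s\succ u$ and $s=u$, each with a different Markov factorisation of $P(s\cap u)$; in each case one must show that off-spine contributions collapse to a constant depending only on $(\ell(t),\ell(r))$ rather than on $(t,r)$ themselves. This is precisely the step at which the assumption that $\Lambda$'s entries depend only on levels is essential: a generic hierarchical (but non-symmetric) reparametrization would introduce $(s,u)$-dependent factors that do not combine into a single coefficient $d_{\ell(t)\ell(r)}$.
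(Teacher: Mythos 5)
Your proof is correct and follows essentially the same route as the paper: conjugate the precision matrix by $\Lambda^{-1}$, use the fact that level-homogeneous triangular-on-tree matrices are closed under inversion (the paper's Lemma \ref{lemma:symmetric_triangular}), and collapse the resulting sums into level-dependent constants via $\sum_{s\in T_{\ell'},\,s\succeq w}P(s\mid w)=1$. The only organizational difference is that the paper factors $(\Lambda^T)^{-1}\tilde{Q}\Lambda^{-1}$ into two successive one-sided products, each handled by a closure lemma (Lemma \ref{lemma:closure_Sm}) involving a single sum over $s$ with a clean two-case split ($s\succeq t\vee r$ versus $r\preceq s\prec t$), which sidesteps your three-way case analysis over pairs $(s,u)$. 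Your explicit verification that the auxiliary walk $\textbf{X}$ is unchanged by the reparametrization and that the final diagonal rescaling restores $\tilde{Q}_{tt}=P(t)$ addresses a point the paper leaves implicit (its lemma concludes only ``for some different matrix $(c_{pd})$'' and the ``after rescaling'' caveat appears only in the surrounding discussion), so that addition is a genuine improvement in precision rather than a detour.
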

Various special cases of Model \ref{model:NSk} satisfy assumption \eqref{eq:k_symmetry}.
For example, we now consider three cases: a fully symmetric case (both the tree $T$ and the variances $(\tau_t)_{t\in T}$ are symmetric), a weakly symmetric case (non-symmetric tree and symmetric variances) and a non-symmetric case (both tree and variances non-symmetric).

\begin{taggedmodel}{Sk}[Symmetric $k$-levels hierarchical model]\label{model:Sk} 
Consider the $k$-level Gaussian Hierarchical model where the observed data are generated from
\begin{align*}
y_{i_1,\dots,i_{k-1},j}&\sim N(\gamma^{(k-1)}_{i_1,\dots,i_{k-1}},1/\tau_e)
&&(i_1,\dots,i_{k-1},j)\in[I_1]\times\dots\times[I_{k-1}]\times[J]\,,
\end{align*}
where $[N]=\{1,\dots,N\}$ for any positive integer $N$.
The parameters have the following hierarchical structure: for each level $d$ from $1$ to $k-1$
\begin{align*}
\gamma^{(d)}_{i_1,\dots,i_d}&\sim N(\gamma^{(d-1)}_{i_1,\dots,i_{d-1}},1/\tau_d) 
&&(i_1,\dots,i_d)\in[I_1]\times\dots\times[I_d]\,,
\end{align*}
Here $(\precision{1},\dots,\precision{k-1},\precision{e})$ are known precisions and the root parameter $\gamma^{(0)}$ is given a flat prior $p(\gamma^{(0)})\propto 1$.
For each $d\in\{1,\dots,k-1\}$ the positive integer $I_d$ represents the number of branches from level $d-1$ to level $d$.
\end{taggedmodel}
It is easy to see that the posterior distribution of Model \ref{model:Sk},
conditioned on the observed data $\yvec=(y_{i_1,\dots,i_{k-1},j})_{i_1,\dots,i_{k-1},j}$, 
satisfies \eqref{eq:k_symmetry}.
In this case the random walk $\textbf{X}$ defined by \eqref{eq:non_symm_markov} coincides with the natural random walk going through $T$.

\begin{ex}[Weakly symmetric case]\label{ex:weakly_symm}
Another special case of Model \ref{model:NSk} satisfying \eqref{eq:k_symmetry} is given by the case of a general tree $T$ and precision terms defined as $\tau_t=\frac{\tau_{\ell(t)}}{\prod_{s\prec t}|ch(s)|}$ for all $t\in T$ and  $\tau^{(e)}_t=\frac{\tau_e}{n_t\prod_{s\prec t}|ch(s)|}$ , where $(\tau_1,\dots,\tau_{k},\tau_e)\in\R_+^{k+1}$ are level-specific precision terms.
This is an extension of Model \ref{model:Sk} where the lack of symmetry of $T$ is compensated by appropriate variance terms.
Condition \eqref{eq:k_symmetry} can be checked by evaluating the partial correlations $(\rho_{tr})_{t,r\in T}$ of the resulting vector $\bgamma_T$.
\end{ex}

\begin{ex}[Non-symmetric cases]\label{ex:non_symm}
In both cases previously considered (Model \ref{model:Sk} and Example \ref{ex:weakly_symm}) the auxiliary Markov chain $\textbf{X}$ defined in \eqref{eq:non_symm_markov} follows a natural random walk, in the sense that at each time the chain chooses the next state uniformly at random among children nodes.
However, condition \eqref{eq:k_symmetry} is also satisfied by non-symmetric cases where $\textbf{X}$ is not a natural random walk.
In particular any instance of Model \ref{model:NSk} such that 
\begin{align}\tag{S*}\label{eq:k_symmetry_appendix}
\sum_{r\in ch(t)}\rho_{tr}^2
\;=\;
c_{\ell(t)}
\qquad\hbox{for all }
t\in T\backslash T_L\,,
\end{align}
for some $(k-1)$-dimensional vector $(c_0,\dots,c_{k-2})$ induces a posterior distribution satisfying \eqref{eq:k_symmetry}.
In fact, in the context of Model \ref{model:NSk} conditions \eqref{eq:k_symmetry_appendix} and \eqref{eq:k_symmetry} are equivalent (this can be derived from \eqref{eq:tree_cond_indep} and \eqref{eq:non_symm_markov}).
\end{ex}

The cases previously considered are expressed in terms of centred pa\-ra\-me\-tri\-za\-tion, meaning that as all the  instances of Model \ref{model:NSk} they satisfy \eqref{eq:tree_cond_indep}.
Nevertheless Lemma \ref{lemma:S_closed} shows that any symmetric hierarchical reparametrization of a vector satisfying \eqref{eq:k_symmetry_rescaled} still satisfies \eqref{eq:k_symmetry_rescaled}.
This implies, for example, that the fully non-centred version of Model \ref{model:Sk} and any mixed strategy where some level is centred and some is not centred, still satisfies \eqref{eq:k_symmetry_rescaled} (after rescaling).

Moreover, note that the exact analysis we will now provide for the Gibbs sampler on models satisfying \eqref{eq:k_symmetry_rescaled} can be used to provide bound on general cases that do not satisfy \eqref{eq:k_symmetry_rescaled} (see for example Theorem \ref{thm:rates_3_uneven_CP}).

\subsection{Multigrid decomposition}\label{sec:multigrid_k}
We now show how to use the multigrid decomposition to analyze the Gibbs Sampler for random vectors $\bbeta_T$ satisfying \eqref{eq:hierarchical_structure} and \eqref{eq:k_symmetry}.
Our aim is to provide a transformation of $\bbeta_T$
 that factorizes the Gibbs Sampler Markov Chain into independent and more tractable sub-chains.
Similarly to Section \ref{sec:multigrid} in the following we will often denote $\bbeta_{T_d}=(\beta_t)_{t\in T_d}$ by $\bbeta^{(d)}$. 
We proceed in two steps, first introducing the averaging operators $\phi^{(p)}$ and then the residual operators $\delta^{(p)}$. 
For any $p\leq d$ the averaging operator $\phi^{(p)}:\R^{T_d}\rightarrow\R^{T_p}$ is defined as
\begin{align}\label{eq:phi_defi}
\phi^{(p)}_r\bbeta^{(d)}
=&
\E[\beta_{X_d}|\bbeta_T,X_p=r]&r\in T_p
\nonumber\\
=&
\sum_{t\in T_d}
\beta_{t}P(t|r)
&
\end{align}
where $\textbf{X}=(X_\ell)_{\ell=0}^{k-1}$
is the Markov chain defined by \eqref{eq:non_symm_markov}.
Loosely speaking $\phi^{(p)}\bbeta^{(d)}=\E[\beta_{X_d}|\bbeta_T,X_p]$
 can be interpreted as the averages of $\bbeta^{(d)}$ at the coarseness corresponding to the $p$-th level of the hierarchy.
In particular $\phi^{(d)}\bbeta^{(d)}=\bbeta^{(d)}$ and $\phi^{(0)}_{t_0}\bbeta^{(d)}=\E[\beta_{X_d}|\bbeta_T]$.
\begin{ex}[Averaging operators in the symmetric case]
Let $\bbeta_T=\bgamma_T$ be given by Model \ref{model:Sk}. Then 
\begin{align*}
\phi^{(p)}_r\bbeta^{(d)}
=&
\frac{1}{\prod_{\ell=p+1}^d I_\ell}
\left(\sum_{t\in T_d\,:\,t\succeq r}
\beta_{t}\right)
&r\in T_p\,.
\end{align*}
\end{ex}

Given $\phi$, we define the residual operators
$
\delta^{(p)}:\R^{T_d}
\rightarrow
 \R^{T_p}
$
as
$\delta^{(p)}=(\delta^{(p)}_r)_{r\in T_p}$
with $\delta^{(p)}_r:\R^{T_d}
\rightarrow\R$ defined as
\begin{align}\label{eq:delta_defi}
\delta^{(p)}_r\bbeta^{(d)}=&
\phi^{(p)}_r\bbeta^{(d)}-\phi^{(p-1)}_{pa(r)}\bbeta^{(d)}
&r\in T_p
\end{align}
for $1\leq p\leq d\leq k-1$
and
$
\delta^{(0)}\bbeta^{(d)}=\phi^{(0)}\bbeta^{(d)}
$
for $0=p\leq d\leq k-1$.
Analogously to the 3-level case of Section \ref{sec:multigrid}, under suitable assumptions the residual operators $\delta^{(p)}$ decompose the Markov chain generated by the Gibbs Sampler into independent sub-chains.
To prove the result we first need the following lemma.
\begin{lemma}[$p$-residuals interact only with $p$-residuals]\label{lemma:delta_expansion}
Let $\bbeta_T$ be a Gaussian random vector satisfying \eqref{eq:hierarchical_structure} and \eqref{eq:k_symmetry_rescaled}.
Then for any $p$ and $d$ with $0\leq p\leq d\leq k-1$, for all $r\in T_p$ we have
the identity
\begin{align*}
\E[\delta^{(p)}_r\bbeta^{(d)}|\bbeta\backslash\bbeta^{(d)}]
-\E[\delta^{(p)}_r\bbeta^{(d)}]
&=
\sum_{\ell\in\{p,\dots,k-1\}\backslash d}
c_{d\ell}
\left(\delta^{(p)}_r\bbeta^{(\ell)}-\E[\delta^{(p)}_r\bbeta^{(\ell)}]\right)
\,.
\end{align*}
\end{lemma}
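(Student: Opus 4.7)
The plan is to combine the Gaussian conditional mean formula with the sparsity and explicit entries of the precision matrix $Q$ of $\bbeta_T$ supplied by \eqref{eq:hierarchical_structure} and \eqref{eq:k_symmetry_rescaled}, reducing the claim to a combinatorial identity about the transition probabilities of the auxiliary Markov chain $\textbf{X}$.

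First I would observe that \eqref{eq:hierarchical_structure} forces $Q_{ts}=0$ whenever $s\neq t$ and $s,t$ are incomparable in $T$. Since any two distinct nodes at level $d$ are incomparable, the block of $Q$ indexed by $T_d\times T_d$ is diagonal. Combined with \eqref{eq:k_symmetry_rescaled} (which gives $Q_{tt}=P(t)$ and $-Q_{ts}=c_{\ell(t)\ell(s)}P(t\cap s)$ for comparable $t\neq s$), the standard Gaussian conditional mean formula applied to $\bbeta^{(d)}$ given $\bbeta\setminus\bbeta^{(d)}$ decouples into a coordinate-wise identity and yields, for every $t\in T_d$,
\begin{equation*}
\E[\beta_t|\bbeta\setminus\bbeta^{(d)}]-\E[\beta_t]
=\sum_{\ell\neq d}c_{d\ell}\sum_{\substack{s\in T_\ell\\ s\sim t}}\frac{P(t\cap s)}{P(t)}(\beta_s-\E[\beta_s]),
\end{equation*}
where $s\sim t$ denotes that $s$ and $t$ are comparable in the tree order.

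The map $\delta^{(p)}_r$ acts linearly on $\bbeta^{(d)}$: $\delta^{(p)}_r\bbeta^{(d)}=\sum_{t\in T_d}a^{(d,p,r)}_t\,\beta_t$ with $a^{(d,p,r)}_t:=P(t|r)-P(t|pa(r))$ (using $P(\cdot|pa(t_0)):=0$ so that the case $p=0$ reduces to $a^{(d,0,t_0)}_t=P(t)$). Substituting the previous display and interchanging the summations over $s$ and $t$ reduces the lemma to the combinatorial identity
\begin{equation*}
\sum_{\substack{t\in T_d\\ t\sim s}}a^{(d,p,r)}_t\,\frac{P(t\cap s)}{P(t)}
=\begin{cases}
a^{(\ell,p,r)}_s & \text{if }\ell\in\{p,\dots,k-1\}\setminus\{d\},\\
0 & \text{if }\ell<p,
\end{cases}
\end{equation*}
holding for every $s\in T_\ell$.

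The main work lies in verifying this identity, which I would do by splitting cases according to the relative depths of $s$ and $d$. If $\ell(s)>d$, the only $t\in T_d$ comparable to $s$ is the unique ancestor $\hat{t}$ of $s$ at level $d$, for which $P(\hat{t}\cap s)/P(\hat{t})=P(s|\hat{t})$; the Markov factorisation $P(s|r)=P(\hat{t}|r)P(s|\hat{t})$, valid because every path from $r$ down to $s$ must pass through $\hat{t}$, together with the same factorisation for $pa(r)$, gives $a^{(\ell,p,r)}_s$ directly. If $\ell(s)<d$, then $t\sim s$ forces $s\preceq t$ and $P(t\cap s)/P(t)=1$; using $P(t|r)=P(s|r)P(t|s)$ whenever $s$ lies between $r$ and $t$, together with $\sum_{t\in T_d,\,t\succeq s}P(t|s)=1$, collapses the sum to $a^{(\ell,p,r)}_s$ when $\ell\geq p$. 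In the remaining subcase $\ell<p$ one checks that, when $s\preceq r$, the contribution from $t\succeq r$ (summing to $1-P(r|pa(r))$) and the contribution from $t\succeq pa(r),\,t\not\succeq r$ (summing to $-(1-P(r|pa(r)))$) cancel exactly, while when $s\not\preceq r$ every term already vanishes. This last cancellation, which is the technical heart of the argument, is precisely the telescoping built into the definition $\delta^{(p)}_r=\phi^{(p)}_r-\phi^{(p-1)}_{pa(r)}$, and is what ensures that a $p$-residual at level $d$ interacts only with $p$-residuals at other levels and never with lower-resolution ones.
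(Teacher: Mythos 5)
Your proof is correct and follows essentially the same route as the paper's: both start from the Gaussian full-conditional formula, use \eqref{eq:k_symmetry_rescaled} to identify the coefficients as $c_{d\ell}P(s|t)$, and then push the result through $\delta^{(p)}_r=\phi^{(p)}_r-\phi^{(p-1)}_{pa(r)}$ using the Chapman--Kolmogorov/tower property of the auxiliary chain $\textbf{X}$ and the telescoping cancellation for levels $\ell<p$. The only difference is presentational: the paper writes everything as conditional expectations $\E[\beta_{X_\ell}|\bbeta_T,X_p=r]$, which absorbs your explicit case analysis on the relative depths of $s$, $d$ and $p$ into a single tower-property step.
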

Given Lemma \ref{lemma:delta_expansion} we can prove the following multigrid decomposition for hierarchical linear models.
\begin{theorem}[Multigrid decomposition for $k$ levels]\label{thm:factorization_general}
Let $(\bbeta(s))_{s\in\N}$ be a Markov chain evolving according to \ref{sampler:GSk} with $\bbeta_T$ satisfying \eqref{eq:hierarchical_structure} and \eqref{eq:k_symmetry_rescaled}.
Then $(\delta^{(0)}\bbeta(s))_s$, \dots, $(\delta^{(k-1)}\bbeta(s))_s$ are $k$ independent Markov chains. 
Moreover, each chain $\delta^{(p)}\bbeta(s)=(\delta^{(p)}\bbeta^{(d)}(s))_{d=p}^{k-1}$ evolves according to the following blocked Gibbs sampler scheme with $(k-p)$ components:
for $d$ going from $p$ to $k-1$ sample
\begin{align}\label{eq:blocked_gibbs}
\delta^{(p)}\bbeta^{(d)}(s+1)
\;\sim\;
\mathcal{L}\left(\delta^{(p)}\bbeta^{(d)}|
(\delta^{(p)}\bbeta^{(\ell)}(s+1))_{p\leq \ell<d},(\delta^{(p)}\bbeta^{(\ell)}(s))_{d<\ell\leq k-1}
\right)\,,
\end{align}
where $\mathcal{L}(X|Y)$ denotes the conditional distribution of $X$ given $Y$.
\end{theorem}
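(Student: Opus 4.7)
The plan is to show that within one sweep of $GS(\bbeta_T)$ the update of each level $\bbeta^{(d)}$ decouples across the residual directions $(\delta^{(p)}\bbeta^{(d)})_{p=0}^d$, with the update of $\delta^{(p)}\bbeta^{(d)}$ depending on the remaining levels only through their $\delta^{(p)}$-components. Once this per-level decoupling is established, composing the updates for $d=0,\ldots,k-1$ immediately yields $k$ independent Markov chains, and inspection of the resulting one-step conditional distribution identifies the marginal evolution of $(\delta^{(p)}\bbeta(s))_s$ with the blocked Gibbs scheme \eqref{eq:blocked_gibbs}.

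First I would use the linear bijection between $\bbeta^{(d)}\in\R^{T_d}$ and the tuple $(\delta^{(p)}\bbeta^{(d)})_{p=0}^d$ provided by the telescoping identity $\beta_t=\sum_{p=0}^{d}\delta^{(p)}_{a_p(t)}\bbeta^{(d)}$ for $t\in T_d$, where $a_p(t)$ denotes the ancestor of $t$ at level $p$. Sampling $\bbeta^{(d)}$ from its full conditional is equivalent to sampling $(\delta^{(p)}\bbeta^{(d)})_{p=0}^d$ from the induced joint conditional law, so it suffices to prove that this joint law factorizes as a product over $p$ with the $p$-th factor depending on the conditioning only through $(\delta^{(p)}\bbeta^{(\ell)})_{\ell\neq d}$.

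Second, the factorization of the conditional means is delivered directly by Lemma \ref{lemma:delta_expansion}: $\E[\delta^{(p)}\bbeta^{(d)}\mid\bbeta\backslash\bbeta^{(d)}]$ is an affine function of $(\delta^{(p)}\bbeta^{(\ell)})_{\ell\neq d}$ alone, with coefficients $c_{d\ell}$. For the conditional covariance, I would first use property \eqref{eq:hierarchical_structure} to observe that distinct nodes at level $d$ are pairwise incomparable and hence the rescaled components $(\tilde\beta_t)_{t\in T_d}$ are conditionally independent given $\bbeta\backslash\bbeta^{(d)}$; combined with $\tilde Q_{tt}=P(t)$ from \eqref{eq:k_symmetry_rescaled}, each has conditional variance $1/P(t)$. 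Exploiting the random-walk identity $P(t\mid r)=P(t)/P(r)$ for $r\preceq t$, a direct computation then yields
$$\mathrm{Cov}\bigl(\phi^{(p)}_r\tbbeta^{(d)},\phi^{(q)}_s\tbbeta^{(d)}\bigm|\bbeta\backslash\bbeta^{(d)}\bigr)=\frac{1}{\max\{P(r),P(s)\}}$$
when $r$ and $s$ are comparable, and $0$ otherwise. Expanding $\delta^{(p)}_r=\phi^{(p)}_r-\phi^{(p-1)}_{pa(r)}$ and summing with signs, one checks that in every configuration of $r$ and $s$ the four resulting contributions telescope so that the cross-covariance between $\delta^{(p)}_r\tbbeta^{(d)}$ and $\delta^{(q)}_s\tbbeta^{(d)}$ vanishes whenever $p\neq q$. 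By Gaussianity, the joint conditional law of $(\delta^{(p)}\bbeta^{(d)})_p$ factorizes into independent marginals, each distributed as $\mathcal{L}(\delta^{(p)}\bbeta^{(d)}\mid(\delta^{(p)}\bbeta^{(\ell)})_{\ell\neq d})$, which is exactly the transition used in \eqref{eq:blocked_gibbs}.

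The main obstacle is the within-level covariance computation described above: whereas Lemma \ref{lemma:delta_expansion} supplies the cross-level coupling of conditional means for free, the orthogonality of the residuals under the full conditional is not immediate from \eqref{eq:hierarchical_structure} or \eqref{eq:k_symmetry_rescaled} alone and must be extracted via the identity $P(t\mid r)=P(t)/P(r)$ together with case analysis on the relative positions of $r$ and $s$ in $T$. Once that step is complete, factorization of the full transition kernel follows by composing the $k$ per-level updates within a sweep, proving independence of the chains $(\delta^{(p)}\bbeta(s))_s$ and identifying their dynamics with \eqref{eq:blocked_gibbs}.
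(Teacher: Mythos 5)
Your proposal is correct and follows essentially the same route as the paper: Lemma \ref{lemma:delta_expansion} for the cross-level structure of the conditional means, plus a within-level conditional orthogonality of the residuals $(\delta^{(p)}\bbeta^{(d)})_p$ given $\bbeta\backslash\bbeta^{(d)}$ (the paper's Lemma \ref{lemma:residual_independence}), established exactly as you describe via the conditional independence of the level-$d$ components, the variances $1/P(t)$, the identity $P(t\mid r)=P(t)/P(r)$ for $r\preceq t$, and the telescoping of the four $\phi$-covariances. Composing the per-level updates then gives the independence of the $k$ chains and identifies their dynamics with \eqref{eq:blocked_gibbs}, just as in the paper.
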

Theorem \ref{thm:factorization_general} implies that running a Gibbs sampler $(\bbeta(s))_s$ targeting distributions satisfying \eqref{eq:hierarchical_structure} is equivalent to running $k$ independent blocked Gibbs Samplers, one for each level of coarseness from $(\delta^{(0)}\bbeta(s))_s$ to $(\delta^{(k-1)}\bbeta(s))_s$.
\begin{coroll}\label{coroll:rate_k_levels}
Let $\bbeta_T$ satisfies \eqref{eq:hierarchical_structure} and \eqref{eq:k_symmetry_rescaled}.
Then the rate of convergence of \ref{sampler:GSk} is given by $\max\{\rho_0,\dots,\rho_{k-1}\}$ where for each $p\in\{0,\dots,{k-1}\}$, $\rho_p$ is the rate of convergence of $(\delta^{(p)}\bbeta(s))_s$.
\end{coroll}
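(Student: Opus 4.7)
The plan is to combine Theorem~\ref{thm:factorization_general} with two standard facts: first, that the $L^2$ rate of convergence is invariant under invertible linear reparametrizations of the state space, and second, that the rate of a product of independent Markov chains equals the maximum of the individual rates.

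First I would observe that $\delta=(\delta^{(0)},\dots,\delta^{(k-1)})$ is an invertible linear map from the state space of $\bbeta_T$ onto its image (this is essentially the content of the dimension-counting argument used in the three-level case after equation \eqref{eq:multigrid_3}, generalized via the averaging/residual construction \eqref{eq:phi_defi}--\eqref{eq:delta_defi}). Since this map is a bijection and $L^2(\pi)$ pulls back to an equivalent Hilbert space under such a reparametrization, $\rho(\bbeta(s))=\rho(\delta\bbeta(s))$. By Theorem~\ref{thm:factorization_general}, the chain $(\delta\bbeta(s))_s$ decomposes as a product of $k$ \emph{independent} Markov chains, so its transition operator factors as $P=P_0\otimes\cdots\otimes P_{k-1}$ acting on $L^2(\pi_0)\otimes\cdots\otimes L^2(\pi_{k-1})=L^2(\pi_0\otimes\cdots\otimes\pi_{k-1})$.

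Next I would prove the two inequalities separately. For the upper bound $\rho\le\max_p\rho_p$, pick any $r>\max_p\rho_p$. On product test functions $f=f_0\otimes\cdots\otimes f_{k-1}$, write each $f_p=\bar f_p+\tilde f_p$ with $\bar f_p=\E_{\pi_p}[f_p]$; then
\begin{equation*}
P^sf-\E_\pi[f]=\prod_{p=0}^{k-1}\bigl(\bar f_p+P_p^s\tilde f_p\bigr)-\prod_{p=0}^{k-1}\bar f_p,
\end{equation*}
which expands as a finite sum of tensor products, each of which contains at least one factor $P_p^s\tilde f_p$. By definition of $\rho_p$, $\|P_p^s\tilde f_p\|_{L^2(\pi_p)}=o(r^s)$ for every $p$, so the whole expression is $o(r^s)$ in $L^2(\pi)$. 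The set of finite linear combinations of such tensor products is dense in $L^2(\pi)$, and $\|P^s\|\le 1$ gives a uniform bound which allows the $o(r^s)$ estimate to extend to all $f\in L^2(\pi)$ by an $\varepsilon/3$-approximation argument. This gives $\rho\le r$ for every $r>\max_p\rho_p$, hence $\rho\le\max_p\rho_p$.

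For the matching lower bound, fix $p^\star\in\arg\max_p\rho_p$ and take any $f$ depending only on the $p^\star$-th coordinate block, i.e.\ $f(\delta\bbeta)=g(\delta^{(p^\star)}\bbeta)$ for some $g\in L^2(\pi_{p^\star})$. Then $P^sf-\E_\pi[f]=P_{p^\star}^sg-\E_{\pi_{p^\star}}[g]$ (as functions of $\delta\bbeta$), with the same $L^2$ norm. By definition of $\rho_{p^\star}$, there exists $g$ for which this does not decay faster than $\rho_{p^\star}^s$, so $\rho(\delta\bbeta(s))\ge\rho_{p^\star}=\max_p\rho_p$. Combining both inequalities yields the claim. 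The only mildly delicate step is the density/approximation argument used to pass from product functions to all of $L^2(\pi)$, but this is routine given the tensor-product structure of the Hilbert space; no new estimates on the dynamics are needed beyond those already encoded in the rates $\rho_p$.
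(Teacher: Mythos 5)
Your proof follows the same route as the paper's: establish that $\delta$ is an invertible linear map so the rate is preserved, invoke Theorem~\ref{thm:factorization_general} for the decomposition into independent chains, and use the fact that the rate of a product of independent Markov chains is the maximum of the individual rates --- a fact the paper simply cites as known, whereas you spell it out via the tensor-product structure of $L^2(\pi)$. The one quibble is that your final ``$\varepsilon/3$-approximation'' step, as literally stated, does not transfer an $o(r^s)$ bound with $r<1$ from product functions to all of $L^2(\pi)$ (the fixed approximation error $2\varepsilon$ does not decay when divided by $r^s$); the clean repair is to pass to the operator norm of $P^s$ restricted to mean-zero functions, which factorizes over the independent blocks (or to invoke uniform boundedness plus Gelfand's formula), but this is standard and does not affect the conclusion.
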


\subsection{Hierarchical ordering of rates}\label{sec:ordering_k}
Combining the results in \citet[Sec.2.2]{RobertsSahu1997} with the multigrid decomposition, we can characterize the rates of convergence of the $k$ independent Markov chains described above as follows.
\begin{theorem}\label{thm:kronecker}
The rate of convergence of $(\delta^{(p)}\bbeta(s))_s$ is given by the largest modulus eigenvalue of $(\I_{k-p}-L)^{-1}U$. Here $\I_{k-p}$ is the $(k-p)$ dimensional identity matrix, while $L$ and $U$ are, respectively, the strictly lower and strictly upper triangular part of $(c_{d\ell})_{d,\ell=p}^{k-1}$, with $c_{d\ell}$ given by \eqref{eq:k_symmetry_rescaled}.
\end{theorem}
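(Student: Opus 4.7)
The plan is to identify $(\delta^{(p)}\bbeta(s))_s$ as a Gaussian blocked Gibbs sampler and apply the classical result (see \cite{.amit:1991:stochastic+convergence+distributions,RobertsSahu1997}) that, for a Gaussian target, the $L^2$ rate of convergence of such a sampler equals the spectral radius of the deterministic iteration matrix governing the conditional mean updates. By Theorem \ref{thm:factorization_general}, $(\delta^{(p)}\bbeta(s))_s$ is precisely such a sampler, with $(k-p)$ blocks $\delta^{(p)}\bbeta^{(d)}$ scanned in order $d=p,\dots,k-1$.

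Centering so that all stationary means vanish, Lemma \ref{lemma:delta_expansion} gives
$$
\E\!\left[\delta^{(p)}\bbeta^{(d)} \,\big|\, \delta^{(p)}\bbeta^{(\ell)},\,\ell\neq d\right] = \sum_{\ell\in\{p,\dots,k-1\}\setminus\{d\}} c_{d\ell}\, \delta^{(p)}\bbeta^{(\ell)}\,.
$$
Writing $C=(c_{d\ell})_{d,\ell=p}^{k-1}$ with zero diagonal and splitting $C=L+U$ into its strictly lower and strictly upper triangular parts, a single sweep of the sampler acts on the conditional-mean vector $\bar{\bbeta}=(\bar{\bbeta}^{(d)})_{d=p}^{k-1}$ as the block Gauss--Seidel recursion
$$
\bar{\bbeta}^{(d)}(s+1) = \sum_{\ell<d} c_{d\ell}\,\bar{\bbeta}^{(\ell)}(s+1) + \sum_{\ell>d} c_{d\ell}\,\bar{\bbeta}^{(\ell)}(s)\,,
$$
equivalently $((\I_{k-p}-L)\otimes\I_{|T_p|})\,\bar{\bbeta}(s+1) = (U\otimes\I_{|T_p|})\,\bar{\bbeta}(s)$. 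The Kronecker factor $\I_{|T_p|}$ appears because Lemma \ref{lemma:delta_expansion} shows $c_{d\ell}$ acts componentwise in $r\in T_p$. The mean-iteration matrix is therefore $M\otimes\I_{|T_p|}$ with $M:=(\I_{k-p}-L)^{-1}U$, whose spectral radius coincides with $\rho(M)$, the largest-modulus eigenvalue appearing in the theorem.

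To upgrade this spectral radius to the $L^2(\pi)$-rate, I would write a Gibbs sweep in centered coordinates as $X(s+1)=(M\otimes\I_{|T_p|})\,X(s)+\xi(s+1)$ with $\xi$ independent and centered Gaussian. Iterating, the mean contracts at rate $\rho(M)$ and the covariance recursion $\Sigma(s+1)=(M\otimes\I_{|T_p|})\Sigma(s)(M\otimes\I_{|T_p|})^\top+\Sigma_\xi$ converges geometrically to its fixed point at the same rate. For Gaussian Markov chains these two facts identify $\rho(M)$ with the $L^2(\pi)$-rate in \eqref{eq:rates_of_conv}: the transition operator on $L^2(\pi)$ is diagonalized by Hermite polynomials of $\Sigma_\infty^{-1/2}(X-\E_\pi X)$, whose eigenvalues are products of those of the mean-iteration matrix, so its $L^2$ spectral radius is exactly $\rho(M)$.

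The main obstacle is this last step, i.e.\ the clean translation from the finite-dimensional spectral radius of the mean iteration to the $L^2(\pi)$-operator rate of the Markov chain; this is the content of the Gaussian Gibbs-sampler theory developed in \cite{.amit:1991:stochastic+convergence+distributions,RobertsSahu1997} and can be invoked directly, leaving only the routine linear-algebra manipulations above (the Kronecker decoupling in $r$ and the Gauss--Seidel form of the mean recursion).
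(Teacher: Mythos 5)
Your approach is the same as the paper's: use Lemma \ref{lemma:delta_expansion} to read off the matrix of conditional-mean coefficients of the blocked sampler for $\delta^{(p)}\bbeta$, observe that it has Kronecker form $(C^{(p)}-\I_{k-p})\varotimes\I$ because the coefficients act componentwise in $r\in T_p$, and invoke the Gaussian Gibbs-sampler theory of \cite{RobertsSahu1997} to identify the $L^2(\pi)$-rate with the spectral radius of the Gauss--Seidel matrix $(\I_{k-p}-L)^{-1}U$.

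There is, however, one genuine gap you need to close. The vector $\delta^{(p)}\bbeta$ is a \emph{degenerate} Gaussian: by the definitions \eqref{eq:phi_defi} and \eqref{eq:delta_defi}, for every $t\in T_{p-1}$ and every $d$ one has $\sum_{r\in ch(t)}P(r|t)\,\delta^{(p)}_r\bbeta^{(d)}=0$, so the stationary covariance is singular. This breaks your final step literally as written: the $A$-matrix formalism of \cite{RobertsSahu1997} presupposes a non-singular precision matrix, and the Hermite-polynomial diagonalization you appeal to requires $\Sigma_\infty^{-1/2}$, which does not exist here; likewise the full conditional of one component given the rest is deterministic rather than a non-degenerate Gaussian. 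The paper repairs this by choosing, for each parent node in $T_{p-1}$, one child to discard, passing to the sub-vector $\delta^{(p)}_{T'_p}\bbeta$ indexed by $T'_p=T_p\setminus f(T_{p-1})$, which has invertible covariance and generates the same chain (the discarded coordinates being deterministic functions of the retained ones via the constraint above). The Kronecker factor is then $\I_{|T'_p|}$ with $|T'_p|=|T_p|-|T_{p-1}|$ rather than your $\I_{|T_p|}$; this does not change the set of eigenvalues of $\bigl((\I_{k-p}-L)^{-1}U\bigr)\varotimes\I_{|T'_p|}$, so the stated conclusion survives, but the reduction is needed before the classical theory can be applied. With that restriction inserted, your argument matches the paper's proof.
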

Theorem \ref{thm:kronecker} implies that the convergence properties of the $k$ independent Markov chains are closely related.
In particular, from the rates of convergence point of view, the $k$ Markov chains updating $\delta^{(p)}\bbeta$ for $p=0,\dots,k-1$ behave as Gibbs samplers targeting a decreasing number of dimensions (from $k$ down to 1) of a single $k$-dimensional Gaussian distribution with precision matrix given by $-C$, where $C=(c_{d\ell})_{d,\ell=p}^{k-1}$ is given by \eqref{eq:k_symmetry_rescaled}.
This suggests that the convergence properties of the sub-chains will typically improve from that of $(\delta^{(0)}\bbeta(s))_s$ to that $(\delta^{(k-1)}\bbeta(s))_s$ and that the rate of convergence of $(\delta^{(0)}\bbeta(s))_s$ will typically determine the rate of the whole sampler \ref{sampler:GSk}.
In particular, in the centred parametrization case we can use the well-known Cauchy interlacing theorem (see e.g. \citealp{bhatia2013}) to
show that the rate of convergence is monotonically non-increasing from $(\delta^{(0)}\bbeta(s))_s$ to $(\delta^{(k-1)}\bbeta(s))_s$.
\begin{theorem}\label{thm:ordering_CP}\emph{(Ordering of rates for centred parametrization)}
Let $\bgamma$ be a Gaussian vector satisfying \eqref{eq:tree_cond_indep} and \eqref{eq:k_symmetry_rescaled} and 
let $(\bgamma(s))_{s\in\N}$ be the corresponding Markov chain evolving according to \ref{sampler:GSkCP}.
Then the convergence rates of the $k$ independent Markov chains $(\delta^{(0)}\bgamma(s))_s$, \dots, $(\delta^{(k-1)}\bgamma(s))_s$ satisfy
\begin{equation}\label{eq:ordering_k}
\rho(\delta^{(0)}\bgamma(s))\geq
\rho(\delta^{(1)}\bgamma(s))\geq
\cdots\geq
\rho(\delta^{(k-1)}\bgamma(s))=0\,.
\end{equation}
\end{theorem}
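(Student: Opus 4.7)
The plan is to reduce the ordering \eqref{eq:ordering_k} to a spectral comparison of nested symmetric tridiagonal matrices via Theorem \ref{thm:kronecker}, and then to combine Young's consistent-ordering identity $\rho_{GS}=\rho_J^2$ with Cauchy's interlacing inequality. The centred parametrization is what makes the package fit together, because it is precisely the parametrization that enforces tridiagonality on the coefficient matrix $C$.

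First I would show that in the centred case $C=(c_{d\ell})_{d,\ell=0}^{k-1}$ is symmetric tridiagonal with unit diagonal. The tree conditional-independence \eqref{eq:tree_cond_indep} annihilates $\tilde Q_{tr}$ whenever $t,r$ are not parent--child, and parent--child pairs always connect adjacent levels, so \eqref{eq:k_symmetry_rescaled} forces $c_{d\ell}=0$ for $|d-\ell|\ge 2$; symmetry of $\tilde Q$ together with $P(t\cap r)=P(r\cap t)$ gives $c_{d\ell}=c_{\ell d}$, and $\rho_{tt}=1$ gives $c_{dd}=1$. The trailing principal submatrix $C_p=(c_{d\ell})_{d,\ell=p}^{k-1}$ is therefore itself symmetric tridiagonal with unit diagonal, and $C_{p+1}$ is the trailing principal submatrix of $C_p$ obtained by deleting the first row and column.

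Next I would apply Theorem \ref{thm:kronecker}: $\rho(\delta^{(p)}\bgamma(s))$ equals the spectral radius of the Gauss--Seidel operator $(\I_{k-p}-L_p)^{-1}U_p$, where $L_p,U_p$ are the strictly lower and upper triangular parts of $C_p$. Symmetric tridiagonal matrices are consistently ordered in Young's sense, so the classical identity yields
\begin{equation*}
\rho(\delta^{(p)}\bgamma(s))=\rho(J_p)^2, \qquad J_p:=L_p+U_p.
\end{equation*}
The matrix $J_p$ is real symmetric (since $L_p=U_p^\top$), and $J_{p+1}$ is the trailing principal submatrix of $J_p$. Cauchy's interlacing inequality then yields $\rho(J_{p+1})\le\rho(J_p)$, and squaring gives the one-step bound $\rho(\delta^{(p+1)}\bgamma(s))\le\rho(\delta^{(p)}\bgamma(s))$; iterating from $p=0$ to $p=k-2$ establishes \eqref{eq:ordering_k}. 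The terminal equality $\rho(\delta^{(k-1)}\bgamma(s))=0$ is immediate from \eqref{eq:blocked_gibbs}: for $p=k-1$ the scheme has only one block, so each update is an exact draw from the stationary law.

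The conceptual core is already provided by Theorem \ref{thm:kronecker}; I expect the main obstacle to be purely notational, namely matching the sign convention of the Gauss--Seidel operator $(\I-L)^{-1}U$ used in Theorem \ref{thm:kronecker} with the convention for which Young's squaring identity is customarily stated. When the off-diagonal entries of $C_p$ carry mixed signs this can be handled by conjugating $C_p$ by a diagonal matrix with entries in $\{\pm 1\}$: such conjugation preserves both the Jacobi and the Gauss--Seidel spectra while bringing $C_p$ into the canonical non-negative sign pattern required by Young's theorem.
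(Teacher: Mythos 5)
Your proposal is correct and follows essentially the same route as the paper: both reduce the problem, via the multigrid factorization and Theorem \ref{thm:kronecker}, to comparing the Gauss--Seidel spectral radii of the nested symmetric tridiagonal matrices $C^{(p)}-\I_{k-p}$, convert these to squared Jacobi spectral radii using the tridiagonal structure (the paper cites Theorem 5 of \cite{RobertsSahu1997}, which is the same classical consistently-ordered/Young identity you invoke), and conclude by Cauchy interlacing on trailing principal submatrices. The only cosmetic difference is your explicit diagonal $\pm1$ conjugation to normalize signs, which is harmless but not needed since consistent ordering is a sparsity-pattern property.
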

In Theorem \ref{thm:ordering_CP} we needed the additional assumption \eqref{eq:tree_cond_indep} to prove \eqref{eq:ordering_k}.
The reason is that, while in most cases the convergence rates of a deterministic-scan Gibbs Sampler targeting a $n$-th dimensional Gaussian distribution improves if one of the coordinates is conditioned to a fixed value and the sampler targets only the remaining $(n-1)$ coordinates, this is not true in general.
Example 2 of \cite{RobertsSahu1997} provides a counter-example (see also \citealp[page 319]{Whittaker1990}).
In  \cite{RobertsSahu1997}, this example was used a counter-example regarding blocking strategies, it also works in the present context.
We note that, if one were to consider a random scan version of the Gibbs Sampler, the reversibility of the induced Markov chains would allow to prove the ordering result in Theorem \ref{thm:ordering_CP} with no need to assume \eqref{eq:tree_cond_indep}. 
We leave this as a direction of future research and briefly mention it in Section \ref{sec:conclusion}.


Theorem \ref{thm:ordering_CP} implies the following corollary.
\begin{coroll}\label{coroll:rate_k_levels_CP}
Let $\bgamma$ be a Gaussian vector satisfying \eqref{eq:tree_cond_indep} and \eqref{eq:k_symmetry_rescaled}.
Then the rate of convergence of \ref{sampler:GSkCP}
is given by the largest squared eigenvalue of the $k$-dimensional matrix $C-\I_{k}$, where $C=(c_{d\ell})_{d,\ell=0}^{k-1}$ is defined by \eqref{eq:k_symmetry_rescaled} and $\I_{k}$ is the $k$-dimensional matrix.
\end{coroll}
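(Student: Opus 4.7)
The plan is to combine the three preceding results---Corollary~\ref{coroll:rate_k_levels}, Theorem~\ref{thm:ordering_CP}, and Theorem~\ref{thm:kronecker}---to reduce the problem to a $k$-dimensional Gauss--Seidel spectral question, and then to recognise that question as an instance of Young's classical theorem for consistently ordered systems.

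First I would use Corollary~\ref{coroll:rate_k_levels} to express the DSGS rate as $\max_{p=0,\dots,k-1}\rho_p$ and then appeal to Theorem~\ref{thm:ordering_CP} (which requires the centred-parametrization assumption \eqref{eq:tree_cond_indep}) to conclude that this maximum is attained at $p=0$, so that the rate equals $\rho_0$. Theorem~\ref{thm:kronecker} at $p=0$ then identifies $\rho_0$ as the spectral radius of $(\I_k-L)^{-1}U$, where $L$ and $U$ are the strictly lower and strictly upper triangular parts of the $k\times k$ symmetric matrix $C$. Since $c_{dd}=1$ (this follows from $\rho_{tt}=1$ in \eqref{eq:k_symmetry}), we have $C=\I_k+L+U$, and $(\I_k-L)^{-1}U$ is precisely the Gauss--Seidel iteration matrix for the linear system with coefficient matrix $\I_k-L-U=2\I_k-C$ (whose diagonal is $\I_k$).

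The key structural step is to show that under \eqref{eq:tree_cond_indep} the matrix $C$ is tridiagonal. Indeed, \eqref{eq:tree_cond_indep} forces $\tilde Q_{tr}=0$ whenever $r$ and $t$ are neither equal nor in a parent--child relation, which in particular covers all pairs with $|\ell(t)-\ell(r)|>1$. Combined with the identity $-\tilde Q_{tr}=c_{\ell(t)\ell(r)}P(t\cap r)$ from \eqref{eq:k_symmetry_rescaled}, and the fact that for any two distinct levels $d,\ell$ one can exhibit ancestor--descendant pairs with strictly positive $P(t\cap r)$, this yields $c_{d\ell}=0$ whenever $|d-\ell|>1$. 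Hence $C-\I_k=L+U$ is a symmetric tridiagonal matrix with zero diagonal.

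Tridiagonal matrices are consistently ordered, and because $L+U$ is symmetric its eigenvalues are real, so Young's theorem applies and gives
\[
\rho\bigl((\I_k-L)^{-1}U\bigr)=\rho(L+U)^2=\rho(C-\I_k)^2,
\]
which is exactly the largest squared eigenvalue of $C-\I_k$. Putting the pieces together proves the corollary. The main thing to verify carefully is the tridiagonality of $C$, since this is where the centred-parametrization hypothesis \eqref{eq:tree_cond_indep} really enters; once tridiagonality is in hand, the invocation of Young's theorem is standard, modulo noticing that it is $2\I_k-C$ rather than $C$ itself that plays the role of the coefficient matrix.
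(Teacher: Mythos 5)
Your proposal is correct and follows essentially the same route as the paper: combine Corollary~\ref{coroll:rate_k_levels} with Theorem~\ref{thm:ordering_CP} to reduce to the $p=0$ chain, observe that \eqref{eq:tree_cond_indep} forces $C$ to be tridiagonal, and convert the Gauss--Seidel spectral radius from Theorem~\ref{thm:kronecker} into the squared Jacobi eigenvalue. The only difference is cosmetic: where you invoke Young's theorem for consistently ordered systems, the paper cites Theorem~5 of \cite{RobertsSahu1997}, which is precisely that result specialised to Gibbs samplers with tridiagonal precision matrices.
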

In particular, considering the special case of Model \ref{model:Sk} it is easy to deduce the following result.
\begin{coroll}\label{coroll:rate_k_levels_CP_symm}
The rate of convergence of $GS(\bgamma_T)$ targeting Model \ref{model:Sk} is given by the largest squared eigenvalue of the $k$-dimensional matrix
\begin{equation*}
\begin{pmatrix}
0 & r_{1}\\
(1-r_{2}) & 0 & r_{2}\\
 & \dots & \dots & \dots\\
 &  & (1-r_{k-2}) & 0 & r_{k-2}\\
 &  &  & (1-r_{k-1}) & 0
\end{pmatrix}
\end{equation*}
where 
$
r_\ell=
\frac{I_\ell\tau_{\ell}}{\tau_{\ell-1}+I_\ell\tau_{\ell}}
$
with $(\tau_1,\dots,\tau_{k-1})$ and $(I_1,\dots,I_{k-1})$ given by Model \ref{model:Sk}, $\tau_0=0$, $\tau_{k}=\tau_e$ and $I_{k}=J$.
\end{coroll}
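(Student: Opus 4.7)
The plan is to invoke Corollary \ref{coroll:rate_k_levels_CP}: since the posterior of $\bgamma_T$ under Model \ref{model:Sk} satisfies both \eqref{eq:tree_cond_indep} (by construction, as the parametrization is centred) and \eqref{eq:k_symmetry_rescaled} (as observed immediately after Model \ref{model:Sk}), the rate of $GS(\bgamma_T)$ equals the largest squared eigenvalue of $C-\I_{k}$. The task therefore reduces to computing the matrix $C$ explicitly for Model \ref{model:Sk} and matching its spectrum with that of the displayed tridiagonal matrix.

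First I would write down the posterior precision matrix $Q$ of $\bgamma_T\mid\yvec$. Its only non-zero entries are the diagonal terms $q_d=\tau_d+I_{d+1}\tau_{d+1}$ for $d=0,\dots,k-1$ (under the conventions $\tau_0=0$, $\tau_k=\tau_e$, $I_k=J$), and the parent-child couplings $Q_{t,pa(t)}=-\tau_{\ell(t)}$. Non-adjacent nodes are uncoupled, so $\rho_{tr}=0$ whenever $t$ and $r$ are not parent-child; combined with \eqref{eq:k_symmetry} this forces $c_{d\ell}=0$ for $|d-\ell|\ge 2$. Taking $t=r$ in \eqref{eq:k_symmetry} gives $c_{dd}=1$. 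Since Model \ref{model:Sk} is fully symmetric, $\mathbf{X}$ is the uniform random walk on $T$ and $P(t\mid pa(t))=1/I_d$ for $t\in T_d$; inverting \eqref{eq:k_symmetry} yields
\[
c_{d-1,d}^{2}\;=\;I_d\,\rho_{t,pa(t)}^{2}\;=\;\frac{I_d\tau_d^{2}}{q_{d-1}q_d},\qquad d=1,\dots,k-1.
\]
Hence $C-\I_{k}$ is a symmetric tridiagonal matrix with zero diagonal and prescribed off-diagonal entries.

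To complete the argument I would match the spectrum of $C-\I_{k}$ to that of the tridiagonal matrix $M$ in the statement. A short manipulation using $r_\ell=I_\ell\tau_\ell/q_{\ell-1}$ and $1-r_\ell=\tau_{\ell-1}/q_{\ell-1}$ gives
\[
r_d\,(1-r_{d+1})\;=\;\frac{I_d\tau_d}{q_{d-1}}\cdot\frac{\tau_d}{q_d}\;=\;\frac{I_d\tau_d^{2}}{q_{d-1}q_d}\;=\;c_{d-1,d}^{2},
\]
so the adjacent off-diagonal products of $M$ coincide with $c_{d-1,d}^{2}$. For any tridiagonal matrix with zero diagonal, the eigenvalues depend only on these products: the diagonal similarity $M\mapsto D^{-1}MD$ with $D_d/D_{d-1}=\sqrt{(1-r_{d+1})/r_d}$ symmetrises $M$ into a matrix whose off-diagonal entries equal $\sqrt{r_d(1-r_{d+1})}=|c_{d-1,d}|$, and this agrees with $C-\I_{k}$ up to sign changes of individual off-diagonals (further diagonal similarities by $\pm 1$). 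Consequently the squared eigenvalues of $M$ and of $C-\I_{k}$ coincide, and the corollary follows.

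The main potential pitfall is bookkeeping rather than anything conceptual: one must keep the boundary conventions $\tau_0=0$ and $\tau_k=\tau_e$ straight (so that $r_1=1$ and the last product $r_{k-1}(1-r_k)$ behaves correctly), and verify that the similarity transformation is well defined, i.e.\ $r_d>0$ and $1-r_{d+1}>0$ for the relevant $d$, which holds as long as all precisions in Model \ref{model:Sk} are strictly positive.
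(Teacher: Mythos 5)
Your proposal is correct and follows exactly the route the paper intends: the paper offers no separate proof, stating only that the result is ``easy to deduce'' from Corollary \ref{coroll:rate_k_levels_CP}, and your computation of $c_{d-1,d}^2=I_d\tau_d^2/(q_{d-1}q_d)$, the identity $r_d(1-r_{d+1})=c_{d-1,d}^2$, and the diagonal-similarity argument supply precisely that deduction. (Note only that the printed matrix in the corollary has an off-by-one slip in its last two rows---the entries should run up to $r_{k-1}$ and $(1-r_k)$---and your reading, which uses the products $r_d(1-r_{d+1})$ for $d=1,\dots,k-1$ and hence needs the conventions $\tau_0=0$, $\tau_k=\tau_e$, $I_k=J$, is the intended one.)
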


\subsection{Example: rates of convergence for 4-level models}\label{sec:4_levels}
The results developed in Sections \ref{sec:multigrid_k} and \ref{sec:ordering_k} allow to analyze hierarchical models with an arbitrary number of levels.
For example we could consider 4-level extensions of Model \ref{model:S3}.
\begin{taggedmodel}{S4}\label{model:S4}(Symmetric 4-levels hierarchical model) 
Suppose
\begin{equation}\label{eq:NCP_4}
y_{ijk\ell}=\mu+a_i+b_{ij}+c_{ijk}+\epsilon_{ijk\ell},
\end{equation}
where $i$, $j$, $k$ and $\ell$ run from 1 to $I$, $J$, $K$ and $L$ respectively and $\epsilon_{ijk\ell}$ are iid normal random variables with mean 0 and variance $\var{e}$.
We employ a standard Bayesian model specification assuming 
$a_i\sim N(0,\var{a})$, $b_{ij}\sim N(0,\var{b})$,  $c_{ijk}\sim N(0,\var{c})$ and a flat prior on $\mu$.
\end{taggedmodel}
In order to fit Model \ref{model:S4} with a Gibbs Sampler like \ref{sampler:GSk}, one could consider centering or non-centering each of the three levels $(a_i)_i$, $(b_{ij})_{ij}$ and $(c_{ijk})_{ijk}$.
Let $(\lambda_1,\lambda_2,\lambda_3)\in\{0,1\}^3$ be the non-centering indicators associated to the resulting in $8=2^3$ combinations. Here $\lambda_d=1$ indicates that the $d$-th level is non-centred while $\lambda_d=0$ indicates that it is centred.
The corresponding rates of convergence $\rho_{(\lambda_1,\lambda_2,\lambda_3)}$ can then be expressed in terms of the following normalized variance ratios
\begin{align*}
\ratio{i}{j}
&=
\frac{\tilde{\sigma}_i^2}{\tilde{\sigma}_i^2+\tilde{\sigma}_j^2}
&i,j\in\{1,2,3,4\}
\,,
\end{align*}
where $\tvar{1}=\frac{\var{a}}{I}$, $\tvar{2}=\frac{\var{b}}{IJ}$, $\tvar{3}=\frac{\var{c}}{IJK}$ and $\tvar{4}=\frac{\var{e}}{IJKL}$.
If $\lambda_1=1$
 (i.e. using the non-centred parametrization $(a_i)_i$ at level $1$) the rates are 
\begin{align*}
\rho_{111}
&=
\max\{\ratio{1}{4},\ratio{2}{4},\ratio{3}{4}\}
\\
\rho_{110}
&=
\max\{\ratio{1}{3},\ratio{2}{3},\ratio{4}{3}\}
\\
\rho_{100}
&=
\max\{\ratio{1}{2},1-\ratio{2}{3}\ratio{3}{4}\}
\\
\rho_{101}
&=
\max\{\ratio{1}{2},1-\ratio{2}{4}\ratio{4}{3}\}
\end{align*}
When $\lambda_1=0$ the expressions for the convergence rates are still explicit, but slightly more involved and are reported in Section 3.1 of the supplementary material.
These rates can be derived from Corollary \ref{coroll:rate_k_levels} and Theorem \ref{thm:kronecker}.
It is worth noting that also in this 4-level case the skeleton chain $\delta^{(0)}\bbeta$ is always the slowest chain for all centred and non-centred parametrizations (which can be checked by computing the rates of convergence of $\delta^{(1)}\bbeta$, $\delta^{(2)}\bbeta$ and $\delta^{(3)}\bbeta$ using Theorem \ref{thm:kronecker} and comparing those to the ones of $\delta^{(0)}\bbeta$), even if for the general $k$-level case we were able to prove this fact only for the fully-centred parametrization (Theorem \ref{thm:ordering_CP}).
The expressions given here can be easily used to derive conditionally optimal parametrizations for Model \ref{model:S4} given the rescaled variance components $(\tvar{i})_{i=1}^4$. For example, choosing whether to center or not each level by comparing the level-specific rescaled variances with the sum of the rescaled variances of the lower levels like in Section \ref{sec:cond_optimal} leads to rates of convergence upper bounded by $\frac{3}{4}$.

\section{Conclusions and future work}\label{sec:conclusion}
In this work we studied the convergence properties of the Gibbs Sampler algorithm in the context of Gaussian multilevel models.
To do so we developed a novel analytic approach based on multigrid decompositions that allows to factorize the Markov chain of interest into independent and easier to analyze sub-chains.
This decomposition enables us to evaluate explicitly the $L^2$-rate of convergence in various models of interest.
The results offer a detailed and valuable insight into the interaction between multilevel structures (e.g.\ nested and crossed) and the Gibbs Sampler and provide guidance on the choice of the computationally optimal parametrizations or linear constraints, which can potentially be relevant also beyond the Gaussian case (see e.g.\ Section \ref{sec:gamma_poisson}), and indication of which parameters to monitor in the convergence diagnostic process (see Theorem \ref{thm:rates_ordering_3} and discussion at the end of Section \ref{sec:motivating_example}).
Since the first preprint version of this paper, the multigrid decomposition developed in this paper has already found other practical applications.
In particular \citet{papaspiliopoulos2018scalable} have successfully exploited it to analyze the computational complexity of the Gibbs Sampler in the context of crossed random effect models (see also \citealp{GaoOwen2017}) and to design an algorithmic modification with linear computational complexity.

Together with explicit formulas for $L^2$-rates of convergences, the multigrid decomposition we developed in this paper provides a simple and intuitive theoretical characterizations of practical behaviors commonly observed in practice when fitting hierarchical models with MCMC, such as slower mixing for hyper-parameters at higher levels (see Theorems \ref{thm:rates_ordering_3} and \ref{thm:ordering_CP}), algorithmic scalability with width of the hierarchy but not with height (e.g. Theorem \ref{thm:rates_3} and Corollary \ref{coroll:rate_k_levels_CP_symm}) and good performances of centred parametrization in data-rich contexts (Theorem \ref{thm:rates_3_uneven_CP}).
We hope that the results presented in this work will provide a first step towards providing quantitative understanding of the behavior of MCMC algorithms (even beyond the Gibbs Sampler) in the extremely popular context of Bayesian hierarchical and multilevel models.

The present work could be extended in many directions. 
For example, it would be interesting to extend the results for non-symmetric cases, either by generalizing the bounds of Theorem \ref{thm:rates_3_uneven_CP} or by weakening the symmetry assumption in \eqref{eq:k_symmetry}.
In terms of classes of models considered, a natural and important extension would be to consider the multivariate case (where each parameter $\gamma_t$ is a multivariate random vector) and the regression case. 
We expect many results developed in this work to extend to the multivariate and regression case, even if in that context the role played by non-symmetric cases will be more crucial. 
Another important class of models that would be worth approaching with methodologies analogous to the ones developed here are models based on Gaussian processes commonly used, for example, in spatial statistics (see e.g.\ \citealp{BassSahu2016}).

An important and ambitious aim 
would be to extend the results to other tractable distributions within the exponential family beyond the Gaussian case.
A starting point for this could be to analyze Model \ref{model:poisson_crossed} as mentioned in Remark \ref{rmk:gamma_poisson}.
Also, many non-Gaussian hierarchical models can be well-approximated by Gaussian ones for sufficiently large data sets, so that it is reasonable to conjecture that the qualitative conclusions (at least) of our study might remain valid when extrapolated to non-Gaussian models, rather like the analysis given in \cite{sahu1999convergence}. A detailed study of this question is left for future work.

We have concentrated in this paper on deterministic samplers. However, explicit rates of convergence of random scan samplers are also available in the Gaussian case as described in  \cite{.amit:1996:properties+convergence+perturbations} and extended in \cite{RobertsSahu1997}.   
deterministic and random scan samplers can sometimes differ substantially in their convergence properties, see for example \cite{roberts2015surprising}, although no general theory for this phenomenon is well-understood, so that the insights of this work could be particularly useful in this direction.
Also, in the random scan case the reversibility of the induced Markov chains would allow us to apply the Cauchy interlacing theorem under weaker assumptions than Theorem \ref{thm:ordering_CP} and thus prove orderings results for general hierarchical parametrizations $\bbeta_T=(\beta_t)_{t\in T}$. 

While this work is focused on $L^2$-rates of convergence, the same approach could be used to derive bounds on the distance (e.g.\ total variation or Wasserstein) between the distribution of the Markov chain at a given iteration and the target distribution (see e.g.\ \citet{.amit:1996:properties+convergence+perturbations}, \citet[eq.(15)]{RobertsSahu1997} and \citealp[Sec.4.4]{khare2009rates}).
Such a formulation would be interesting to extend the recent growth in literature on providing rigorous characterizations of the computational complexity of Bayesian hierarchical linear models, see for example  \cite{RajaratnamSparks2015,roberts2016complexity,johndrow2015approximations}.
In order to provide full characterizations, however, the case of unknown variances should be considered (see e.g.\ \cite{jones2004sufficient} for the two level case).


\subsection*{Acknowledgments}
The authors are grateful for stimulating discussions with Omiros Papaspiliopoulos
and Art Owen.
GZ supported in part by an EPSRC Doctoral Prize fellowship, by the European Research Council (ERC) through StG ``N-BNP" 306406 and by MIUR through the PRIN Project 2015SNS29B.
GOR acknowledges support from EPSRC through grants EP/K014463/1 (i-Like) and EP/D002060/1 (CRiSM).

%

\bibliographystyle{plainnat}
\bibliography{Multigrid_bibliography}

\begin{thebibliography}{46}
\providecommand{\natexlab}[1]{#1}
\providecommand{\url}[1]{\texttt{#1}}
\expandafter\ifx\csname urlstyle\endcsname\relax
  \providecommand{\doi}[1]{doi: #1}\else
  \providecommand{\doi}{doi: \begingroup \urlstyle{rm}\Url}\fi

\bibitem[Alsmeyer and Fuh(2001)]{alsmeyer2001limit}
Gerold Alsmeyer and Cheng-Der Fuh.
\newblock Limit theorems for iterated random functions by regenerative methods.
\newblock \emph{Stochastic processes and their applications}, 96\penalty0
  (1):\penalty0 123--142, 2001.

\bibitem[Amit(1991)]{.amit:1991:stochastic+convergence+distributions}
Yali Amit.
\newblock On rates of convergence of stochastic relaxation for {G}aussian and
  non-{G}aussian distributions.
\newblock \emph{J. Multivariate Anal.}, 38\penalty0 (1):\penalty0 82--99, 1991.

\bibitem[Amit(1996)]{.amit:1996:properties+convergence+perturbations}
Yali Amit.
\newblock Convergence properties of the {G}ibbs sampler for perturbations of
  {G}aussians.
\newblock \emph{Ann. Statist.}, 24\penalty0 (1):\penalty0 122--140, 1996.

\bibitem[Bass and Sahu(2016{\natexlab{a}})]{BassSahu2016}
Mark~R Bass and Sujit~K Sahu.
\newblock A comparison of centring parameterisations of gaussian process-based
  models for bayesian computation using mcmc.
\newblock \emph{Statistics and Computing}, pages 1--22, 2016{\natexlab{a}}.

\bibitem[Bass and Sahu(2016{\natexlab{b}})]{bass2016comparison}
Mark~R Bass and Sujit~K Sahu.
\newblock A comparison of centring parameterisations of gaussian process-based
  models for bayesian computation using mcmc.
\newblock \emph{Statistics and Computing}, pages 1--22, 2016{\natexlab{b}}.

\bibitem[Bhatia(2013)]{bhatia2013}
Rajendra Bhatia.
\newblock \emph{Matrix analysis}, volume 169.
\newblock Springer Science \& Business Media, 2013.

\bibitem[Brown et~al.(2018)Brown, Mukherjee, Weinstein,
  et~al.]{brown2018empirical}
Lawrence~D Brown, Gourab Mukherjee, Asaf Weinstein, et~al.
\newblock Empirical bayes estimates for a two-way cross-classified model.
\newblock \emph{The Annals of Statistics}, 46\penalty0 (4):\penalty0
  1693--1720, 2018.

\bibitem[Browne(2004)]{browne2004illustration}
William~J Browne.
\newblock {An illustration of the use of reparameterisation methods for
  improving MCMC efficiency in crossed random effect models}.
\newblock \emph{Multilevel modelling newsletter}, 16\penalty0 (1):\penalty0
  13--25, 2004.

\bibitem[Carpenter et~al.(2017)Carpenter, Gelman, Hoffman, Lee, Goodrich,
  Betancourt, Brubaker, Guo, Li, and Riddell]{carpenter2017stan}
Bob Carpenter, Andrew Gelman, Matthew~D Hoffman, Daniel Lee, Ben Goodrich,
  Michael Betancourt, Marcus Brubaker, Jiqiang Guo, Peter Li, and Allen
  Riddell.
\newblock Stan: A probabilistic programming language.
\newblock \emph{Journal of statistical software}, 76\penalty0 (1), 2017.

\bibitem[Diaconis and Freedman(1999)]{diaconis1999iterated}
Persi Diaconis and David Freedman.
\newblock Iterated random functions.
\newblock \emph{SIAM review}, 41\penalty0 (1):\penalty0 45--76, 1999.

\bibitem[Diaconis et~al.(2010)Diaconis, Khare, and Saloff-Coste]{diaconis2010}
Persi Diaconis, Kshitij Khare, and Laurent Saloff-Coste.
\newblock Stochastic alternating projections.
\newblock \emph{Illinois J. Math.}, 54\penalty0 (3):\penalty0 963--979, 2010.

\bibitem[Flegal et~al.(2017)Flegal, Hughes, Vats, and Dai]{mcmcse}
James~M. Flegal, John Hughes, Dootika Vats, and Ning Dai.
\newblock \emph{mcmcse: Monte Carlo Standard Errors for MCMC}.
\newblock Riverside, CA, Denver, CO, Coventry, UK, and Minneapolis, MN, 2017.
\newblock R package version 1.3-2.

\bibitem[Gao and Owen(2017)]{GaoOwen2017}
Katelyn Gao and Art Owen.
\newblock Efficient moment calculations for variance components in large
  unbalanced crossed random effects models.
\newblock \emph{Electronic Journal of Statistics}, 11\penalty0 (1):\penalty0
  1235--1296, 2017.

\bibitem[Gelfand and Sahu(1999)]{gelfand1999identifiability}
Alan~E Gelfand and Sujit~K Sahu.
\newblock {Identifiability, improper priors, and Gibbs sampling for generalized
  linear models}.
\newblock \emph{Journal of the American Statistical Association}, 94\penalty0
  (445):\penalty0 247--253, 1999.

\bibitem[Gelfand and Smith(1990)]{gelsmi90}
Alan~E Gelfand and Adrian~FM Smith.
\newblock Sampling-based approaches to calculating marginal densities.
\newblock \emph{Journal of the American statistical association}, 85\penalty0
  (410):\penalty0 398--409, 1990.

\bibitem[Gelfand et~al.(1995)Gelfand, Sahu, and Carlin]{GelfandSahuCarlin1995}
Alan~E Gelfand, Sujit~K Sahu, and Bradley~P Carlin.
\newblock Efficient parametrisations for normal linear mixed models.
\newblock \emph{Biometrika}, 82\penalty0 (3):\penalty0 479--488, 1995.

\bibitem[Gelfand et~al.(1996)Gelfand, Sahu, and Carlin]{GelfandSahuCarlin1996}
Alan~E Gelfand, Sujit~K Sahu, and Bradley~P Carlin.
\newblock Efficient parametrizations for generalized linear mixed models,(with
  discussion).
\newblock 1996.

\bibitem[Gelman and Hill(2006)]{gelman2006data}
Andrew Gelman and Jennifer Hill.
\newblock \emph{Data analysis using regression and multilevel/hierarchical
  models}.
\newblock Cambridge university press, 2006.

\bibitem[Goodman and Sokal(1989)]{GoodmanSokal1989}
Jonathan Goodman and Alan~D Sokal.
\newblock Multigrid monte carlo method. conceptual foundations.
\newblock \emph{Physical Review D}, 40\penalty0 (6):\penalty0 2035, 1989.

\bibitem[Hills and Smith(1992)]{hills1992parameterization}
Susan~E Hills and Adrian~FM Smith.
\newblock Parameterization issues in bayesian inference.
\newblock \emph{Bayesian statistics}, 4:\penalty0 227--246, 1992.

\bibitem[Hoffman and Gelman(2014)]{hoffman2014no}
Matthew~D Hoffman and Andrew Gelman.
\newblock The no-u-turn sampler: adaptively setting path lengths in hamiltonian
  monte carlo.
\newblock \emph{Journal of Machine Learning Research}, 15\penalty0
  (1):\penalty0 1593--1623, 2014.

\bibitem[Johndrow et~al.(2015)Johndrow, Mattingly, Mukherjee, and
  Dunson]{johndrow2015approximations}
James~E Johndrow, Jonathan~C Mattingly, Sayan Mukherjee, and David Dunson.
\newblock {Approximations of Markov Chains and Bayesian Inference}.
\newblock \emph{arXiv preprint arXiv:1508.03387}, 2015.

\bibitem[Jones and Hobert(2004)]{jones2004sufficient}
Galin~L Jones and James~P Hobert.
\newblock Sufficient burn-in for gibbs samplers for a hierarchical random
  effects model.
\newblock \emph{The Annals of Statistics}, 32\penalty0 (2):\penalty0 784--817,
  2004.

\bibitem[Kaufman et~al.(2010)Kaufman, Sain, et~al.]{kaufman2010bayesian}
Cari~G Kaufman, Stephan~R Sain, et~al.
\newblock {Bayesian functional ANOVA modeling using gaussian process prior
  distributions}.
\newblock \emph{Bayesian Analysis}, 5\penalty0 (1):\penalty0 123--149, 2010.

\bibitem[Khare et~al.(2009)Khare, Zhou, et~al.]{khare2009rates}
Kshitij Khare, Hua Zhou, et~al.
\newblock Rates of convergence of some multivariate markov chains with
  polynomial eigenfunctions.
\newblock \emph{The Annals of Applied Probability}, 19\penalty0 (2):\penalty0
  737--777, 2009.

\bibitem[Liu and Sabatti(2000)]{LiuSabatti2000}
Jun~S Liu and Chiara Sabatti.
\newblock {Generalised Gibbs sampler and multigrid Monte Carlo for Bayesian
  computation}.
\newblock \emph{Biometrika}, pages 353--369, 2000.

\bibitem[Liu and Wu(1999)]{liu1999parameter}
Jun~S Liu and Ying~Nian Wu.
\newblock Parameter expansion for data augmentation.
\newblock \emph{Journal of the American Statistical Association}, 94\penalty0
  (448):\penalty0 1264--1274, 1999.

\bibitem[Meng and Van~Dyk(1997)]{mengvand}
Xiao-Li Meng and David Van~Dyk.
\newblock The em algorithm -- an old folk-song sung to a fast new tune.
\newblock \emph{Journal of the Royal Statistical Society: Series B (Statistical
  Methodology)}, 59\penalty0 (3):\penalty0 511--567, 1997.

\bibitem[Meng and Van~Dyk(1999)]{meng1999seeking}
Xiao-Li Meng and David~A Van~Dyk.
\newblock Seeking efficient data augmentation schemes via conditional and
  marginal augmentation.
\newblock \emph{Biometrika}, 86\penalty0 (2):\penalty0 301--320, 1999.

\bibitem[Neal et~al.(2011)]{neal2011mcmc}
Radford~M Neal et~al.
\newblock Mcmc using hamiltonian dynamics.
\newblock \emph{Handbook of markov chain monte carlo}, 2\penalty0
  (11):\penalty0 2, 2011.

\bibitem[Papaspiliopoulos et~al.(2003)Papaspiliopoulos, Roberts, and
  Skold]{RobertsPapaspiliopoulosSkold2003}
O~Papaspiliopoulos, Gareth~O Roberts, and M~Skold.
\newblock Non-centered parameterizations for hierarchical models and data
  augmentation (with discussion).
\newblock In JM~Bernardo, MJ~Bayarri, JO~Berger, AP~Dawid, D~Heckerman, AFM
  Smith, and M~West, editors, \emph{Bayesian Statistics 7}, pages 307--326.
  Oxford University Press, New York, 2003.

\bibitem[Papaspiliopoulos and Zanella(2017)]{papaspiliopoulos2017note}
Omiros Papaspiliopoulos and Giacomo Zanella.
\newblock A note on mcmc for nested multilevel regression models via belief
  propagation.
\newblock \emph{arXiv preprint arXiv:1704.06064}, 2017.

\bibitem[Papaspiliopoulos et~al.(2007)Papaspiliopoulos, Roberts, and
  Sk{\"o}ld]{PapaspiliopoulosRobertsSkold2007}
Omiros Papaspiliopoulos, Gareth~O Roberts, and Martin Sk{\"o}ld.
\newblock A general framework for the parametrization of hierarchical models.
\newblock \emph{Statistical Science}, pages 59--73, 2007.

\bibitem[Papaspiliopoulos et~al.(2019)Papaspiliopoulos, Roberts, and
  Zanella]{papaspiliopoulos2018scalable}
Omiros Papaspiliopoulos, Gareth~O Roberts, and Giacomo Zanella.
\newblock Scalable inference for crossed random effects models.
\newblock \emph{Biometrika, in press. arXiv preprint arXiv:1803.09460}, 2019.

\bibitem[{R Core Team}(2018)]{R}
{R Core Team}.
\newblock \emph{R: A Language and Environment for Statistical Computing}.
\newblock R Foundation for Statistical Computing, Vienna, Austria, 2018.
\newblock URL \url{https://www.R-project.org/}.

\bibitem[Rajaratnam and Sparks(2015)]{RajaratnamSparks2015}
Bala Rajaratnam and Doug Sparks.
\newblock {MCMC-based inference in the era of big data: A fundamental analysis
  of the convergence complexity of high-dimensional chains}.
\newblock \emph{arXiv preprint arXiv:1508.00947}, 2015.

\bibitem[Roberts and Rosenthal(2015)]{roberts2015surprising}
Gareth~O Roberts and Jeffrey~S Rosenthal.
\newblock Surprising convergence properties of some simple gibbs samplers under
  various scans.
\newblock \emph{International Journal of Statistics and Probability},
  5\penalty0 (1):\penalty0 51--60, 2015.

\bibitem[Roberts and Rosenthal(2016)]{roberts2016complexity}
Gareth~O Roberts and Jeffrey~S Rosenthal.
\newblock Complexity bounds for markov chain monte carlo algorithms via
  diffusion limits.
\newblock \emph{Journal of Applied Probability}, 53\penalty0 (02):\penalty0
  410--420, 2016.

\bibitem[Roberts and Sahu(1997)]{RobertsSahu1997}
Gareth~O Roberts and Sujit~K Sahu.
\newblock Updating schemes, correlation structure, blocking and
  parameterization for the gibbs sampler.
\newblock \emph{Journal of the Royal Statistical Society: Series B (Statistical
  Methodology)}, 59\penalty0 (2):\penalty0 291--317, 1997.

\bibitem[Sahu and Roberts(1999)]{sahu1999convergence}
Sujit~K Sahu and Gareth~O Roberts.
\newblock On convergence of the em algorithm and the gibbs sampler.
\newblock \emph{Statistics and Computing}, 9\penalty0 (1):\penalty0 55--64,
  1999.

\bibitem[Smith and Roberts(1993)]{smirob93}
Adrian~FM Smith and Gareth~O Roberts.
\newblock Bayesian computation via the gibbs sampler and related markov chain
  monte carlo methods.
\newblock \emph{Journal of the Royal Statistical Society. Series B
  (Methodological)}, pages 3--23, 1993.

\bibitem[Vallejos et~al.(2015)Vallejos, Marioni, and
  Richardson]{vallejos2015basics}
Catalina~A Vallejos, John~C Marioni, and Sylvia Richardson.
\newblock Basics: Bayesian analysis of single-cell sequencing data.
\newblock \emph{PLoS computational biology}, 11\penalty0 (6):\penalty0
  e1004333, 2015.

\bibitem[Vines et~al.(1996)Vines, Gilks, and Wild]{vines1996fitting}
SK~Vines, WR~Gilks, and Pascal Wild.
\newblock Fitting bayesian multiple random effects models.
\newblock \emph{Statistics and Computing}, 6\penalty0 (4):\penalty0 337--346,
  1996.

\bibitem[Whittaker(1990)]{Whittaker1990}
Joe Whittaker.
\newblock Graphical models in applied multivariate analysis, 1990.

\bibitem[Xie and Carlin(2006)]{xie2006measures}
Yang Xie and Bradley~P Carlin.
\newblock Measures of bayesian learning and identifiability in hierarchical
  models.
\newblock \emph{Journal of Statistical Planning and Inference}, 136\penalty0
  (10):\penalty0 3458--3477, 2006.

\bibitem[Yu and Meng(2011)]{mengyu}
Yaming Yu and Xiao-Li Meng.
\newblock {To center or not to center: That is not the question -- An
  Ancillarity-Sufficiency Interweaving Strategy (ASIS) for boosting MCMC
  efficiency}.
\newblock \emph{Journal of Computational and Graphical Statistics}, 20\penalty0
  (3):\penalty0 531--570, 2011.

\end{thebibliography}

\clearpage
\begin{center}
\textbf{\huge Supplementary material for ``Multilevel linear models, Gibbs samplers and multigrid decompositions''}\\
\end{center}

\numberwithin{equation}{section}
\numberwithin{theorem}{section}

\setcounter{equation}{0}
\setcounter{figure}{0}
\setcounter{table}{0}
\setcounter{page}{1}
\setcounter{section}{0}
\makeatletter


\section{Remarks on the results' proofs}
We list proofs following the mathematical chronology. 
This is different from the order of appearance in the paper because here we start from the results for $k$-level models, namely the results from Lemma \ref{lemma:hierarchical_reparametrization_invertible} to 
Corollary \ref{coroll:rate_k_levels_CP_symm},
and then move to the ones for $3$-level models and crossed models, namely the results from Theorem \ref{thm:factorization_3} to Theorem \ref{thm:rates_3_uneven_CP}.

The results developed here rely on classical theory for Gaussian Gibbs Samplers and their autoregression representations.
The following theorem summarizes the most relevant facts for our purposes.
See Lemma 1 and Theorem 1 of \citet{RobertsSahu1997} for proofs and more detailed statements. 
\begin{theorem}\label{thm:gaussian_AR}
The Markov chain $(\bbeta(s))_{s=1,2,\dots}$ generated by a deterministic-scan Gibbs Sampler targeting a $n$-dimensional Gaussian distribution with covariance matrix $\Sigma$ evolves as a multivariate Gaussian autoregressive process with transition kernel given by
$$
\mathcal{L}(\bbeta(s+1)|\bbeta(s))= N(B\bbeta(s)+b,\Sigma-B\Sigma B^T)\,,
$$
for some fixed $n$-dimensional square matrix $B$ and $n$-dimensional vector $b$.
The rate of convergence of $(\bbeta(s))_{s=1,2,\dots}$ equals the largest modulus eigenvalue of $B$.
\end{theorem}
The autoregressive matrix $B$, which we will sometimes refer to as $B$-matrix of the Gibbs Sampler under consideration, can be explicitly derived from the target covariance $\Sigma$, following the recipe in Section 2.2 of \citet{RobertsSahu1997}.
The latter involves computing an $n$-dimensional square matrix $A$ and then exploiting the identity $B=(\I_n-L)^{-1}U$, where $\I_n$ is the $n$-dimensional identity matrix, $L$ is the lower triangular part of $A$ and $U=A-L$.
The specific form of $A$, which we will sometimes refer to as $A$-matrix, depends on the updating strategy of the Gibbs Sampler under consideration.

In principle, Theorem \ref{thm:gaussian_AR} provides a constructive way to compute the rates of convergence of Gaussian Gibbs Samplers.
In realistic high-dimensional statistical scenarios, however, it is typically very challenging to compute analytically the largest modulus eigenvalue of $B$, as the latter is a matrix with dimensionality equal to the number of parameters in the problem. 
In the following proofs, we will 
 exploit the probabilistic structure of hierarchical models, and the multigrid decomposition in particular, to reduce the problem to studying low-dimensional Gaussian autoregressions, namely the skeleton chains $\delta^{(0)}(\bbeta(s))_{s=1,2,\dots}$, with more tractable $B$-matrices.
\section{Proofs of Lemmas}

\subsection*{Proofs of Lemmas \ref{lemma:hierarchical_reparametrization_invertible} and  \ref{lemma:H_closed}}
In order to prove Lemmas \ref{lemma:hierarchical_reparametrization_invertible} and  \ref{lemma:H_closed} we need some preliminary results on matrices $M=(M_{tr})_{t,r\in T}$ indexed by elements of a tree $T$.
\begin{lemma}[Triangular matrices on trees]\label{lemma:triangular}
Suppose that a matrix $L=(L_{tr})_{t,r\in T}$ satisfies the following lower-triangularity condition
\begin{align}
\tag{L$_m$}\label{eq:L}
L_{tr}=&0
\hbox{ unless }
t\succeq r
\hbox{ and }
L_{tt}\neq 0 \quad\forall t\in T
\,.
\end{align}
Then $L$ is invertible and its inverse satisfies \eqref{eq:L}.
Similarly, if $U=(U_{tr})_{t,r\in T}$ satisfies the following upper-triangularity condition
\begin{align}
\tag{U$_m$}\label{eq:U}
U_{tr}=&0
\hbox{ unless }
t\preceq r
\hbox{ and }
U_{tt}\neq 0 \quad\forall t\in T
\,,
\end{align}
then $U$ is invertible and its inverse still satisfies \eqref{eq:U}.
\end{lemma}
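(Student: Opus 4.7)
The plan is to reduce invertibility to a standard fact about ordinary triangular matrices, and then to exploit the nilpotency coming from the tree structure to establish the structural property of the inverse.

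First, observe that any linear extension of the partial order $\preceq$ (for example, a topological sort of $T$ from root to leaves) reorders the index set so that $L$ becomes an ordinary lower-triangular matrix with nonzero diagonal entries. Its determinant is therefore $\prod_{t \in T} L_{tt} \neq 0$, so $L$ is invertible. This much is immediate, but ordinary lower-triangularity of $L^{-1}$ in the linear extension is strictly weaker than condition \eqref{eq:L}, which requires the support of $L^{-1}$ to be contained in the (in general, sparser) descendant relation on $T$.

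The substantive step is to prove that $L^{-1}$ satisfies \eqref{eq:L}. Decompose $L = D(I - M)$, where $D$ is the diagonal part of $L$ (invertible since $L_{tt} \neq 0$) and $M := I - D^{-1}L$. By construction, $M_{tr} \neq 0$ only if $t \succ r$ strictly. The key observation is that, because $T$ is a finite tree, any strictly descending chain $t \succ s_1 \succ \cdots \succ s_j$ has length at most $\ell(t)$, so $M$ is nilpotent: $M^N = 0$ for any $N$ strictly greater than the depth of $T$. Hence $(I - M)^{-1} = \sum_{j=0}^{N-1} M^j$ is a finite Neumann series, and a direct induction on $j$ shows that $(M^j)_{tr}$ is supported on pairs with $t \succeq r$. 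It follows that $L^{-1} = (I - M)^{-1} D^{-1}$ satisfies \eqref{eq:L}, with diagonal entries $1/L_{tt} \neq 0$.

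The upper-triangular case is completely analogous: writing $U = D(I - N)$, one finds $N_{tr} \neq 0$ only if $t \prec r$ strictly. Strictly ascending chains in $T$ are again bounded in length by the depth of $T$, giving nilpotency of $N$, and the same Neumann-series argument yields \eqref{eq:U} for $U^{-1}$. I anticipate no genuine obstacle; the only subtlety is recognising that finiteness of $T$ together with the tree structure forces every strict $\succ$- or $\prec$-chain to be bounded in length, which is precisely what turns $M$ (respectively $N$) into a nilpotent operator and makes the Neumann series terminate after finitely many terms.
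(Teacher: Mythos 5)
Your proof is correct and follows essentially the same route as the paper's: both factor out the diagonal (the paper rescales so that $L_{tt}=1$, you write $L=D(I-M)$), observe that the strictly triangular remainder is nilpotent because strict $\succ$-chains in the finite tree have bounded length, and invert via the terminating Neumann series whose terms visibly preserve the support condition \eqref{eq:L}. The only cosmetic difference is that the paper handles the upper-triangular case by transposition while you rerun the argument directly; both are fine.
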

\begin{proof}
Suppose that $L$ satisfies \eqref{eq:L}.
Also without loss of generality suppose $L_{tt}=1$ for all $t\in T$ by rescaling. 
Then write $L=(\mathbb{I}_T+N)$ where $\mathbb{I}_T$ is the $|T|\times |T|$ identity matrix and $N$ satisfies the following strict lower-triangularity condition
\begin{align}
\tag{L$^*_m$}\label{eq:LL}
N_{tr}=&0
\hbox{ unless }
t\succ r
\,.
\end{align}
Consider 
$N^2_{tr}=\sum_{s\in T}N_{ts}N_{sr}=\sum_{s\in T\,:\,r\prec s\prec t}N_{ts}N_{sr}$.
From the last expression it follows that $N^2_{tr}\neq 0$ implies
$r\prec t$ and $|\ell(r)-\ell(t)|\geq 2$, where $\ell(t)$ denote the level of $t$ in the tree $T$ .
Iterating the same argument we have that $N^p_{tr}\neq 0$ implies $r\prec t$ and $|\ell(r)-\ell(t)|\geq p$.
It follows that $N^p$ satisfies \eqref{eq:LL} for all $p\geq1$ and that $N^p=\boldsymbol{0}_T$ for all  $p\geq k$ where $\boldsymbol{0}_T$ is the $|T|\times |T|$ zero matrix.
Here $k$ indicates the number of levels of $T$, as in Section \ref{sec:k_levels} of the paper.
From $N^p=\boldsymbol{0}_T$ for $p\geq k$ it follows that 
$$
(\mathbb{I}_T+N)(\mathbb{I}_T+\sum_{p=1}^{k-1} (-1)^pN^p)
=
\mathbb{I}_T+(-1)^{k-1}N^{k}
=
\mathbb{I}_T
$$ 
and therefore $L^{-1}=(\mathbb{I}_T+N)^{-1}=\mathbb{I}_T+\sum_{p=1}^{k-1} (-1)^pN^p$.
Since $N^p$ satisfies \eqref{eq:LL} for all $p\geq1$ it follows that 
$L^{-1}=\mathbb{I}_T+\sum_{p=1}^{k-1} (-1)^pN^p$ satisfies \eqref{eq:L}.

The analogous statement for \eqref{eq:U} can be deduced by observing that $U$ satisfies 
\eqref{eq:U} if and only if its transpose satisfies \eqref{eq:L}.
%
\end{proof}

Lemma \ref{lemma:triangular} can be used to deduce Lemma \ref{lemma:hierarchical_reparametrization_invertible} in a straightforward way.
\begin{proof}[Proof of Lemma \ref{lemma:hierarchical_reparametrization_invertible}]
Suppose that $\bbeta_T=\Lambda\bgamma_T$ is a hierarchical reparametrization of $\bgamma_T$.
This is equivalent to say that $\Lambda=(\Lambda_{tr})_{t,r,\in T}$ is a matrix satisfying \eqref{eq:L} and that for all $t\in T$, $\bbeta_t=\sum_{s\in T}\Lambda_{ts}\gamma_s$. 
Lemma \ref{lemma:triangular} implies that $\Lambda$ is invertible and that $Z=\Lambda^{-1}$ satisfies \eqref{eq:L}.
Therefore $\bgamma_T=Z\bbeta_T$ is a hierarchical reparametrization of $\bbeta_T$.
\end{proof}

To prove Lemma \ref{lemma:H_closed} we need an additional preliminary result.

\begin{lemma}[Closure of \ref{eq:H}]\label{lemma:closure_Hm}
Suppose that $M=(M_{tr})_{t,r\in T}$ satisfies the following condition
\begin{align}\tag{H$_m$}\label{eq:H}
M_{tr}=&0
\hbox{ unless }
t\preceq r
\hbox{ or }
r\preceq t\,.
\end{align}
Then if we multiply $M$ from the right with a matrix $L$ satisfying \eqref{eq:L}, the product $ML$ still satisfy \eqref{eq:H}.
Similarly if we multiply $M$ from the left  with a matrix $U$ satisfying \eqref{eq:U}, the product $UM$ still satisfy \eqref{eq:H}.
\end{lemma}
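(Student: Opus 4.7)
The plan is to expand the matrix product pointwise and, for each product $(ML)_{tr}$, isolate the set of indices $s$ that can contribute a nonzero summand; then show that those indices force $t$ and $r$ to be comparable in the tree order. The only structural fact I will need from the tree is what I will call the \emph{common-descendant property}: if $s$ has two ancestors $t$ and $r$ (i.e.\ $t \preceq s$ and $r \preceq s$), then $t$ and $r$ are themselves comparable. This is immediate from uniqueness of the parent: the path from $s$ up to the root is totally ordered, and both $t$ and $r$ lie on it.

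Concretely, I would first consider $(ML)_{tr} = \sum_{s} M_{ts} L_{sr}$. A term can be nonzero only if $M_{ts}\neq 0$ and $L_{sr}\neq 0$, i.e.\ only if $s$ is comparable to $t$ (by \eqref{eq:H} applied to $M$) and $s \succeq r$ (by \eqref{eq:L} applied to $L$). I would then split on which side of the comparability with $t$ holds. If $s \preceq t$, then combined with $r \preceq s$ we get $r \preceq t$, so $r$ and $t$ are comparable. If instead $s \succeq t$, then both $t$ and $r$ are ancestors of $s$, and the common-descendant property immediately gives that $t$ and $r$ are comparable. Either way $(ML)_{tr}\neq 0$ forces $t,r$ to be comparable, which is \eqref{eq:H} for $ML$.

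For the second statement I would argue by symmetry. Either transpose everything (using that $M^\top$ still satisfies \eqref{eq:H}, $U^\top$ satisfies \eqref{eq:L}, and $(UM)^\top = M^\top U^\top$, so the statement for $UM$ reduces to the one already proved), or repeat the same case analysis directly on $(UM)_{tr} = \sum_s U_{ts} M_{sr}$: nonzero requires $t \preceq s$ and $s$ comparable with $r$, and the two cases $s \preceq r$ (giving $t \preceq r$) and $s \succeq r$ (giving $t$ and $r$ as common ancestors of $s$) again leave $t$ and $r$ comparable.

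There is no real obstacle here; the whole proof hinges on the single tree-theoretic observation that two ancestors of a common node are always comparable, and the rest is routine bookkeeping of which indices survive in the sum. I would state this observation as a one-line sub-lemma at the start of the proof so that both directions can cite it cleanly.
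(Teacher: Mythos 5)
Your proposal is correct and follows essentially the same route as the paper: expand the product entrywise, note that a nonzero summand forces the index $s$ to lie in the intersection of the relevant ancestor/descendant sets, and conclude comparability of $t$ and $r$ from the fact that two ancestors of a common node in a tree are comparable (a step the paper leaves implicit as ``easy to see'' and you usefully isolate as a sub-lemma). The only cosmetic difference is that you treat $ML$ first and reduce $UM$ to it by transposition, whereas the paper does $UM$ first and calls the other case analogous.
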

\begin{proof}
Consider
$(UM)_{tr}=\sum_{s\in T}U_{ts}M_{sr}$. From \eqref{eq:H} and \eqref{eq:U}, the elements $U_{ts}M_{sr}$ in the latter sum are non-zero only when $s$ belongs to the intersection of $\{s\in T\,:\,s\succeq t\}$ and $\{s\in T\,:\,s\succeq r\hbox{ or }s\preceq r\}$.
It is easy to see that the intersection of these two sets is non-empty only if $t\succeq r$ or $t\preceq r$.
Therefore $UM$ satisfies \eqref{eq:H}.
The argument to show that $ML$ satisfies \eqref{eq:H} is analogous.
\end{proof}
We now combine Lemmas \ref{lemma:triangular} and \ref{lemma:closure_Hm} to prove Lemma \ref{lemma:H_closed}.
\begin{proof}[Proof of Lemma \ref{lemma:H_closed}]
Suppose that $\bbeta_T$ satisfies \eqref{eq:hierarchical_structure}.
This is equivalent to saying that its precision matrix $Q^{(\bbeta)}=(Q^{(\bbeta)}_{tr})_{t,r\in T}$ satisfies \eqref{eq:H}.
Consider a hierarchical reparametrization of $\bbeta_T$ denoted by $\Lambda\bbeta_T$.
Then its precision matrix is given by 
$Q^{(\Lambda\bbeta)}=(\Lambda^T)^{-1}Q^{(\bbeta)}\Lambda^{-1}$ where $\Lambda^T$ denote the transpose of  $\Lambda$.
By definition of hierarchical reparametrizations, $\Lambda$ satisfies \eqref{eq:L}.
Therefore Lemma \ref{lemma:triangular} implies that $\Lambda^{-1}$ satisfies \eqref{eq:L} and, consequently, Lemma \ref{lemma:closure_Hm} implies that $Q^{(\bbeta)}\Lambda^{-1}$ satisfies \eqref{eq:H}.
Since $\Lambda$ satisfies \eqref{eq:L}, then $\Lambda^T$ satisfies \eqref{eq:U} and thus Lemma \ref{lemma:triangular} implies that $(\Lambda^T)^{-1}$ satisfies \eqref{eq:U}.
We can then apply Lemma \ref{lemma:closure_Hm} to $(\Lambda^T)^{-1}$ and $Q^{(\bbeta)}\Lambda^{-1}$ to deduce that $(\Lambda^T)^{-1}Q^{(\bbeta)}\Lambda^{-1}$ satisfies \eqref{eq:H} and thus $\Lambda\bbeta_T$ satisfies \eqref{eq:hierarchical_structure}.
\end{proof}
Corollary \ref{coroll:PNCP} follows easily from equation \eqref{eq:tree_cond_indep} and Lemma \ref{lemma:H_closed}.

\subsection*{Proof of Lemma \ref{lemma:S_closed}}
The strategy to prove Lemma \ref{lemma:S_closed} is similar to the one used to prove Lemmas \ref{lemma:hierarchical_reparametrization_invertible} and  \ref{lemma:H_closed} above, with the difference that we have to check that the symmetry condition is preserved under the operations considered. To do so we first prove two auxiliary lemmas.

\begin{lemma}[Symmetric triangular matrices on trees]\label{lemma:symmetric_triangular}
Suppose that a matrix $L=(L_{tr})_{t,r\in T}$ satisfies the following symmetric lower-triangularity condition
\begin{align}
\tag{SL$_m$}\label{eq:SL}
L_{tr}=&l_{\ell(t)\ell(r)}\1(t\succeq r)
\hbox{ and }
l_{dd}\neq 0
\hbox{ for all }
 d\in\{0,\dots,k-1\}
\,,
\end{align}
where $(l_{pd})_{p,d=0}^{k-1}$ is a $k\times k$ real valued matrix.
Then $L$ is invertible and its inverse satisfies \eqref{eq:SL}.
Similarly, if $U=(U_{tr})_{t,r\in T}$ satisfies the following symmetric upper-triangularity condition
\begin{align}
\tag{SU$_m$}\label{eq:SU}
U_{tr}=&u_{\ell(t)\ell(r)}\1(t\preceq r)
\hbox{ and }
u_{dd}\neq 0
\hbox{ for all }
 d\in\{0,\dots,k-1\}
\,,
\end{align}
with $(u_{pd})_{p,d=0}^{k-1}$ being a $k\times k$ real valued matrix,
then $U$ is invertible and its inverse still satisfies \eqref{eq:SU}.
\end{lemma}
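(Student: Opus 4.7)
The plan is to mimic the proof of Lemma \ref{lemma:triangular} but track the additional structure that the entries depend only on levels. Note first that \eqref{eq:SL} immediately implies the plain lower-triangularity condition \eqref{eq:L}, so by Lemma \ref{lemma:triangular} the matrix $L$ is invertible and $L^{-1}$ satisfies \eqref{eq:L}. What remains is to show that the entries of $L^{-1}$ depend only on the levels of their row/column indices, i.e.\ that $L^{-1}$ also satisfies \eqref{eq:SL}.

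First I would factor $L = D(\mathbb{I}_T + N)$, where $D$ is the diagonal matrix with $D_{tt} = l_{\ell(t)\ell(t)}$ and $N_{tr} = \frac{l_{\ell(t)\ell(r)}}{l_{\ell(t)\ell(t)}}\1(t \succ r)$. By construction $N$ is strictly lower-triangular in the sense of \eqref{eq:SL} (no diagonal entries), and $D^{-1}$ is diagonal with level-dependent entries $1/l_{\ell(t)\ell(t)}$. As in Lemma \ref{lemma:triangular}, $N^k = \boldsymbol{0}_T$ and hence $L^{-1} = \bigl(\sum_{p=0}^{k-1} (-1)^p N^p\bigr) D^{-1}$. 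Since the right-multiplication by $D^{-1}$ merely rescales column $r$ by $1/l_{\ell(r)\ell(r)}$ (a level-dependent quantity), it preserves the symmetric form, so it suffices to establish that each power $N^p$ satisfies \eqref{eq:SL}.

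The main step, and what I expect to be the only nontrivial point, is to show that $(N^p)_{tr}$ depends only on $(\ell(t),\ell(r))$. Expanding,
\[
(N^p)_{tr} \;=\; \sum_{s_1,\dots,s_{p-1} \in T} N_{t s_1}N_{s_1 s_2}\cdots N_{s_{p-1} r},
\]
and a summand is nonzero only when $t \succ s_1 \succ \cdots \succ s_{p-1} \succ r$. Here the tree structure is crucial: because $T$ is a tree and $r \prec t$, every strict ancestor of $r$ that is a strict descendant of $t$ lies on the unique path from $r$ to $t$. Hence, given a prescribed sequence of levels $\ell(t) > d_1 > \cdots > d_{p-1} > \ell(r)$, there exists \emph{exactly one} choice of intermediate nodes $s_1,\dots,s_{p-1}$ (or none, if $r \not\preceq t$). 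Since each factor $N_{s_i s_{i+1}}$ depends only on $(\ell(s_i),\ell(s_{i+1}))$, the whole sum reduces to
\[
(N^p)_{tr} \;=\; \1(t \succeq r)\sum_{\ell(t) > d_1 > \cdots > d_{p-1} > \ell(r)} n_{\ell(t)d_1} n_{d_1 d_2} \cdots n_{d_{p-1}\ell(r)},
\]
which depends on $t,r$ only through their levels. Thus $N^p$ satisfies \eqref{eq:SL} (with zero diagonal when $p\ge 1$), and so does $\sum_{p}(-1)^p N^p$ and hence $L^{-1}$.

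Finally, the statement for $U$ satisfying \eqref{eq:SU} follows by transposition: $U^T$ satisfies \eqref{eq:SL} with entries $u_{\ell(r)\ell(t)}$, so by what we just proved $(U^T)^{-1}$ satisfies \eqref{eq:SL}, and transposing back shows $U^{-1} = ((U^T)^{-1})^T$ satisfies \eqref{eq:SU}.
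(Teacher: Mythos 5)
Your proof is correct and follows essentially the same route as the paper's: reduce to the unit-diagonal case (the paper says ``WLOG rescale'', you factor out $D$ explicitly), expand $(\mathbb{I}_T+N)^{-1}$ as a finite Neumann series, and use the fact that in a tree the intermediate nodes between $r$ and $t$ are uniquely determined by their levels to conclude that each $N^p$ has level-dependent entries; the transposition argument for $U$ is also identical. The only cosmetic difference is that you write the general $N^p$ formula in one step where the paper computes $N^2$ and iterates.
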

\begin{proof}
Suppose that $L$ satisfies \eqref{eq:SL}.
Also without loss of generality suppose $L_{tt}=1$ for all $t\in T$ by rescaling.
Since $L$ also satisfies \eqref{eq:L}, arguing as in the proof of Lemma \ref{lemma:triangular} we can write 
$L^{-1}=\mathbb{I}_T+\sum_{p=1}^{k-1} (-1)^pN^p$ 
where
$N=L-\mathbb{I}_T$ 
and $N$ satisfies the following symmetric strict lower-triangularity condition
\begin{align}
\tag{SL$^*_m$}\label{eq:SSL}
N_{tr}=&l_{\ell(t)\ell(r)}\1(t\succ r)
\hbox{ and }
l_{dd}\neq 0
\hbox{ for all }
 d\in\{0,\dots,k-1\}
\,.
\end{align}
From 
$N^2_{tr}=
\sum_{s\in T}N_{ts}N_{sr}=
\sum_{s\in T\,:\,r\prec s\prec t}N_{ts}N_{sr}=
\sum_{\ell'\,:\,\ell(r)\prec \ell' \prec \ell(t)}l_{\ell(t)\ell'}l_{\ell'\ell(r)}$
we deduce that $N^2$ still satisfies \eqref{eq:SSL} for some different $(l_{pd})_{p,d=0}^{k-1}$.
Iterating the same argument we have that $N^p$ satisfies \eqref{eq:SSL} for all $p\geq 1$.
Since $L^{-1}=\mathbb{I}_T+\sum_{p=1}^{k-1} (-1)^pN^p$ it follows that $L^{-1}$ satisfies \eqref{eq:SL}.

The analogous statement for \eqref{eq:SU} can be deduced by observing that $U$ satisfies \eqref{eq:SU} if and only if its transpose satisfies \eqref{eq:SL}.
\end{proof}
\begin{lemma}\label{lemma:closure_Sm}
Let $M=(M_{tr})_{t,r\in T}$ and $L=(L_{tr})_{t,r\in T}$ satisfy respectively \eqref{eq:k_symmetry_rescaled} and \eqref{eq:SL}.
Then the product $ML$ satisfies \eqref{eq:k_symmetry_rescaled} for some different matrix $(c_{pd})_{p,d=1}^{k-1}$.
Similarly, if $U=(U_{tr})_{t,r\in T}$ satisfies \eqref{eq:SU} then the product $UM$ satisfy \eqref{eq:k_symmetry_rescaled}.
\end{lemma}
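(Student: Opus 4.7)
The plan is to expand $(ML)_{tr}=\sum_{s\in T}M_{ts}L_{sr}$ directly, using the explicit forms \eqref{eq:k_symmetry_rescaled} and \eqref{eq:SL}, and then show that each entry factorizes as a coefficient depending only on the levels $\ell(t),\ell(r)$ times the random-walk probability $P(t\cap r)$. The bookkeeping is driven by three identities for the Markov chain $\textbf{X}$ of \eqref{eq:non_symm_markov}.

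First, since $L_{sr}=l_{\ell(s)\ell(r)}\1(s\succeq r)$ and $M_{ts}=0$ whenever $s$ is incomparable to $t$ (this is built into \eqref{eq:k_symmetry_rescaled} via \eqref{eq:hierarchical_structure}), the surviving indices lie in $\{s:s\succeq r\}\cap\{s:s\text{ comparable to }t\}$. In a tree, if $t$ and $r$ are incomparable this intersection is empty, so $(ML)_{tr}=0$, which matches \eqref{eq:k_symmetry_rescaled} trivially. Assume therefore $r\preceq t$; the symmetric case $t\preceq r$ is analogous. Then the support decomposes as $\{t\}\sqcup\{s:s\succ t\}\sqcup\{s:r\preceq s\prec t\}$, the last being the unique path from $r$ up to the parent of $t$.

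The three identities I would invoke are (i) $P(t)=P(t\cap r)$ for $r\preceq t$, because $\textbf{X}$ passes deterministically through every ancestor of $t$; (ii) $P(t\cap s)=P(t\cap r)$ for $r\preceq s\preceq t$, a direct consequence of the Markov property combined with (i); and (iii) $\sum_{s\in T_d,\,s\succ t}P(s\mid t)=1$ for each $d>\ell(t)$, so $\sum_{s\in T_d,\,s\succ t}P(t\cap s)=P(t)$. Using $M_{tt}=P(t)$ and $-M_{ts}=c_{\ell(t)\ell(s)}P(t\cap s)$, the $s=t$ term gives $P(t)\,l_{\ell(t)\ell(r)}$; the descendant sum collapses via (iii) to $-P(t)\sum_{d>\ell(t)}c_{\ell(t)d}\,l_{d\ell(r)}$; and the path sum, by (ii), equals $-P(t\cap r)\sum_{d=\ell(r)}^{\ell(t)-1}c_{\ell(t)d}\,l_{d\ell(r)}$. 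Replacing $P(t)$ by $P(t\cap r)$ in the first two pieces via (i) yields
\[
(ML)_{tr}=P(t\cap r)\left(l_{\ell(t)\ell(r)}-\sum_{d>\ell(t)}c_{\ell(t)d}\,l_{d\ell(r)}-\sum_{d=\ell(r)}^{\ell(t)-1}c_{\ell(t)d}\,l_{d\ell(r)}\right),
\]
and one reads off the new matrix $(c'_{pd})$ from the bracketed expression, which depends only on the levels.

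The $UM$ case follows by the same strategy with ancestors and descendants interchanged; alternatively, since $M$ is symmetric and $U^{\top}$ satisfies \eqref{eq:SL}, one has $(UM)^{\top}=MU^{\top}$ and reduces to the case already handled. The main obstacle is the careful tree-combinatorial case analysis to pin down which $s$ actually contribute, together with the verification of identity (ii), which is the key algebraic step: it turns the raw product $P(s)P(t\mid s)$ arising from $P(t\cap s)$ into the $r$-indexed factor $P(t\cap r)$ and is precisely what lets the common factor extract cleanly across the descendant and path contributions. The diagonal case $r=t$ is a degenerate instance (empty path sum) yielding $(ML)_{tt}=P(t)\cdot[\text{level-dependent constant}]$, of exactly the form needed to plug back into Lemma \ref{lemma:S_closed}.
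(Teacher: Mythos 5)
Your proof is correct and follows essentially the same route as the paper's own argument: both expand $(ML)_{tr}$, restrict the support to $\{s\succeq t\}\cup\{s:r\preceq s\prec t\}$ using the triangularity of $L$ and the tree-sparsity of $M$, collapse the descendant sum via $\sum_{s\in T_d,\,s\succ t}P(t\cap s)=P(t)$ and the path sum via $P(t\cap s)=P(t\cap r)$, and handle $UM$ by transposition. The only cosmetic differences are that the paper invokes Lemma \ref{lemma:closure_Hm} for the incomparable case and folds the $s=t$ term into a unified coefficient $m_{\ell(t)\ell(s)}$, whereas you treat both directly.
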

\begin{proof}
First consider the product $ML$.
Lemma \ref{lemma:closure_Hm} implies that $ML$ satisfies \eqref{eq:H} and therefore $(ML)_{tr}=0$ unless $t\preceq r$ or $t\succeq r$.
Given $t\preceq r$ or $t\succeq r$ and using \eqref{eq:k_symmetry_rescaled} and \eqref{eq:SL} we have
\begin{align*}
&(ML)_{tr}
=
\sum_{s\in T}M_{ts}L_{sr}
=
\sum_{s\in T\,:\,s\succeq r}M_{ts}L_{sr}
\\&=
\sum_{s\in T\,:\,s\succeq (t\vee r) }M_{ts}L_{sr}
+
\sum_{s\in T\,:\,r\preceq s\prec t}M_{ts}L_{sr}
\\&=
\sum_{s\in T\,:\,s\succeq (t\vee r)}
m_{\ell(t)\ell(s)}l_{\ell(s)\ell(r)}P(s)
+
\sum_{s\in T\,:\,r\preceq s\prec t}
m_{\ell(t)\ell(s)}l_{\ell(s)\ell(r)}P(t)
\\&=
P(t\vee r)
\sum_{\ell'=\ell(t\vee r)}^{k-1}m_{\ell(t)\ell'}l_{\ell'\ell(r)}
+
P(t)
\sum_{\ell'=\ell(r)}^{\ell(t)-1}m_{\ell(t)\ell'}l_{\ell'\ell(r)}
\\&=
P(t\vee r)
\sum_{\ell'=\ell(r)}^{k-1}m_{\ell(t)\ell'}l_{\ell'\ell(r)}\,,
\end{align*}
where the sum from $\ell(r)$ to $\ell(t)$ equals 0 if $\ell(r)\geq \ell(t)$.
The latter equation implies that $ML$ satisfies \eqref{eq:k_symmetry_rescaled}.

To prove that the product $UM$ satisfies \eqref{eq:k_symmetry_rescaled} note that $M$ and $U$ satisfying \eqref{eq:k_symmetry_rescaled} and \eqref{eq:SU} respectively is equivalent to $M^T$ and $U^T$ satisfying \eqref{eq:k_symmetry_rescaled} and \eqref{eq:SL} respectively.
Therefore, by the first part of this Lemma, $(UM)^T=M^TU^T$ satisfies \eqref{eq:SL} and thus $((UM)^T)^T=UM$ satisfies \eqref{eq:k_symmetry_rescaled}.
\end{proof}
We now combine Lemmas \ref{lemma:symmetric_triangular} and \ref{lemma:closure_Sm} to prove Lemma \ref{lemma:S_closed}.
\begin{proof}[Proof of Lemma \ref{lemma:S_closed}]
Suppose that the precision matrix $Q^{(\tbbeta)}=(Q^{(\tbbeta)}_{tr})_{t,r\in T}$ of $\tbbeta_T$ satisfies \eqref{eq:k_symmetry_rescaled}.
Consider a symmetric hierarchical reparametrization of $\tbbeta_T$ denoted by $\Lambda\tbbeta_T$.
Then its precision matrix is given by 
$Q^{(\Lambda\tbbeta)}=(\Lambda^T)^{-1}Q^{(\tbbeta)}\Lambda^{-1}$ where $\Lambda^T$ denote the transpose of  $\Lambda$.
By definition of symmetric hierarchical reparametrizations, $\Lambda$ satisfies \eqref{eq:SL}.
Therefore Lemma \ref{lemma:symmetric_triangular} implies that $\Lambda^{-1}$ satisfies \eqref{eq:SL} and, consequently, Lemma \ref{lemma:closure_Sm} implies that $Q^{(\tbbeta)}\Lambda^{-1}$ satisfies \eqref{eq:k_symmetry_rescaled}.
Since $\Lambda$ satisfies \eqref{eq:SL}, then $\Lambda^T$ satisfies \eqref{eq:SU} and thus Lemma \ref{lemma:triangular} implies that $(\Lambda^T)^{-1}$ satisfies \eqref{eq:SU}.
We can then apply Lemma \ref{lemma:closure_Sm} to $(\Lambda^T)^{-1}$ and $Q^{(\tbbeta)}\Lambda^{-1}$ to deduce that $Q^{(\Lambda\tbbeta)}=(\Lambda^T)^{-1}Q^{(\tbbeta)}\Lambda^{-1}$ satisfies \eqref{eq:k_symmetry_rescaled}.
\end{proof}

\subsection*{Proof of Lemma \ref{lemma:delta_expansion}}\label{proof:delta_expansion}
\begin{proof}[Proof of Lemma \ref{lemma:delta_expansion}]
Suppose $\bbeta_T$ has zero mean (otherwise replace $\bbeta_T$ by $\bbeta_T-\mathbb{E}[\bbeta]$).
As in Section \ref{sec:multigrid_k} of the paper, given any $r$ and $t$ in $T$ we denote $P(X_{\ell(t)}=t|X_{\ell(r)}=r)$ by $P(t|r)$.
Using \eqref{eq:hierarchical_structure}, for any $d\in\{0,\dots,k-1\}$ and $t\in T_d$, we can write the full conditional expectation of $\beta_t$ as
\begin{align*}
\E[\beta_t|\bbeta_{T\backslash t}]
&=
\sum_{r\prec t}
A_{tr}
\beta_r
+
\sum_{r\succ t}
A_{tr}
\beta_r\,,
\end{align*}
where $A_{tr}=-\frac{Q_{tr}}{Q_{tt}}$ for any $r\neq t$.
Note that \eqref{eq:k_symmetry_rescaled} implies
\begin{equation}\label{eq:symm_A}
A_{tr}=
\frac{c_{\ell(t)\ell(r)}P(r\cap t)}{P(t)}
=
c_{\ell(t)\ell(r)}P(r|t)\,.
\end{equation}
It follows 
\begin{align*}
\E[\beta_t|\bbeta_{T\backslash t}]
&=
\sum_{r\prec t}
c_{d\ell(r)}P(r|t)
\beta_r
+
\sum_{r\succ t}
c_{d\ell(r)}P(r|t)
\beta_r
\\
&=
\sum_{\ell\neq d}
c_{d\ell}
\E[\beta_{X_\ell}|\bbeta_T,X_d=t]\,.
\end{align*}
Since the last equation does not depend on $\bbeta^{(d)}$ we have
$
\E[\beta_t|\bbeta_{T\backslash t}]=
\E[\beta_t|\bbeta\backslash\bbeta^{(d)}].$
For any $r\in T_p$, by definition of $\phi^{(p)}_r\bbeta^{(d)}$, we have
\begin{align*}
\E[\phi^{(p)}_r\bbeta^{(d)}|\bbeta\backslash\bbeta^{(d)}]
&=
\sum_{t\in T_d}
\Pr(t|r)
\E[\beta_{t}|\bbeta\backslash\bbeta^{(d)}]
\\
&=
\sum_{t\in T_d}
\Pr(t|r)
\sum_{\ell\neq d}
c_{d\ell}
\E[\beta_{X_\ell}|\bbeta_T,X_d=t]
\\
&=
\sum_{\ell\neq d}
c_{d\ell}
\sum_{t\in T_d}
\Pr(t|r)
\E[\beta_{X_\ell}|\bbeta_T,X_d=t]
\\
&=
\sum_{\ell\neq d}
c_{d\ell}
\E[\beta_{X_\ell}|\bbeta_T,X_p=r]
\,.
\end{align*}
From the latter equation and the definition of $\delta^{(p)}_r\bbeta^{(d)}$ it follows
\begin{align*}
\E[\delta^{(p)}_r\bbeta^{(d)}|\bbeta\backslash\bbeta^{(d)}]
&=
\E[\phi^{(p)}_r\bbeta^{(d)}|\bbeta\backslash\bbeta^{(d)}]-\E[\phi^{(p-1)}_{pa(r)}\bbeta^{(d)}|\bbeta\backslash\bbeta^{(d)}]
\\
&=
\sum_{\ell\neq d}
c_{d\ell}
\left(
\E[\beta_{X_\ell}|\bbeta_T,X_p=r]
-
\E[\beta_{X_\ell}|\bbeta_T,X_{p-1}=pa(r)]
\right)
\\
&=
\sum_{\ell\in\{p,\dots,k-1\}\backslash d}
c_{d\ell}
\delta^{(p)}_r\bbeta^{(\ell)}
\,.
\end{align*}
\end{proof}

\section{Proofs for hierarchical models with an arbitrary number of levels}

\subsection*{Proof or Theorem \ref{thm:factorization_general}}\label{proof:factorization_general}
To prove Theorem \ref{thm:factorization_general} we first need the following lemma.
\begin{lemma}\label{lemma:residual_independence}
Given $d\in\{0,\dots,k-1\}$ and $p,p'\in\{0,\dots,d\}$ with $p\neq p'$, 
\begin{align*}
\delta^{(p)}\bbeta^{(d)}
\bot
\delta^{(p')}\bbeta^{(d)}
\,|\,
\bbeta\backslash\bbeta^{(d)}
\,.
\end{align*}
\end{lemma}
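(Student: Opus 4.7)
The plan is to exploit Gaussianity to reduce the claimed conditional independence to a covariance computation. Since $\bbeta_T$ is jointly Gaussian, the conditional law of $\bbeta^{(d)}$ given $\bbeta\setminus\bbeta^{(d)}$ is Gaussian too, and both $\delta^{(p)}\bbeta^{(d)}$ and $\delta^{(p')}\bbeta^{(d)}$ are linear functions of $\bbeta^{(d)}$; it therefore suffices to show that every pairwise conditional covariance $\mathrm{Cov}\bigl(\delta^{(p)}_r\bbeta^{(d)},\delta^{(p')}_{r'}\bbeta^{(d)}\bigm|\bbeta\setminus\bbeta^{(d)}\bigr)$ vanishes. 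A first observation is that the precision block $Q^{(d,d)}$ on $T_d$ is diagonal: by \eqref{eq:hierarchical_structure} any two distinct nodes in $T_d$ are incomparable, hence the corresponding off-diagonal entries of $Q$ vanish; and \eqref{eq:k_symmetry_rescaled} gives $Q_{tt}=P(t)$. Conditionally, therefore, the components $(\beta_t)_{t\in T_d}$ are independent with variance $1/P(t)$.

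The core calculation is the following formula. For $q\le q'$, $s\in T_q$ and $s'\in T_{q'}$,
$$ C(s,s') \;:=\; \mathrm{Cov}\bigl(\phi^{(q)}_s\bbeta^{(d)},\phi^{(q')}_{s'}\bbeta^{(d)}\bigm|\bbeta\setminus\bbeta^{(d)}\bigr) \;=\; \sum_{t\in T_d}\frac{P(t|s)\,P(t|s')}{P(t)}. $$
Only indices $t$ with $t\succeq s$ and $t\succeq s'$ contribute, so the sum vanishes whenever $s$ and $s'$ are incomparable; and if $s\preceq s'$ (the only possible comparability direction given $q\le q'$), the Markov property yields $P(t|s)=P(s'|s)P(t|s')$ and $P(t)=P(s')P(t|s')$ for $t\succeq s'$, whence
$$ C(s,s') \;=\; \frac{P(s'|s)}{P(s')}\sum_{t\succeq s'}P(t|s') \;=\; \frac{P(s'|s)}{P(s')} \;=\; \frac{1}{P(s)}. $$
In either case, $C(s,s')=\1(s\preceq s')/P(s)$ for $q\le q'$.

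To finish, assume without loss of generality that $p<p'$ and expand $\delta^{(p')}_{r'}\bbeta^{(d)}=\phi^{(p')}_{r'}\bbeta^{(d)}-\phi^{(p'-1)}_{pa(r')}\bbeta^{(d)}$, with the analogous expression for $\delta^{(p)}_r\bbeta^{(d)}$ when $p\ge 1$ and $\delta^{(0)}\bbeta^{(d)}=\phi^{(0)}_{t_0}\bbeta^{(d)}$ when $p=0$. Bilinearity turns the target covariance into a sum of two or four terms of the shape $C(s,r')-C(s,pa(r'))$ with $s$ at level at most $p<p'$. Because such $s$ sits at a strictly higher level than $r'$, one has $s\preceq r'\iff s\preceq pa(r')$, so $C(s,r')=C(s,pa(r'))$ and all contributions cancel in pairs, giving zero conditional covariance and hence the desired independence. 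The main obstacle is the covariance formula for $C(s,s')$: it is the identification $Q_{tt}=P(t)$ provided by \eqref{eq:k_symmetry_rescaled} that makes the conditional variance of $\beta_t$ equal to $1/P(t)$ and in turn allows the Markov property to collapse the sum to its clean closed form; once this is in hand the final cancellation is a purely combinatorial statement about ancestry in a tree.
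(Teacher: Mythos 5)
Your proof is correct and follows essentially the same route as the paper's: both arguments reduce the claim to a conditional covariance computation, use \eqref{eq:hierarchical_structure} and \eqref{eq:k_symmetry_rescaled} to get that the $(\beta_t)_{t\in T_d}$ are conditionally independent with variance $1/P(t)$, derive the identity $\mathrm{Cov}(\phi^{(q)}_s\bbeta^{(d)},\phi^{(q')}_{s'}\bbeta^{(d)}\mid\bbeta\backslash\bbeta^{(d)})=\1(s\preceq s')/P(s)$ (the paper's equation \eqref{eq:zero_simplification}), and conclude by expanding the $\delta$'s and cancelling. Your organization of the final cancellation (pairing $C(s,r')-C(s,pa(r'))$ and noting that ancestry at level $\le\ell(pa(r'))$ is the same for $r'$ and $pa(r')$) is a slightly tidier bookkeeping of the same four-term cancellation the paper writes out explicitly.
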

\begin{proof}
Let $d\in\{0,\dots,k-1\}$, $p,p'\in\{0,\dots,d\}$ with $p< p'$ and $\bbeta\backslash\bbeta^{(d)}$ be fixed.
To make the notation more compact we denote $\E[\cdot|\bbeta\backslash\bbeta^{(d)}]$ by $\tE[\cdot]$, $\phi^{(p)}_{r}\bbeta^{(d)}$ by $\tilde{\phi}^{(p)}_{r}$ and $P(X_{p'}=r'|X_p=r)$ by $P(r'|r)$ for all $r\in T_p$ and $r'\in T_{p'}$.
By replacing $\beta_t$ with $\beta_t-\tE[\beta_t]$ we can suppose without loss of generality that $\tE[\beta_t]=0$ for all $t\in T_d$ and therefore 
$\tE[\tilde{\phi}^{(p)}_{r}]=\tE[\tilde{\phi}^{(p')}_{r'}]=0$ and 
$\tE[\delta^{(p)}_r\bbeta^{(d)}]=\tE[\delta^{(p')}_{r'}\bbeta^{(d)}]=0$
for all $r\in T_{p}$ and $r'\in T_{p'}$.
By definition $\tilde{\phi}^{(p)}_r=\sum_{s\in T_{p'}}
P(s|r)
\tilde{\phi}^{(p')}_s$ and therefore
\begin{equation}
\tE[\tilde{\phi}^{(p)}_r\tilde{\phi}^{(p')}_{r'}]
=
\sum_{s\in T_{p'}}
P(s|r)
\tE[\tilde{\phi}^{(p')}_s\tilde{\phi}^{(p')}_{r'}]
=
P(r'|r)
\tE[(\tilde{\phi}^{(p')}_{r'})^2]\,,\label{eq:phi_dot_product}
\end{equation}
where we used $\tE[\tilde{\phi}^{(p')}_s\tilde{\phi}^{(p')}_{r'}]=0$  for $r'\neq s$ and $r',s\in T_{p'}$, which follows from the conditional independence of $(\beta_t)_{t\in T_d\,,\,t\succeq r'}$ and $(\beta_t)_{t\in T_d\,,\,t\succeq s}$ given $\bbeta\backslash\bbeta^{(d)}$.
Note that $P(r'|r)$ in \eqref{eq:phi_dot_product} could be 0.
From $\tE[\beta_t]=0$ for any $t\in T_d$ and \eqref{eq:k_symmetry_rescaled} we have $\tE[\beta_t^2]=P(t)^{-1}$ and therefore
\begin{align}
\tE[(\tilde{\phi}^{(p')}_{r'})^2]
=&
\tE[(\sum_{t\in T_d}P(t|r')\beta_t)^2]
=
\sum_{t\in T_d}P(t|r')^2\tE[\beta_t^2]
\nonumber\\
=&
\sum_{t\in T_d}\frac{P(t|r')^2}{P(t)}
=
\frac{1}{P(r')}\sum_{t\in T_d}P(t|r')
=
\frac{1}{P(r')}\,.
\label{eq:phi_squared}
\end{align}
Combining \eqref{eq:phi_dot_product} and \eqref{eq:phi_squared} we have 
$\tE[\tilde{\phi}^{(p)}_r\tilde{\phi}^{(p')}_{r'}]=0$ if $r'\nsucc r$ and
\begin{align}\label{eq:zero_simplification}
\tE[\tilde{\phi}^{(p)}_r\tilde{\phi}^{(p')}_{r'}]
=&
\frac{P(r'|r)}{P(r')}
=
\frac{1}{P(r)}
&
\hbox{if }r'\succ r\,.
\end{align}
From the last equality and the definition of $\delta^{(p)}_r\bbeta^{(d)}$ in \eqref{eq:delta_defi} we have
\begin{align*}
\tE&[\delta^{(p)}_r\bbeta^{(d)}\delta^{(p')}_{r'}\bbeta^{(d)}]
=\\&
\tE[(\tilde{\phi}^{(p)}_r-\tilde{\phi}^{(p-1)}_{pa(r)})(\tilde{\phi}^{(p')}_{r'}-\tilde{\phi}^{(p'-1)}_{pa(r')})]
=\\&
\tE[
\tilde{\phi}^{(p)}_r\tilde{\phi}^{(p')}_{r'}
-\tilde{\phi}^{(p-1)}_{pa(r)}\tilde{\phi}^{(p')}_{r'}
-\tilde{\phi}^{(p)}_r\tilde{\phi}^{(p'-1)}_{pa(r')}
+\tilde{\phi}^{(p-1)}_{pa(r)}\tilde{\phi}^{(p'-1)}_{pa(r')}
]
=\\&
\frac{P(r'|r)}{P(r')}
-
\frac{P(r'|pa(r))}{P(r')}
-
\frac{P(pa(r')|r)}{P(pa(r'))}
+
\frac{P(pa(r')|pa(r))}{P(pa(r'))}
=0,
\end{align*}
where the last equality is trivial if $pa(r)\nprec r'$ and can be deduced from \eqref{eq:zero_simplification} otherwise.
The desired conditional independence follows from
$\tE[\delta^{(p)}_r\bbeta^{(d)}\delta^{(p')}_{r'}\bbeta^{(d)}]=0=\tE[\delta^{(p)}_r\bbeta^{(d)}]\tE[\delta^{(p')}_{r'}\bbeta^{(d)}]$
for all $r\in T_p$ and $r'\in T_{p'}$.
\end{proof}

\begin{proof}[Proof of Theorem \ref{thm:factorization_general}]
Theorem \ref{thm:factorization_general} follows easily from Lemmas \ref{lemma:delta_expansion} and \ref{lemma:residual_independence} as follows.
For each $d\in\{0,\dots,k-1\}$
the sampling step 
\begin{align}\label{eq:sample_original}
\bbeta^{(d)}(s+1)
\;\sim\;
\mathcal{L}\left(\bbeta^{(d)}|
\big(\bbeta^{(\ell)}(s+1)\big)_{0\leq \ell<d}\,,
\big(\bbeta^{(\ell)}(s)\big)_{d<\ell\leq k-1}
\right)
\end{align}
in Sampler \ref{sampler:GSk} is equal in distribution to sampling jointly the (d+1) residuals 
\begin{align*}
(\delta^{(p)}\bbeta^{(d)}(s+1))_{p=0}^d
\;\sim\;
\mathcal{L}\left((\delta^{(p)}\bbeta^{(d)})_{p=0}^d|
\big(\bbeta^{(\ell)}(s+1)\big)_{0\leq \ell<d}\,,
\big(\bbeta^{(\ell)}(s)\big)_{d<\ell\leq k-1}
\right)\,.
\end{align*}
From the conditional independence statement in Lemma  \ref{lemma:residual_independence} the latter is equivalent to sampling independently each residual $\delta^{(p)}\bbeta^{(d)}(s+1)$ from
$$
\mathcal{L}\left(\delta^{(p)}\bbeta^{(d)}|
\big(\bbeta^{(\ell)}(s+1)\big)_{0\leq \ell<d}\,,
\big(\bbeta^{(\ell)}(s)\big)_{d<\ell\leq k-1}
\right)\,.
$$
Moreover, from Lemma \ref{lemma:delta_expansion}
\begin{align*}
\mathcal{L}\left(\delta^{(p)}\bbeta^{(d)}|
\bbeta\backslash\bbeta^{(d)}
\right)
=
\mathcal{L}\left(\delta^{(p)}\bbeta^{(d)}|
(\delta^{(p)}\bbeta^{(\ell)})_{\ell\in\{p,\dots,k-1\}\backslash d}
\right)\,.
\end{align*}
Therefore the original sampling step in \eqref{eq:sample_original} is equivalent to sampling independently
\begin{align}\label{eq:sample_reparametrized}
\delta^{(p)}\bbeta^{(d)}(s+1)
\;\sim\;
\mathcal{L}\left(\delta^{(p)}\bbeta^{(d)}|
\big(\delta^{(p)}\bbeta^{(\ell)}(s+1)\big)_{p\leq \ell<d}\,,
\big(\delta^{(p)}\bbeta^{(\ell)}(s)\big)_{d<\ell\leq k-1}
\right)\,,
\end{align}
for $p=0,\dots,d$.
The thesis follows from the equivalence between \eqref{eq:sample_original} and \eqref{eq:sample_reparametrized}.
\end{proof}

\subsection*{Proofs of Corollary \ref{coroll:rate_k_levels}, Theorem \ref{thm:kronecker} and Theorem \ref{thm:ordering_CP}}
\begin{proof}[Proof of Corollary \ref{coroll:rate_k_levels}]
The map from $\bbeta_T$ to $(\delta^{(0)}\bbeta,\dots,\delta^{(k-1)}\bbeta)$ is an injective linear transformation. 
The injectivity holds because for any $d\in\{0,\dots,k-1\}$ and $t\in T_d$ we can reconstruct $\beta_t$ from $(\delta^{(0)}\bbeta^{(d)},\dots,\delta^{(d)}\bbeta^{(d)})$
$$
\sum_{r\preceq t}\delta_r^{(\ell(r))}\bbeta^{(d)}
=
\phi_{t_0}^{(0)}\bbeta^{(d)}+
\sum_{t_0\prec r\preceq t}\left(\phi_r^{(\ell(r))}\bbeta^{(d)}-\phi_{pa(r)}^{(\ell(r)-1)}\bbeta^{(d)}\right)
=
\phi_{t}^{(d)}\bbeta^{(d)}
=
\beta_t\,.
$$
It follows that 
$(\delta\bbeta(s))_{s\in \N}=(\delta^{(0)}\bbeta(s),\dots,\delta^{(k-1)}\bbeta(s))_{s\in \N}$ is a Markov chain with the same rate of convergence of the original chain $(\bbeta(s))_{s\in \N}$.
Then the thesis follows from Theorem \ref{thm:factorization_general} and the fact that the rate of convergence of a collection of independent Markov chains equals the supremum of the rates of convergence of the single chains.
\end{proof}

\begin{proof}[Proof of Theorem \ref{thm:kronecker}]
We are interested in the rate of convergence of 
the blocked sampler targeting 
$\delta^{(p)}\bbeta=(\delta^{(p)}\bbeta^{(p)},\dots,\delta^{(p)}\bbeta^{(k-1)})$ 
and evolving according to
\eqref{eq:blocked_gibbs}.
Consider first the case $p\in\{1,\dots,k-1\}$.
Note that $\delta^{(p)}\bbeta$ has a singular variance-covariance matrix because for each $t\in T_{p-1}$ and $d\in\{p,\dots,k-1\}$ it follows from \eqref{eq:phi_defi} and \eqref{eq:delta_defi} that
\begin{align}\label{eq:delta_sum_zero}
\sum_{r\in ch(t)}P(r|t)\delta^{(p)}_r\bbeta^{(d)}=0
\end{align}
and therefore some elements of $(\delta^{(p)}\bbeta)$ are linear combinations of the others.
In order to use standard tools it is more convenient to work with non-singular Gaussian random vectors.
To do so it is sufficient to consider a sub-vector of $\delta^{(p)}\bbeta$ obtained by removing from $T_{p}$ one children node for each parent node in $T_{p-1}$.
More formally, let $f$ be an arbitrary map from $T_{p-1}$ to $T_{p}$ such that $f(t)\in ch(t)$ for all $t\in T_{p-1}$ and then define the subset $T'_p\subseteq T_p$ as $T'_p= T_p\backslash f(T_{p-1})$.
It is then easy to see that the resulting sub-vector
$\delta_{T'_p}^{(p)}\bbeta
=(\delta_{T'_p}^{(p)}\bbeta^{(p)},\dots,\delta_{T'_p}^{(p)}\bbeta^{(k-1)})
$
with $\delta_{T'_p}^{(p)}\bbeta^{(d)}=(\delta_r^{(p)}\bbeta^{(d)})_{r\in T'_p}$ for all $d\in\{p,\dots,k-1\}$ has an invertible variance-covariance matrix.
Moreover, since each $\delta^{(p)}\bbeta^{(d)}$ is a function of the corresponding $\delta_{T'_p}^{(p)}\bbeta^{(d)}$ via \eqref{eq:delta_sum_zero} it follows that the blocked sampler targeting 
$\delta^{(p)}\bbeta$ 
and evolving according to
\eqref{eq:blocked_gibbs} is equivalent in distribution to a blocked Gibbs Sampler targeting $\delta_{T'_p}^{(p)}\bbeta$ and evolving according to
\begin{align}\label{eq:blocked_gibbs_reduced}
\delta^{(p)}_{T'_p}\bbeta^{(d)}(s+1)
\;\sim\;
\mathcal{L}\left(\delta^{(p)}_{T'_p}\bbeta^{(d)}|
(\delta^{(p)}_{T'_p}\bbeta^{(\ell)}(s+1))_{p\leq \ell<d},(\delta^{(p)}_{T'_p}\bbeta^{(\ell)}(s))_{d<\ell\leq k-1}
\right)\,,
\end{align}
for $d\in\{p,\dots,k-1\}$.
Let 
 $A_{\delta^{(p)}_{T'_p}\bbeta}=(A_{\delta_r^{(p)}\bbeta^{(d)}\,\delta_{r'}^{(p)}\bbeta^{(d')}})_{r,r'\in T'_p\,,\, d,d'\in\{p,\dots,k-1\}}$ be the $A$-matrix associated to the Gibbs sampler in \eqref{eq:blocked_gibbs_reduced}, defined by
\begin{multline}\label{eq:A_mat_defi}
\E[\delta^{(p)}_r\bbeta^{(d)}|\delta^{(p)}_{T'_p}\bbeta\backslash \delta^{(p)}_r\bbeta^{(d)}]
-\E[\delta^{(p)}_r\bbeta^{(d)}]
=\\
\sum_{\delta^{(p)}_{r'}\bbeta^{(d')}\in \delta^{(p)}_{T'_p}\bbeta\backslash \delta^{(p)}_r\bbeta^{(d)}}
A_{\delta_r^{(p)}\bbeta^{(d)}\,\delta_{r'}^{(p)}\bbeta^{(d')}}
\left(\delta^{(p)}_{r'}\bbeta^{(d')}-\E[\delta^{(p')}_{r'}\bbeta^{(d')}]\right)
\,.
\end{multline}
See the discussion of Theorem \ref{thm:gaussian_AR}
for some details and references on $A$-matrices.
Then Lemma \ref{lemma:delta_expansion} implies that 
$A_{\delta_r^{(p)}\bbeta^{(d)}\,\delta_{r'}^{(p)}\bbeta^{(d')}}=c_{dd'}$
if $r=r'$ and $d'\in\{p,\dots,k-1\}\backslash d$ and 0 otherwise.
The latter is equivalent to the equation
\begin{equation}\label{eq:A_matrix_delta}
A_{\delta^{(p)}_{T'_p}\bbeta}=\left(C^{(p)}-\I_{k-p}\right)\varotimes \I_{|T'_p|}
\end{equation}
where $C^{(p)}$ is the $(k-p)\times (k-p)$ square matrix $C^{(p)}=(c_{dd'})_{d,d'=p}^{k-1}$, $\I_n$ denotes the $n$ dimensional identity matrix, $\varotimes$ denotes the Kronecker product of matrices and $|T'_p|=|T_p|-|T_{p-1}|$ is the cardinality of $T'_p$.
Theorem \ref{thm:gaussian_AR} implies that
 the rate of convergence of the Gibbs sampler in \eqref{eq:blocked_gibbs_reduced} equals the largest modulus eigenvalue of $B=(\I_{(k-p)|T'_p|}-L)^{-1}U$, where $L$ is the lower triangular part of $A_{\delta^{(p)}_{T'_p}\bbeta}$ and $U=A_{\delta^{(p)}_{T'_p}\bbeta}-L$.
Using basic properties of the Kronecker product we can see that 
\begin{align*}
B=&
(\I_{(k-p)}\varotimes\I_{|T'_p|}-\tilde{L}\varotimes\I_{|T'_p|})^{-1}\left(\tilde{U}\varotimes \I_{|T'_p|}\right)
\\=&
\left((\I_{(k-p)}-\tilde{L})\varotimes \I_{|T'_p|}\right)^{-1}\left(\tilde{U}\varotimes \I_{|T'_p|}\right)
\\=&
\left((\I_{(k-p)}-\tilde{L})^{-1}\varotimes \I_{|T'_p|}\right)\left(\tilde{U}\varotimes \I_{|T'_p|}\right)
\\=&
\left((\I_{(k-p)}-\tilde{L})^{-1} \tilde{U}\right)\varotimes \I_{|T'_p|}
\end{align*}
where $\tilde{L}$ is the lower triangular part of $(C^{(p)}-\I_{k-p})$ and $\tilde{U}=(C^{(p)}-\I_{k-p})-\tilde{L}$.
From
$B=\tilde{B}\varotimes \I_{|T'_p|}$, where $\tilde{B}=(\I_{(k-p)}-\tilde{L})^{-1} \tilde{U}$, it follows that the unique eigenvalues of $B$ are the same as the unique eigenvalues of $\tilde{B}$ and thus the largest modulus eigenvalue of $B$ equals the one of $\tilde{B}$.
The case $p=0$ is analogous (with no need to consider a sub-vector of $\delta^{(0)}\bbeta$ and $T'_0$ being equal to $T_0$ itself) and trivial to check.
\end{proof}

\begin{proof}[Proof of Theorem \ref{thm:ordering_CP}]
Theorem \ref{thm:factorization_general} shows that $(\delta^{(p)}\bgamma(s))_s$ for $p\in\{0,\dots,k-1\}$ are $k$ independent Markov chains.
Arguing as in the proof of Theorem \ref{thm:kronecker} above for each $p$ we consider $(\delta_{T'_p}^{(p)}\bgamma(s))_s$ rather than $(\delta^{(p)}\bgamma(s))_s$ to avoid working with singular Gaussian random vectors.
For any $p$ from 0 to $k-1$, \eqref{eq:tree_cond_indep} implies that the $Q$-matrix of $\delta_{T'_p}^{(p)}\bgamma$ is tridiagonal
(with $k-p$ blocks corresponding to $\delta_{T'_p}^{(p)}\bgamma^{(p)}$ up to $\delta_{T'_p}^{(p)}\bgamma^{(k-1)}$).
It follows by Theorem 5 of \citep{RobertsSahu1997} that the largest modulus eigenvalue of the autoregressive matrix 
$B$ 
 of the Gibbs Sampler in \eqref{eq:blocked_gibbs_reduced} coincides with the square of the largest eigenvalue of the corresponding $A$-matrix.
We denote the latter by $\lambda(A_{\delta_{T'_p}^{(p)}\bgamma(s)})^2$, where $\lambda(\cdot)$ is the function mapping a symmetric matrix to its largest eigenvalue.
Then, using \eqref{eq:A_matrix_delta} from the proof of Theorem \ref{thm:kronecker}, it follows $\lambda(A_{\delta_{T'_p}^{(p)}\bgamma(s)})^2=\lambda(C^{(p)}-\I_{k-p})^2$ and thus the rate of convergence of $(\delta^{(p)}\bgamma(s))_s$ is given by $\rho(\delta^{(p)}\bgamma(s))=\lambda(C^{(p)}-\I_{k-p})^2$. 
Noting that $C^{(p+1)}-\I_{k-(p+1)}$ is obtained from $C^{(p)}-\I_{k-p}$ by removing the first row and column, the desired inequality $\rho(\delta^{(p)}\bgamma(s))=\lambda(C^{(p)}-\I_{k-p})^2 \geq \lambda(C^{(p+1)}-\I_{k-(p+1)})^2=\rho(\delta^{(p+1)}\bgamma(s))$ follows by applying the Cauchy interlacing theorem (see e.g. \citealp{bhatia2013}), which states that the eigenvalues of a principal submatrix of a symmetric matrix interlace the original eigenvalues.
\end{proof}

\subsection{Convergence rates for the example in Section 7.6 of the paper}
\label{sec:supp_rates_expressions}

If instead $\lambda_1=0$ (i.e. using $(\gamma_i)_i$ with $\gamma_i=\mu+a_i$ at level 1) we have four different parametrization with rates of convergence given by
\begin{align*}
\rho_{000}
&=
\frac{1}{2}\left(
1
+\ratio{2}{3}(\ratio{4}{3}-\ratio{1}{2})
+\sqrt{
(1+\ratio{2}{3}(\ratio{4}{3}-\ratio{1}{2}))^2
-4 \ratio{2}{1}\ratio{2}{3}\ratio{4}{3}}
\right)\,,
\\
\rho_{001}
&=
\frac{1}{2}\left(
1+\ratio{2}{4}(\ratio{3}{4}-\ratio{1}{2})
+\sqrt{
(1+\ratio{2}{4}(\ratio{3}{4}-\ratio{1}{2}))^2
-4 \ratio{2}{1}\ratio{2}{4}\ratio{3}{4}}
\right)\,,
\\
\rho_{011}
&=
\frac{1}{2}\left(
1-\ratio{1}{4}\ratio{4}{2}\ratio{4}{3}+
\ratio{2}{4}\ratio{3}{4}
+\sqrt{
(1-\ratio{1}{4}\ratio{4}{2}\ratio{4}{3}+
\ratio{2}{4}\ratio{3}{4})^2
-4 \ratio{2}{4}\ratio{3}{4}}
\right)\,,
\\
\rho_{010}
&=
\frac{1}{2}\left(
1-\ratio{1}{3}\ratio{3}{2}\ratio{3}{4}+
\ratio{2}{3}\ratio{4}{3}
+\sqrt{
(1-\ratio{1}{3}\ratio{3}{2}\ratio{3}{4}+
\ratio{2}{3}\ratio{4}{3})^2
-4 \ratio{2}{3}\ratio{4}{3}}
\right)\,.
\end{align*}

\section{Proofs for hierarchical models with three levels}

\subsection*{Proof of Theorems \ref{thm:factorization_3}, \ref{thm:rates_ordering_3} and \ref{thm:rates_3}}\label{proof:factorization_3}
Theorems \ref{thm:factorization_3}, \ref{thm:rates_ordering_3} and \ref{thm:rates_3} are substantially special cases of the analogous theorems for $k$-levels.
In particular Theorem \ref{thm:factorization_3} is a special case of Theorem \ref{thm:factorization_general} for $k=2$ and Theorems \ref{thm:rates_ordering_3} and \ref{thm:rates_3} can be directly verified as follows.
Using Corollary \ref{coroll:rate_k_levels} we can evaluate the rates of convergence of the three subchains
$
\rho(\delta^{(0)}\bbeta(s))
$, 
$\rho(\delta^{(1)}\bbeta(s))$
and
$\rho(\delta^{(2)}\bbeta(s))$
for the four parametrizations under consideration in Section \ref{sec:multigrid} and check by inspection that 
$$
\rho(\delta^{(0)}\bbeta(s))\geq
\rho(\delta^{(1)}\bbeta(s))\geq
\rho(\delta^{(2)}\bbeta(s))=0
$$
and that the rates of convergence $\rho(\delta^{(0)}\bbeta(s))$ are the ones given by Theorem \ref{thm:rates_3}.
In particular the rates of convergence of $\delta^{(0)}\bbeta(s))$, 
$\delta^{(1)}\bbeta(s)$
and
$\delta^{(2)}\bbeta(s)$
under $GS(\bbeta)$ are given by the following Table.
The inequality $\rho(\delta^{(1)}\bbeta(s))\geq
\rho(\delta^{(2)}\bbeta(s))$ is trivial, while the one $\rho(\delta^{(0)}\bbeta(s))\geq
\rho(\delta^{(1)}\bbeta(s))$ can be checked case by case using the expressions in Figure \ref{table:rates_3levels} and the fact that the ratio of variances lie between 0 and 1.
\begin{figure}[h!]
\centering
\begin{tabular}{ l |c| c | c }
& $\rho(\delta^{(0)}\bbeta(s))$ &  $\rho(\delta^{(1)}\bbeta(s))$ &  $\rho(\delta^{(2)}\bbeta(s))$ 
\\
\hline			
$(\mu,\avec,\bvec)$ & 
$\frac{\tvar{a}}{\tvar{a}+\tvar{e}}\vee\frac{\tvar{b}}{\tvar{b}+\tvar{e}}$ & 
$\frac{\tvar{a}}{\tvar{a}+\tvar{e}}\frac{\tvar{b}}{\tvar{b}+\tvar{e}}$ &
0
\\
\hline			
 $(\mu,\gavec,\etavec)$ &
$1-\frac{\tvar{a}}{\tvar{a}+\tvar{b}}\frac{\tvar{b}}{\tvar{b}+\tvar{e}}$ &
$\frac{\tvar{a}}{\tvar{a}+\tvar{b}}\left(1-\frac{\tvar{b}}{\tvar{b}+\tvar{e}}\right)$ &
0
\\
\hline
$(\mu,\gavec,\bvec)$ & 
$1-\frac{\tvar{a}}{\tvar{a}+\tvar{e}}\frac{\tvar{e}}{\tvar{b}+\tvar{e}}$ &
$\frac{\tvar{a}}{\tvar{a}+\tvar{e}}\left(1-\frac{\tvar{e}}{\tvar{b}+\tvar{e}}\right)$ &
0
\\
\hline
$(\mu,\avec,\etavec)$ &
$\frac{\tvar{a}}{\tvar{a}+\tvar{b}}\vee\frac{\tvar{e}}{\tvar{b}+\tvar{e}}$ &
$\frac{\tvar{a}}{\tvar{a}+\tvar{b}}\frac{\tvar{e}}{\tvar{b}+\tvar{e}}$ &
0
\\
\hline
\end{tabular}\caption{Rates of convergence of $\delta^{(0)}\bbeta(s))$, 
$\delta^{(1)}\bbeta(s)$
and
$\delta^{(2)}\bbeta(s)$ for $GS(\bbeta)$ under various parametrizations.}\label{table:rates_3levels}
\end{figure}
%

\subsection*{Proofs of Corollaries \ref{thm:rate_eq_skeleton}-\ref{coroll:optimal_param_2_levels} and Theorems \ref{thm:bespoke_parametrizations_2}- \ref{thm:rates_3_uneven_CP}}

\begin{proof}[Proof of Corollary \ref{thm:rate_eq_skeleton}]
By Theorem \ref{thm:factorization_3} the rate of convergence of the whole chain $(\bbeta(s))_{s\in \N}$ coincides with the maximum of the rates of the subchains, meaning that 
$\rho(\bbeta(s))$ equals the maximum of 
$\rho(\delta^{(0)}\bbeta(s))$,
$\rho(\delta^{(1)}\bbeta(s))$ and 
$\rho(\delta^{(2)}\bbeta(s))$.
By Theorem \ref{thm:rates_ordering_3} the latter equals $\rho(\delta^{(0)}\bbeta(s))$.
\end{proof}

\begin{proof}[Proof of Corollary \ref{coroll:optimal_param_3levels}]
Follows from Theorem \ref{thm:rates_3} by checking that for both 
$\balpha=\gavec$ and $\balpha=\avec$, the inequality
$\rho_{(\mu,\balpha,\etavec)}\leq \rho_{(\mu,\balpha,\bvec)}$
holds if and only if 
$\tvar{b}\geq\tvar{e}$;
and for both $\balpha=\gavec$ and $\balpha=\avec$ the inequality
$\rho_{(\mu,\gavec,\balpha)}\leq \rho_{(\mu,\avec,\balpha)}$
holds if and only if 
$\tvar{a}\geq\tvar{b}+\tvar{e}$.
\end{proof}

\begin{proof}[Proof of Theorem \ref{thm:bespoke_parametrizations_2}]
The Markov chain under consideration is a Gibbs Sampler sweeping through $(\mu,\beta_1,\dots,\beta_I)|\textbf{y}$ for some observed data $\textbf{y}=((y_{ij})_{j=1}^{J_i})_{i=1}^I$ assuming Model \ref{model:2_nonsymmetric} and $\beta_i=\gamma_i-\lambda_i\mu$ with $\lambda_i\in\{0,1\}$.
To compute the Gibbs Sampler rate of convergence we first need to compute the $(I+1)\times (I+1)$ matrix $A$ indexed by $(\alpha_1,\alpha_2)\in\{\mu,\beta_1,\dots,\beta_I\}\times \{\mu,\beta_1,\dots,\beta_I\}$ and defined as $A_{\alpha_1\alpha_2}=-\frac{Q_{\alpha_1\alpha_2}}{Q_{\alpha_1\alpha_1}}$
for $\alpha_1\neq \alpha_2$ and $0$ for $\alpha_1=\alpha_2$, where $Q$ is the precision matrix of $(\mu,\beta_1,\dots,\beta_I)|\textbf{y}$.
See the discussion of Theorem \ref{thm:gaussian_AR} for more details and references on the derivation of $A$-matrices.
By computing the precision matrix of $(\mu,\beta_1,\dots,\beta_I)|\textbf{y}$ it can be seen that $A$ is given by 
\begin{equation*}
\begin{pmatrix}
0 & A_{\mu\beta_1} & \cdots & A_{\mu\beta_I}\\
A_{\beta_1\mu} & 0 & \cdots &0\\
\vdots & \vdots & \ddots & \vdots\\
A_{\beta_I\mu} & 0 & \cdots & 0 
\end{pmatrix}
\end{equation*}
where for all $i$
$$
A_{\mu\beta_i}=\frac{-\lambda_i\tprec{i}+(1-\lambda_i)\precision{a}}{\sum_{\ell=1}^I\lambda_\ell\tprec{\ell}+(1-\lambda_\ell)\precision{a}}
\,,\quad
A_{\beta_i\mu}=\frac{-\lambda_i\tprec{i}+(1-\lambda_i)\precision{a}}{\tprec{i}+\precision{a}}\,.
$$
From Theorem \ref{thm:gaussian_AR} the rate of convergence of the Gibbs sampler of interest equals the largest modulus eigenvalue of the autoregressive matrix $B=(\I_{I+1}-L)^{-1}U$, where $\I_{I+1}$ is the $(I+1)$-dimensional identity matrix, $L$ is the lower triangular part of $A$ and $U=A-L$.
In this case $B$ is given by
\begin{equation*}
B=
\begin{pmatrix}
0 &A_{\mu\beta_1}&\cdots&A_{\mu\beta_I}
\end{pmatrix}\varotimes
\begin{pmatrix}
1 \\A_{\beta_1\mu}\\\vdots\\A_{\beta_I\mu}
\end{pmatrix}
=
\begin{pmatrix}
0 & A_{\mu\beta_1} & \cdots & A_{\mu\beta_I}\\
0 &  A_{\mu\beta_1}A_{\beta_1\mu} & \cdots & A_{\mu\beta_I}A_{\beta_1\mu}\\
\vdots &  \vdots & \ddots & \vdots\\
0 &  A_{\mu\beta_1}A_{\beta_I\mu}0 & \cdots & A_{\mu\beta_I}A_{\beta_I\mu}\,.
\end{pmatrix}
\end{equation*}
Finally note that $B$ has $I$ eigenvalues equal to 0 and one equal to 
$$
\sum_{i=1}^I A_{\mu\beta_i}A_{\beta_i\mu}
=
\frac{
\sum_{i\,:\,\lambda_i=1}\tprec{i}\frac{\tprec{i}}{\tprec{i}+\precision{a}}
+
\sum_{i\,:\,\lambda_i=0}\precision{a}\frac{\precision{a}}{\tprec{i}+\precision{a}}
}{
\sum_{i\,:\,\lambda_i=1}\tprec{i}+\sum_{i\,:\,\lambda_i=0}\precision{a}
}\,.
$$
\end{proof}

\begin{proof}[Proof of Corollary \ref{coroll:optimal_param_2_levels}]
Starting from \eqref{eq:rate_2_heterogenous} we can see that for any $i\in\{1,\dots,I\}$
\begin{equation}\label{eq:centering_i}
\rho_{\lambda_1\dots \lambda_{i-1}0\lambda_{i+1}\dots \lambda_I}-\rho_{\lambda_1\dots \lambda_{i-1}1\lambda_{i+1}\dots \lambda_I}
=
\frac{\frac{\precision{a}\tprec{i}}{\precision{a}+\tprec{i}}(\precision{a}-\tprec{i})}{(\rho_{-i}+\precision{a})(\rho_{-i}+\tprec{i})}\,,
\end{equation}
where 
$
\rho_{-i}=\sum_{\ell\neq i \,:\,\lambda_\ell=0}\tprec{\ell}\frac{\tprec{\ell}}{\tprec{\ell}+\precision{a}}
+\sum_{\ell\neq i \,:\,\lambda_\ell=1}\tprec{a}\frac{\tprec{a}}{\tprec{\ell}+\precision{a}}\geq 0
$.
Equation \eqref{eq:centering_i} implies that 
$\rho_{\lambda_1\dots \lambda_{i-1}0\lambda_{i+1}\dots \lambda_I}>\rho_{\lambda_1\dots \lambda_{i-1}1\lambda_{i+1}\dots \lambda_I}$ if and only if $\precision{a}>\tprec{i}$, which in turn implies the statement in Corollary \ref{coroll:optimal_param_2_levels}.
\end{proof}

\begin{proof}[Proof of Theorem \ref{thm:rates_3_uneven_CP}]\label{appendix:proof_bound}
Given an instance of Model \ref{model:3_nonsymmetric} with variance terms $(\var{a},(\var{b,i})_i,(\var{e,ij})_{ij})$ satisfying 
\begin{equation}\label{eq:assumption_for_bound}
\ratio{a}{b}^{(i)}\geq\ratio{a}{b}^{(i')}\ratio{e}{b}^{(i')}, \ \ \ \ \hbox{ for every }i,i'\in\{1,\dots,I\}\ ,
\end{equation} 
the proof will proceed by comparing the original Gibbs Sampler with an auxiliary Gibbs Sampler targeting a different instance of Model \ref{model:3_nonsymmetric} with variance terms $(\var{a},(\var{b,i})_i,(\bar{\sigma}_{e,ij}^2)_{ij})$ satisfying \eqref{eq:k_symmetry_appendix} and thus allowing direct analysis using Corollary \ref{coroll:rate_k_levels_CP}.
In the context of Model \ref{model:3_nonsymmetric}, \eqref{eq:k_symmetry_appendix} reduces to requiring $\sum_{j=1}^{J_i}\rho_{\gamma_i \eta_{ij}}^2$ to be constant over $i$, where $\rho_{\gamma_i \eta_{ij}}$ is the partial correlation $Corr(\gamma_i,\eta_{ij}|\mu,(\gamma_\ell)_{\ell\neq i},(\eta_{\ell s})_{(\ell s)\neq (ij)})$ as in Section \ref{sec:symmetry} of the paper.
By computing the partial correlations of Model \ref{model:3_nonsymmetric} it can be checked that $\sum_{j=1}^{J_i}\rho_{\gamma_i \eta_{ij}}^2=r^{(i)}_{a,b}r^{(i)}_{e,b}$, where $r^{(i)}_{a,b}$ and $r^{(i)}_{e,b}$ are defined in Theorem \ref{thm:rates_3_uneven_CP}.
For each $i=1,\dots,I$ we define auxiliary variance terms $(\bar{\sigma}_{e,ij}^2)_{j=1}^{J_i}$ such that $\bar{\sigma}_{e,ij}^2\geq\var{e,ij}$ for all $j=1,\dots,J_{i}$ and 
\begin{equation}\label{eq:proof_bounds}
r^{(i)}_{a,b}\,
\frac{1}{J_i}\sum_{j=1}^{J_i}
\frac{K_{ij}^{-1}\bar{\sigma}_{e,ij}^2}{\var{b,i}+K_{ij}^{-1}\bar{\sigma}_{e,ij}^2}
=
\max_{\ell=1,\dots,I}
r^{(\ell)}_{a,b}r^{(\ell)}_{e,b}\,.
\end{equation}
Such $(\bar{\sigma}_{e,ij}^2)_{j=1}^{J_i}$ exist because $r^{(i)}_{a,b}\geq \max_{\ell=1,\dots,I}r^{(\ell)}_{a,b}r^{(\ell)}_{e,b}$ by \eqref{eq:assumption_for_bound} and the left hand side of \eqref{eq:proof_bounds} can take any value in $(0,r^{(i)}_{a,b}]$ for $(\bar{\sigma}_{e,ij}^2)_{j=1}^{J_i}$ belonging to $[0,\infty)$.
\eqref{eq:proof_bounds} implies that the instance of Model \ref{model:3_nonsymmetric} with variance terms $(\var{a},(\var{b,i})_i,(\bar{\sigma}_{e,ij}^2)_{ij})$ satisfies \eqref{eq:k_symmetry_appendix} 
with $c_0=\sum_{i=1}^I\rho_{\mu\gamma_i}^2=1-\frac{1}{I}\sum_{i=1}^Ir^{(i)}_{a,b}$
and
$c_1=\sum_{j=1}^{J_i}\rho_{\gamma_i\eta_{ij}}^2=\max_{\ell=1,\dots,I}
r^{(\ell)}_{a,b}r^{(\ell)}_{e,b}$.
As discussed in Example \ref{ex:non_symm}, for models with centred parametrization like Model \ref{model:3_nonsymmetric}, \eqref{eq:k_symmetry_appendix} implies \eqref{eq:k_symmetry} and, after rescaling, \eqref{eq:k_symmetry_rescaled}.
In this case the matrix $C=(c_{dp})_{d,p=0}^2$ is given by
\begin{equation*}
C=
\begin{pmatrix}
1 & \sqrt{c_0} & 0\\
\sqrt{c_0} & 1 & \sqrt{c_1}\\
0 & \sqrt{c_1} & 1
\end{pmatrix}\,.
\end{equation*}
Therefore, by Corollary \ref{coroll:rate_k_levels_CP}, the rate of convergence of the Gibbs Sampler targeting the posterior distribution of Model \ref{model:3_nonsymmetric} with variance terms $(\var{a},(\var{b,i})_i,(\bar{\sigma}_{e,ij}^2)_{ij})$ is given by $c_0+c_1$, which is the largest squared eigenvalue of $C-\I_3$, where $\I_3$ is the 3-dimensional identity matrix.

Finally we show that the Gibbs Sampler rate of convergence induced by the auxiliary variance terms $(\var{a},\var{b,i},\bar{\sigma}_{e,ij}^2)$ is greater or equal than the original one given by $(\var{a},\var{b,i},\var{e,ij})$.
Denote by $Q$ the precision matrix of the original posterior distribution and by $\bar{Q}$ the auxiliary one.
By deriving $Q$ and $\bar{Q}$ from the definition of Model \ref{model:3_nonsymmetric}, it is easy to see that the only terms of $Q$ and $\bar{Q}$ affected by replacing $\var{e,ij}$ with $\bar{\sigma}_{e,ij}^2$ are $(Q_{\eta_{ij}\eta_{ij}})_{ij}$ and  $(\bar{Q}_{\eta_{ij}\eta_{ij}})_{ij}$.
Moreover $\bar{\sigma}_{e,ij}^2\geq\var{e,ij}$ implies $Q_{\eta_{ij}\eta_{ij}}=\frac{1}{\var{b,i}}+\frac{K_{ij}}{\var{e,ij}}\geq \frac{1}{\var{b,i}}+\frac{K_{ij}}{\bar{\sigma}_{e,ij}^2}=\bar{Q}_{\eta_{ij}\eta_{ij}}$.
The result then follows from the fact that the convergence rate of a deterministic scan Gibbs Sampler with single-site update is a non-increasing function of the diagonal elements of the target precision matrix (when the off-diagonal terms are kept constant), see Theorem 7 of \cite{RobertsSahu1997}.
\end{proof}

\section{Proofs for crossed models}

\begin{proof}[Proof of Theorem \ref{thm:centering_two_factors}]
The case $(\lambda_1,\lambda_2)=(1,1)$ follows directly from Theorem \ref{thm:multigrid_decomposition}.
Consider an arbitrary $(\lambda_1,\lambda_2)$ in $\{0,1\}^{2}$ and, analogously to \eqref{eq:residuals_crossed}, define
\begin{align*}
\bar{\beta}^{(s)}=\frac{1}{n_s}\sum_{i=1}^{n_s}\beta^{(s)}_{i}
\quad\hbox{and}\quad
\delta\bbeta^{(s)}=(\bbeta^{(s)}-\bar{\beta}^{(s)})\,,
\end{align*}
for $s\in\{1,2\}$.
It can be checked that the first part of Theorem \ref{thm:multigrid_decomposition} extends to any $(\lambda_1,\lambda_2)\in\{0,1\}^{2}$, meaning that
$\left((\mu,\mbeta^{(1)},\mbeta^{(2)})(t)\right)_{t=1}^\infty$, $\left(\delta\bbeta^{(1)}(t)\right)_{t=1}^\infty$ and $\left(\delta\bbeta^{(2)}(t)\right)_{t=1}^\infty$ are three independent Markov chains.
Also, $\left(\delta\bbeta^{(1)}(t)\right)_{t=1}^\infty$ and $\left(\delta\bbeta^{(2)}(t)\right)_{t=1}^\infty$ perform i.i.d.\ sampling from $\mathcal{L}(\delta\bbeta^{(1)}|\by)$ and $\mathcal{L}(\delta\bbeta^{(2)}|\by)$ respectively and thus have rate of convergence equal to 0. 
It follows that the rate of convergence of the original Gibbs Sampler $\left((\mu,\bbeta^{(1)},\bbeta^{(2)})(t)\right)_{t=1}^\infty$ coicides with the one of the (low dimensional) three component Gibbs sampler targeting $\mathcal{L}(\mu,\bar{\beta}^{(1)},\bar{\beta}^{(2)}|\by)$.
Denote by $A_{\lambda_1\lambda_2}$ the $A$-matrix of the Gibbs Sampler targeting $\mathcal{L}(\mu,\bar{\beta}^{(1)},\bar{\beta}^{(2)}|\by)$, and by $L_{\lambda_1\lambda_2}$ and $U_{\lambda_1\lambda_2}$ its lower and upper triangular parts.
Then it holds
\begin{equation*}
A_{01}=
\begin{pmatrix}
0 & 0  & 1\\
-r_1 & 0  & 0\\
1-r_{2} &  -r_{2} & 0
\end{pmatrix}
\,.
\end{equation*}
and it can be obtained with simple calculations that the 3 eigenvalues of 
$(I-L_{01})^{-1}U_{01}$ are $\{0,0,1-r_2(1-r_1)\}$.
It follows by Theorem \ref{thm:gaussian_AR} that
$\rho_{01}=1-r_1(1-r_2)$
and, by symmetry, 
$\rho_{10}=1-r_2(1-r_1)$.
Consider now $(\lambda_1,\lambda_2)=(0,0)$.
We have
\begin{equation*}
A_{00}=
\left(
\begin{array}{ccc}
 0 & \frac{n_1\tau_1+N\tau_e}{n_1\tau_1+n_2\tau_2+N\tau_e} & \frac{n_2\tau_2+N\tau_e}{n_1\tau_1+n_2\tau_2+N\tau_e} \\
 1 & 0 & -\frac{N\tau_e}{n_1\tau_1+N\tau_e} \\
 1 & -\frac{N\tau_e}{n_2\tau_2+N\tau_e} & 0
\end{array}
\right)
=
\left(
\begin{array}{ccc}
 0 &  \frac{r_2}{r_1+r_2-r_1r_2} &  \frac{r_1}{r_1+r_2-r_1r_2} \\
 1 & 0 & -r_1 \\
 1 & -r_2 & 0
\end{array}
\right)\,.
\end{equation*}
The matrix $(I-L_{00})^{-1}U_{00}$ has one zero eigenvalue and two eigenvalues given by
$\frac{1}{2} \left(1+r_1 r_2+q\right)$
and
$\frac{1}{2} \left(1+r_1 r_2-q\right)$
where
$$
q=\sqrt{(1+r_1 r_2)^2-\frac{4 r_1 r_2}{r_1+r_2-r_1r_2}}\,.
$$
It follows, again by Theorem \ref{thm:gaussian_AR}, that $\rho_{00}=\frac{1}{2} \left(1+r_1 r_2+q\right)$. 
To conclude the proof we now check that the inequality $\rho_{00}\geq 1+r_1r_2-\min\{r_1,r_2\}$ holds for all $0< r_1,r_2<1$.
The latter is equivalent to 
$$
q\geq1+r_1r_2-2\min\{r_1,r_2\}\,.
$$
Squaring both sides, rearranging and dividing by 4 one obtains
$$
-\frac{r_1 r_2}{r_1+r_2-r_1r_2}
-\min\{r_1,r_2\}^2
+
(1+r_1r_2)\min\{r_1,r_2\}
\geq
0\,.
$$
Dividing by $\min\{r_1,r_2\}$ and using $\frac{r_1r_2}{\min\{r_1,r_2\}}=\max\{r_1,r_2\}$
$$
-\frac{\max\{r_1,r_2\}}{r_1+r_2-r_1r_2}
-\min\{r_1,r_2\}
+
1+r_1r_2
\geq
0\,.
$$
The left-hand side equals 
\begin{align*}
&\frac{r_1+r_2-r_1r_2-\max\{r_1,r_2\}}{r_1+r_2-r_1r_2}
-\min\{r_1,r_2\}
+r_1r_2
\\&=
\frac{\min\{r_1,r_2\}-r_1r_2}{r_1+r_2-r_1r_2}
-(\min\{r_1,r_2\}
-r_1r_2)
\\&=
(\min\{r_1,r_2\}
-r_1r_2)\left(\frac{1}{r_1+(1-r_1)r_2}
-1\right)\,,
\end{align*}
which can be easily seen to be non-negative as both terms in the product are non-negative due to $0< r_1,r_2<1$.
\end{proof}

To prove Theorem \ref{thm:rates_identifiable_general} we first need the following lemma.
Consider the linear mapping from $(\mu,\ba)$ to $(\mu,\ma,\tilde{\delta}\ba)$ defined as
$$
\ma=(\ma^{(1)},\dots,\ma^{(k)})
\quad \hbox{and}\quad
\tilde{\delta}\ba=(\tilde{\delta}\ba^{(1)},\dots,\tilde{\delta}\ba^{(k)})
$$
where
$\ma^{(s)}=\frac{1}{n_s}\sum_{i=1}^{n_s}a^{(s)}_{i}$
and 
$\tilde{\delta}\ba^{(s)}=(\tilde{\delta} a^{(s)}_{i})_{i=1}^{n_s}$ for $s=1,\dots,k$ with
$$
\tilde{\delta} a^{(s)}_{i}=a^{(s)}_{i}-v^{(s)}_{i}\ma^{(s)}
\quad \hbox{and}\quad
v^{(s)}_{i}=
\frac{\|w^{(s)}\|^2-w_{i}}{\|w^{(s)}\|^2-n_s^{-1}}\,,
$$
if $\|w^{(s)}\|^2>n_s^{-1}$ and $v^{(s)}_{i}=n_s^{-1}$ if $\|w^{(s)}\|^2=n_s^{-1}$.
\begin{lemma}\label{lemma:orthogonal_identifiable}
Denote the $k$ linear constraints $c_1=\dots=c_k=0$ by $c=0$ for brevity. Then $(\mu,\ma)$ and $\tilde{\delta}\ba$ are conditionally independent given $\by$ and $c=0$, meaning that
\begin{align}\label{eq:orthogonal_identifiable}
\mathcal{L}(\mu,\ma,\tilde{\delta}\ba|\by,c=0)=&\mathcal{L}(\mu,\ma|\by,c=0)\times
\mathcal{L}(\tilde{\delta}\ba|\by,c=0)\,.
\end{align}
Moreover $\tilde{\delta}\ba^{(1)}$, \dots, $\tilde{\delta}\ba^{(k)}$ are conditionally independent given $\by$ and $c=0$, meaning that 
\begin{align}\label{eq:orthogonal_identifiable_2}
\mathcal{L}(\tilde{\delta}\ba|\by,c=0)=&\mathcal{L}(\tilde{\delta}\ba^{(1)}|\by,c=0)\times\dots\times
\mathcal{L}(\tilde{\delta}\ba^{(k)}|\by,c=0)\,.
\end{align}
\end{lemma}
\begin{proof}[Proof of Lemma \ref{lemma:orthogonal_identifiable}]
Since the joint distribution $\mathcal{L}(\mu,\ma,\tilde{\delta}\ba|\by,c=0)$ is multivariate Gaussian it is sufficient to check that the conditional expectations
$\E[\mu|\by,\ma,\tilde{\delta}\ba,c=0]$ and 
$\E[\ma^{(s)}|\by,\mu,\ma^{(-s)},\tilde{\delta}\ba,c=0]$ do not depend on $\tilde{\delta}\ba$ and similarly that 
$\E[\tilde{\delta}\ba^{(s)}|\by,\mu,\ma,\tilde{\delta}\ba^{(-s)},c=0]$
does not depend on $\mu$, $\ma$ and $\tilde{\delta}\ba^{(-s)}$ in order to deduce \eqref{eq:orthogonal_identifiable}.
With standard (but tedious) calculations
 for conditioned multivariate Gaussian distributions one can explicitly compute
\begin{align}
&\E[\tilde{\delta}a^{(s)}_i|\by,\mu,\ma,\tilde{\delta}\ba^{(-s)},c=0]=0\\
&\E[\mu|\by,\ma,\tilde{\delta}\ba,c=0]=\E[\mu|\by,\ma,\tilde{\delta}\ba]=\bar{y}-\sum_{s=1}^k\ma^{(s)}\\
&\E[\ma^{(s)}|\by,\mu,\ma^{(-s)},\tilde{\delta}\ba,c=0]=\tilde{r}_s(\bar{y}-\mu-\sum_{\ell\neq s}\ma^{(\ell)})\,,\label{eq:cond_exp_ma}
\end{align}
where $\tilde{r}_s$ is defined as
$$
\tilde{r}_s=
\frac{N\tau_e}{N\tau_e+n_s\tau_{s}}
\frac{n_s\|w^{(s)}\|^2-1}{n_s\|w^{(s)}\|^2}
\,,$$
with $\|w^{(s)}\|^2=\sum_{i=1}^{n_s}(w_i^{(s)})^2/((\sum_{i=1}^{n_s}w_i^{(s)})^2)$.
\end{proof}

\begin{proof}[Proof of Theorem \ref{thm:rates_identifiable_general}]
Using Lemma \ref{lemma:orthogonal_identifiable} and arguing as in the proof of Theorem \ref{thm:factorization_general} we can deduce that $\big((\mu,\ma)(t)\big)_{t=1}^\infty$, $\big(\tilde{\delta}\ba^{(1)}(t)\big)_{t=1}^\infty$, \dots, $\big(\tilde{\delta}^{(k)}\ba(t)\big)_{t=1}^\infty$ are $(k+1)$ Markov chains and they evolve independently.
Thus the rate of convergence of of the original Markov chain $\left((\mu,\ba)(t)\right)_{t=1}^\infty$ coincides with the supremum of the rate of convergence of these $(k+1)$ Markov chains. 
Since $\big(\tilde{\delta}^{(s)}\ba(t)\big)_{t=1}^\infty$ performs i.i.d.\ sampling for all $s=1,\dots,k$, its rate of convergence is 0 and thus the rate of convergence of $\left((\mu,\ba)(t)\right)_{t=1}^\infty$ coincides with the one of $\big((\mu,\ma)(t)\big)_{t=1}^\infty$.
The latter is a $(k+1)$ components Gaussian Gibbs Sampler with one dimensional components.
Using \eqref{eq:cond_exp_ma} and arguing as in the proof of Proposition 3 of \cite{papaspiliopoulos2018scalable} one can show that the $B$ matrix (as denoted in Theorem \ref{thm:gaussian_AR}) of such Gibbs Sampler scheme is 
\begin{equation*}
B=
\left(
\begin{array}{c|ccc}
0 & -1 & \dots & -1\\\hline
0 &  &  & \\
\vdots &  & L &\\
0  &  &  & 
\end{array}
\right)
\end{equation*}
where $L$ is a $K\times K$ lower triangular matrix with diagonal elements equal to $(\tilde{r}_1,\dots,\tilde{r}_K)$, and that the resulting rate of convergence of $(\mu(t),\ma(t))_t$ is $\max_{s\in\{1,\dots,k\}}\tilde{r}_s$, which is the desired thesis.
\end{proof}

\section{Full conditionals and additional material on the simulations}
\subsection{Full conditional distributions of GS(1,1) and GS(0,0)}\label{appendix:full_conditionals}
This section reports the full conditional distributions involved in Samplers GS(1,1) and GS(0,0) described in Section 2 of the paper. 
\begin{taggedsampler}{GS($1,1$)}
Initialize $\mu(0)$, $\avec(0)$ and $\bvec(0)$ and then iterate
\begin{align*}
\mu(s+1)
&\sim N\left(\ymean-\amean(s)-\bmean(s),\frac{\var{e}}{n}\right)\\
a_i(s+1)
&\sim N\left(\frac{\var{a}(\yimean-\mu(s+1)-\bimean(s))}{\var{a}+\frac{\var{e}}{JK}},\frac{\var{a}\frac{\var{e}}{JK}}{\var{a}+\frac{\var{e}}{JK}}\right)
&\hbox{ for all } i
\\
b_{ij}(s+1)
&\sim N\left(\frac{\var{b}(\yijmean-\mu(s+1)-a_i(s+1))}{\var{b}+\frac{\var{e}}{K}},\frac{\var{b}\frac{\var{e}}{K}}{\var{b}+ \frac{\var{e}}{K} }\right)
&\hbox{ for all } i,j
\end{align*}
where we use the dot subscript to indicate averaging over one dimension, meaning that
$\amean=\sum_{i}\frac{a_i}{I}$,
$\bmean=\sum_{i,j}\frac{b_{ij}}{IJ}$,
$\ymean=\sum_{i,j,k}\frac{y_{ijk}}{IJK}$,
$\bimean=\sum_{j}\frac{b_{ij}}{J}$,
$\yimean=\sum_{j,k}\frac{y_{ijk}}{JK}$
and
$\yijmean=\sum_{k}\frac{y_{ijk}}{K}$.
\end{taggedsampler}
\begin{taggedsampler}{GS($0,0$)}
Initialize $\mu(0)$, $\gavec(0)$ and $\etavec(0)$ and then iterate
\begin{align*}
\mu(s+1)
&\sim N\left(\gamean(s),\frac{\var{a}}{I}\right)
&\\
\gamma_i(s+1)
&\sim N\left(\frac{\frac{\var{b}}{J}\mu(s+1)+\var{a}\etai(s)}{\frac{\var{b}}{J}+\var{a}},\frac{\var{a}\frac{\var{b}}{J}}{\var{a}+\frac{\var{b}}{J}}\right)
&\forall\, i\\
\eta_{ij}(s+1)
&\sim N\left(\frac{\frac{\var{e}}{K}\gamma_i(s+1)+\var{b}\yijmean}{\var{b}+\frac{\var{e}}{K}},\frac{\var{b}\frac{\var{e}}{K}}{\var{b}+\frac{\var{e}}{K}}\right)
&\forall\, i,j
\end{align*}
where as before the dot subscript indicates averaging over indices.
\end{taggedsampler}

\subsection{Traceplots and autocorrelation function for the simulations in Section \ref{sec:gamma_poisson} of the paper}
Figures \ref{fig:GP_trace1}-\ref{fig:GP_trace3} provide traceplots and autocorrelation functions for $\log(\mu)$, $\log(a^{(1)}_2)$ and $\log(a^{(2)}_2)$ for the Gibbs Samplers considered in Table 2 of the paper (NB: we consider $a^{(1)}_2$ and $a^{(2)}_2$ rather than $a^{(1)}_1$ and $a^{(2)}_1$ because the latter are constrained to be equal to 1 in one of the sampler implementation).
\begin{figure}[ht]
\centering
\includegraphics[width=\linewidth]{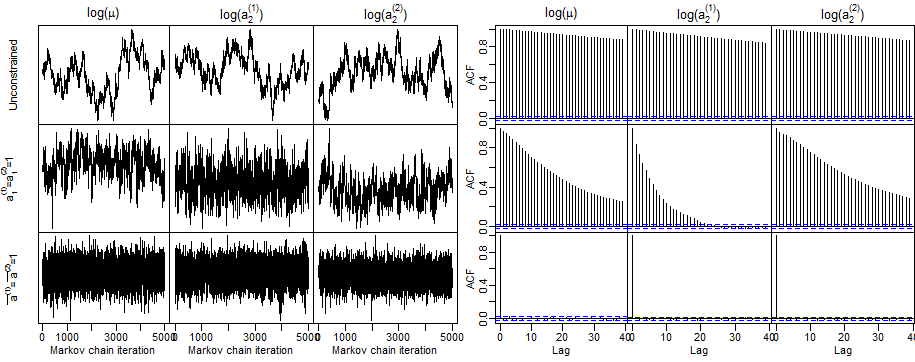}
\caption{Traceplots and autocorrelation functions for the case $n_1=5$ and $n_2=5$ reported in Table 2 of the paper.
First row: standard unconstrained Gibbs Sampler. 
Second row: constrained version with $a^{(1)}_1=a^{(2)}_1=1$. 
Third row: constrained version with $\ma^{(1)}=\ma^{(2)}=1$.}
\label{fig:GP_trace1}
\end{figure}
\begin{figure}[h!]
\centering
\includegraphics[width=\linewidth]{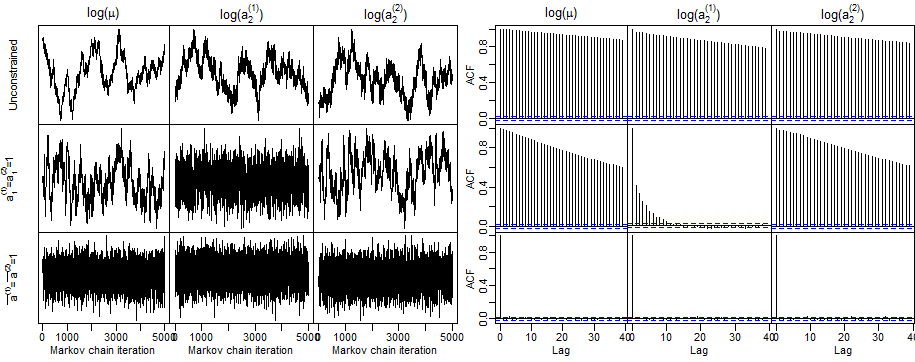}
\caption{Traceplots and autocorrelation functions for the case $n_1=5$ and $n_2=100$ reported in Table 2 of the paper.
First row: standard unconstrained Gibbs Sampler. 
Second row: constrained version with $a^{(1)}_1=a^{(2)}_1=1$. 
Third row: constrained version with $\ma^{(1)}=\ma^{(2)}=1$.}
\label{fig:GP_trace2}
\end{figure}
\begin{figure}[h!]
\centering
\includegraphics[width=\linewidth]{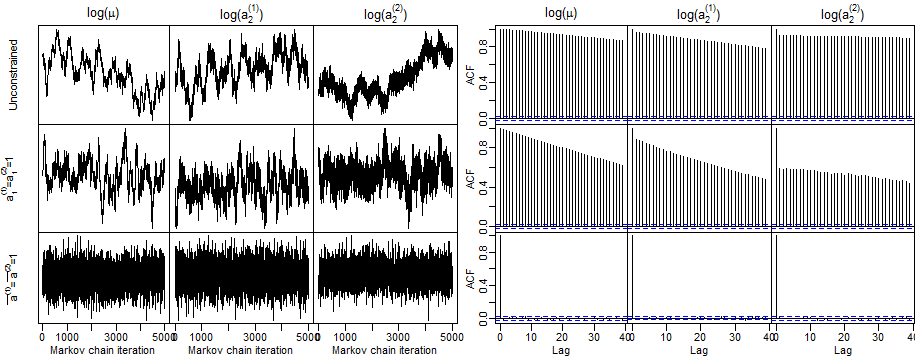}
\caption{Traceplots and autocorrelation functions for the case $n_1=100$ and $n_2=100$ reported in Table 2 of the paper.
First row: standard unconstrained Gibbs Sampler. 
Second row: constrained version with $a^{(1)}_1=a^{(2)}_1=1$. 
Third row: constrained version with $\ma^{(1)}=\ma^{(2)}=1$.}
\label{fig:GP_trace3}
\end{figure}
Similarly, Figures \ref{fig:GP_trace_HMC} and \ref{fig:GP_trace_NUTS} provide traceplots and autocorrelation functions for $\log(\mu)$, $\log(a^{(1)}_2)$ and $\log(a^{(2)}_2)$ for the HMC and NUTS algorithms considered in Table 3 of the paper.
See Table 3 of the paper for corresponding runtimes and effective sample sizes.
\begin{figure}[h!]
\centering
\includegraphics[width=\linewidth]{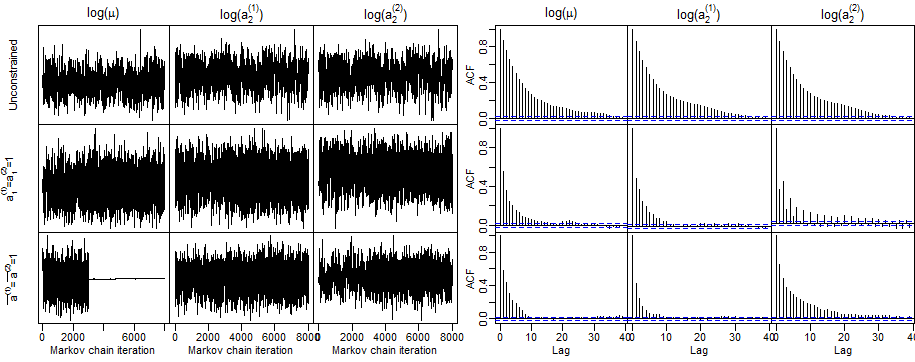}
\caption{Traceplots and autocorrelation functions for the HMC algorithm for the case $n_1=100$ and $n_2=100$ reported in Table 3 of the paper.
First row: unconstrained version. 
Second row: constrained version with $a^{(1)}_1=a^{(2)}_1=1$. 
Third row: constrained version with $\ma^{(1)}=\ma^{(2)}=1$.
Note that for HMC the runtime varies significantly across rows (see Table 3 of paper).
}
\label{fig:GP_trace_HMC}
\end{figure}
\begin{figure}[h!]
\centering
\includegraphics[width=\linewidth]{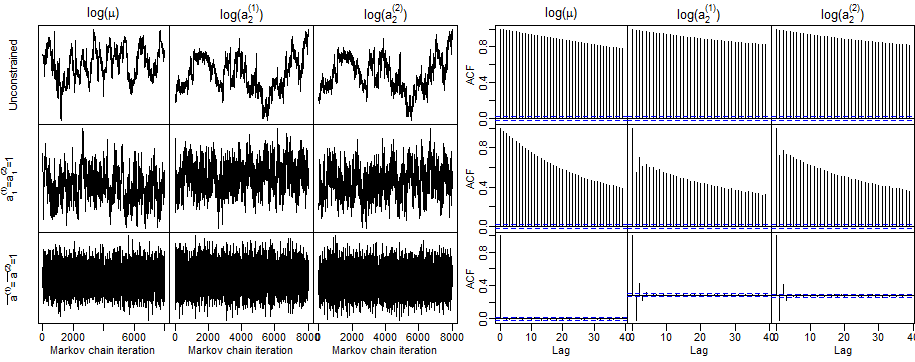}
\caption{Traceplots and autocorrelation functions for the NUTS algorithm for the case $n_1=100$ and $n_2=100$ reported in Table 3 of the paper.
First row: unconstrained version. 
Second row: constrained version with $a^{(1)}_1=a^{(2)}_1=1$. 
Third row: constrained version with $\ma^{(1)}=\ma^{(2)}=1$.}
\label{fig:GP_trace_NUTS}
\end{figure}

\subsection{R code for the simulations in Section \ref{sec:gamma_poisson} of the paper}
In this section we provide the R and Stan code used to perform the simulations reported in Table 2 and 3 of the paper.
First we provide the R code defining the functions for the three Gibbs Samplers implemented in Table 2 of the paper.
\definecolor{dkgreen}{rgb}{0,0.6,0}
\lstset{frame=tb,
  language=R,
  aboveskip=3mm,
  belowskip=3mm,
  showstringspaces=false,
  columns=flexible,
  basicstyle={\small\ttfamily},
  numbers=none,
  numberstyle=\tiny\color{gray},
  keywordstyle=\color{blue},
  commentstyle=\color{dkgreen},
  stringstyle=\color{gray},
  breaklines=true,
  breakatwhitespace=true,
  tabsize=3
}
\begin{lstlisting}[language=R]
## R CODE FOR THE GIBBS SAMPLERS IMPLEMENTED IN TABLE 2
Gibbs_CrossedPoisson_1<-function(y,num_iterations,n1,n2){
  ## Gibbs Sampler for crossed random effect model with Poisson likelihood and Gamma prior
  ## Version 1: unconstrained
  y_bar<-mean(y)
  samples<-matrix(NA,nrow = num_iterations,ncol = 1+n1+n2) # initialize empty matrix for posterior samples
  mu<-1;a<-rep(1,n1);b<-rep(1,n2) # set starting parameters
  for (t in 1:num_iterations){# Gibbs iterations
    mu<-rgamma(1,shape = alpha_mu+y_bar,rate = beta_mu+mean(a)*mean(b))
    a<-rgamma(n1,shape = alpha_a+rowMeans(y),rate =beta_a+ mu*mean(b))
    b<-rgamma(n2,shape = alpha_b+colMeans(y),rate =beta_b+ mu*mean(a))
    samples[t,]<-c(mu,a,b)
  }
  return(samples)
}
Gibbs_CrossedPoisson_2<-function(y,num_iterations,n1,n2){
  ## Gibbs Sampler for crossed random effect model with Poisson likelihood and Gamma prior
  ## Version 2: conditioning on a[1]=b[1]=1
  y_bar<-mean(y)
  samples<-matrix(NA,nrow = num_iterations,ncol = 1+n1+n2)# initialize empty matrix for posterior samples
  mu<-1;a<-rep(1,n1);b<-rep(1,n2) # set starting parameters
  for (t in 1:num_iterations){# Gibbs iterations
    mu<-rgamma(1,shape = alpha_mu+y_bar,rate = beta_mu+mean(a)*mean(b))
    a[-1]<-rgamma(n1,shape = alpha_a+rowMeans(y)[-1],rate =beta_a+ mu*mean(b))
    b[-1]<-rgamma(n2,shape = alpha_b+colMeans(y)[-1],rate =beta_b+ mu*mean(a))
    samples[t,]<-c(mu,a,b)
  } 
  return(samples)
}
Gibbs_CrossedPoisson_3<-function(y,num_iterations,n1,n2){
  ## Gibbs Sampler for crossed random effect model with Poisson likelihood and Gamma prior
  ## Version 3: conditioning on mean(a)=mean(b)=1
  y_bar<-mean(y)
  samples<-matrix(NA,nrow = num_iterations,ncol = 1+n1+n2)# initialize empty matrix for posterior samples
  mu<-1;a<-rep(1,n1);b<-rep(1,n2) # set starting parameters
  for (t in 1:num_iterations){# Gibbs iterations
    mu<-rgamma(1,shape = alpha_mu+y_bar,rate = beta_mu+mean(a)*mean(b))
    a<-rgamma(n1,shape = alpha_a+rowMeans(y),rate =beta_a+ mu*mean(b))
    a<-a/mean(a)
    b<-rgamma(n2,shape = alpha_b+colMeans(y),rate =beta_b+ mu*mean(a))
    b<-b/mean(b)
    samples[t,]<-c(mu,a,b)
  }
  return(samples)
}
\end{lstlisting}
Second we provide the Stan code defining the models used for the HMC and NUTS simulations in Table 3 of the paper.
\begin{lstlisting}[language=Stan]
## STAN CODE DEFINING THE MODELS USED FOR THE SIMULATIONS IN TABLE 3 
/*
Crossed random effect model with Poisson likelihood and Gamma prior
Version 1: unconstrained
*/
data {
  int<lower=0> N;
  int<lower=0> n1;
  int<lower=0> n2;
  int<lower=1,upper=n1> blk1[N];
  int<lower=1,upper=n2> blk2[N];
  int<lower=0> y[N];
}
parameters {
  vector<lower=0>[n1] a;
  vector<lower=0>[n2] b;
  real<lower=0> mu;
}
transformed parameters {
  vector[N] lambda;
  real<lower=0> abar;
  real<lower=0> bbar;
  for (n in 1:N){
    lambda[n] = mu * a[blk1[n]]* b[blk2[n]];
  }
  abar = mean(a);
  bbar = mean(b);
}
model {
  a ~ gamma(2, 0.1);
  b ~ gamma(2, 0.1);
  mu ~ gamma(2, 0.1);
  y ~ poisson(lambda);
}
/*
Crossed random effect model with Poisson likelihood and Gamma prior
Version 2: conditioning on a[1]=b[1]=1
*/
data {
  int<lower=0> N;
  int<lower=0> n1;
  int<lower=0> n2;
  int<lower=1,upper=n1> blk1[N];
  int<lower=1,upper=n2> blk2[N];
  int<lower=0> y[N];
}
parameters {
  vector<lower=0>[n1-1] a;
  vector<lower=0>[n2-1] b;
  real<lower=0> mu;
}
transformed parameters {
  vector[N] lambda;
  vector<lower=0>[n1] a_all;
  vector<lower=0>[n2] b_all;
  real<lower=0> abar;
  real<lower=0> bbar;
  a_all[1]=1;
  for (i in 2:n1){
    a_all[i]=a[i-1];
  }
  bb[1]=1;
  for (j in 2:n2){
    bb[j]=b[j-1];
  }
  for (n in 1:N){
      lambda[n] = mu * a_all[blk1[n]]* b_all[blk2[n]];
  }
  abar = mean(a);
  bbar = mean(b);
}
model {
  a ~ gamma(2, 0.1);
  b ~ gamma(2, 0.1);
  mu ~ gamma(2, 0.1);
  y ~ poisson(lambda);
}
/*
Crossed random effect model with Poisson likelihood and Gamma prior
Version 3: conditioning on mean(a)=mean(b)=1
*/
data {
  int<lower=0> N;
  int<lower=0> n1;
  int<lower=0> n2;
  int<lower=1,upper=n1> blk1[N];
  int<lower=1,upper=n2> blk2[N];
  int<lower=0> y[N];
  vector[n1] alpha_a;
  vector[n2] alpha_b;
}
parameters {
  simplex[n1] a_norm;
  simplex[n2] b_norm;
  real<lower=0> mu;
}
transformed parameters {
  vector<lower=0>[n1] a;
  vector<lower=0>[n2] b;
  vector[N] lambda;
  real<lower=0> abar;
  real<lower=0> bbar;
  a = a_norm * n1;
  b = b_norm * n2;
  for (n in 1:N){
   lambda[n] = mu * a[blk1[n]]* b[blk2[n]]; 
  }
  abar = mean(a);
  bbar = mean(b);
}
model {
  a_norm ~ dirichlet(alpha_a);
  b_norm ~ dirichlet(alpha_b);
  mu ~ gamma(2, 0.1);
  y ~ poisson(lambda);
}
\end{lstlisting}

\end{document}